\documentclass[12pt,a4paper]{amsart}

\usepackage{amsmath,amsthm,amsfonts,amssymb,srcltx,empheq}
\usepackage[dvipdfmx]{graphicx}
\usepackage[all]{xy}
\usepackage{color}
\usepackage{longtable}
\usepackage{hhline}
\usepackage{url}

\usepackage{here}

\usepackage{caption}
\captionsetup[table]{font=normalsize}

\def\IZ{\mathbb {Z}}

\def\IR{\mathbb {R}}
\def\IC{\mathbb {C}}

\def\Sym{\operatorname{Sym}}

\def\bu{\boldsymbol{u}}
\def\bv{\boldsymbol{v}}
\def\bm{\boldsymbol{m}}
\def\bd{\boldsymbol{d}}
\def\bk{\boldsymbol{k}}
\def\bL{\boldsymbol{L}}
\def\bI{\boldsymbol{I}}
\def\bS{\boldsymbol{S}}
\def\bx{\boldsymbol{x}}
\def\by{\boldsymbol{y}}

\def\x1{x^{(1)}}
\def\x2{x^{(2)}}
\def\xM1{x^{(M-1)}}
\def\xM{x^{(M)}}
\def\y1{y^{(1)}}
\def\y2{y^{(2)}}
\def\yM1{y^{(M-1)}}
\def\yM{y^{(M)}}

\def\ll {\left\lgroup}
\def\rr{\right\rgroup}

\theoremstyle{plain}
  \newtheorem{prob}{Problem}[section]

  \newtheorem{prop}[prob]{Proposition}
   
    \newtheorem{lemm}[prob]{Lemma}
    \newtheorem{defi}[prob]{Definition}
\theoremstyle{remark}
  \newtheorem{remark}[prob]{\bf Remark}

\newtheorem{exam}[prob]{\bf Example}

\setlength{\parindent}{0cm}
\parskip=4pt

\usepackage[scale=0.82]{geometry}

\makeatletter
\def\Left#1#2\Right{\begingroup%
   \def\ts@r{\nulldelimiterspace=0pt \mathsurround=0pt}%
   \let\@hat=#1%
   \def\sht@im{#2}%
   \def\@t{{\mathchoice{\def\@fen{\displaystyle}\k@fel}%
          {\def\@fen{\textstyle}\k@fel}%
          {\def\@fen{\scriptstyle}\k@fel}%
          {\def\@fen{\scriptscriptstyle}\k@fel}}}%
   \def\g@rin{\ts@r\left\@hat\vphantom{\sht@im}\right.}%
   \def\k@fel{\setbox0=\hbox{$\@fen\g@rin$}\hbox{%
      $\@fen \kern.3875\wd0 \copy0 \kern-.3875\wd0%
      \llap{\copy0}\kern.3875\wd0$}}%
      \def\pt@h{\mathopen\@t}\pt@h\sht@im%
      \Right}%
\def\Right#1{\let\@hat=#1%
   \def\st@m{\mathclose\@t}%
   \st@m\endgroup}
\makeatother

\makeatletter
 \renewcommand{\theequation}{%
       \thesection.\arabic{equation}}
 \@addtoreset{equation}{section}
\makeatother

\makeatletter
\def\eqnarray{%
 \stepcounter{equation}%
 \let\@currentlabel=\theequation
 \global\@eqnswtrue
 \global\@eqcnt\z@
 \tabskip\@centering
 \let\\=\@eqncr
 $$\halign to \displaywidth\bgroup\@eqnsel\hskip\@centering
 $\displaystyle\tabskip\z@{##}$&\global\@eqcnt\@ne
 \hfil$\displaystyle{{}##{}}$\hfil
 &\global\@eqcnt\tw@$\displaystyle\tabskip\z@{##}$\hfil
 \tabskip\@centering&\llap{##}\tabskip\z@\cr}
\makeatother

\begin{document}

\title[]{Nested coordinate Bethe wavefunctions 
\\
from the Bethe/Gauge correspondence 
}

\author[]{Omar Foda and Masahide Manabe} 

\address{
School of Mathematics and Statistics, 
University of Melbourne, 
Royal Parade, Parkville, Victoria 3010, Australia
} 

\email{
omar.foda@unimelb.edu.au,
masahidemanabe@gmail.com} 

\keywords{
The Bethe/Gauge correspondence. 
2-dimensional gauged linear sigma models. 
The nested coordinate Bethe wavefunctions. 
XXX spin-chains.
} 

\dedicatory{To Professor Tetsuji Miwa on his 70th birthday}

\begin{abstract}
In \cite{Nekrasovtalk:1, Nekrasovtalk:2}, Nekrasov applied 
the Bethe/Gauge correspondence to derive the $\mathfrak{su}\, (2)$ 
XXX spin-chain coordinate Bethe wavefunction from the IR limit of 
a 2D $\mathcal{N}=(2, 2)$ supersymmetric $A_1$ quiver gauge theory 
with an orbifold-type codimension-2 defect. 
Later, Bullimore, Kim and Lukowski implemented Nekrasov's construction 
at the level of the UV $A_1$ quiver gauge theory, recovered his result, 
and obtained further 
extensions of the Bethe/Gauge correspondence \cite{Bullimore:2017lwu}. 
In this work, we extend the construction of the defect to $A_M$ quiver 
gauge theories to obtain the $\mathfrak{su} \, ( M + 1 )$ XXX spin-chain 
nested coordinate Bethe wavefunctions. The extension to XXZ spin-chain 
is straightforward. 
Further, we apply a Higgsing procedure to obtain more general $A_M$ 
quivers and the corresponding wavefunctions, and interpret this procedure  
(and the Hanany-Witten moves that it involves)
on the spin-chain side in terms of Izergin-Korepin-type specializations 
(and re-assignments) of the parameters of the coordinate Bethe wavefunctions.
\end{abstract} 

\maketitle

\section{Introduction}

In \cite{Nekrasov:2009uh,Nekrasov:2009ui}, Nekrasov and Shatashvili proposed 
the Bethe/Gauge correspondence between the on-shell Bethe eigenstates of XXX 
(resp. XXZ) spin-chain Hamiltonians and the vacuum states of 
2D $\mathcal{N}=(2,2)$ supersymmetric 
(resp. 3D $\mathcal{N}=2$) quiver gauge theories
\footnote{\, 
The on-shell Bethe states (the eigenstates of the spin-chain Hamiltonian) are such that 
the rapidity variables satisfy the Bethe equations, and the positions of 
the spin variables are summed over.
The off-shell Bethe states 
are such that 
the rapidity variables do not satisfy any conditions, and the positions of 
the spin variables are summed over \cite{gaudin.book, essler.book}.
}.

In \cite{Nekrasovtalk:1,Nekrasovtalk:2}, Nekrasov introduced 
an orbifold-type codimension-2 
defect in the IR limit of an $A_1$ quiver gauge theory on the gauge side of 
the correspondence,
and obtained the coordinate Bethe wavefunction in the $\mathfrak{su} \, (2)$ 
XXX spin-$\frac{1}{2}$ chain 
on the Bethe side \cite{gaudin.book, essler.book}
\footnote{\, 
We consider only spin-chains with periodic (possibly twisted) boundary 
conditions. The rapidity variables of the coordinate Bethe wavefunctions can 
be free, or can be set to satisfy the Bethe equations to enforce periodicity 
in the case of periodic boundary conditions. 
By definition, the positions of 
the spin variables in a coordinate Bethe wavefunction are fixed.
}. 

In \cite{Bullimore:2017lwu}, Bullimore, Kim and Lukowski implemented Nekrasov's
defect construction at the level of the UV A-twisted $\mathcal{N}=(2, 2)$ 
supersymmetric $A_1$ quiver gauge theories on $S^{\, 2}$, and used 
the localization formulae of \cite{Closset:2015rna,Benini:2015noa} to recover 
Nekrasov's result, amongst other results that further extend the Bethe/Gauge 
correspondence.

In this work, we extend the construction of \cite{Bullimore:2017lwu} to $A_M$ 
quiver gauge theories with orbifold-type codimension-2 defects to 
obtain $\mathfrak{su}(M+1)$ 
XXX spin-chain nested coordinate Bethe wavefunctions, with spin states in the fundamental 
representation.

Further, we apply a Higgsing procedure 
to obtain more general $A_M$ quiver gauge theories with 
codimension-2 defects, and their corresponding nested coordinate Bethe 
wavefunctions, and interpret these generalizations on the spin-chain 
side (and the Hanany-Witten moves that they involve) 
as Izergin-Korepin-type specializations (and re-assignments of the 
roles) of the parameters. 
While we focus on results in XXX spin-chains, we 
outline their straightforward extension to XXZ spin-chains.

\subsection{Outline of contents}

In Section \ref{sec:reform_loc_f}, we recall the localization formulae
of Closset, Cremonesi and Park, for 
2D $\mathcal{N}=(2, 2)$ quiver gauge theories on $S^{\,2}$ 
\cite{Closset:2015rna} (and that of Benini and Zaffaroni, for 
$\mathcal{N}=2$ quiver gauge theories on $S^{\, 2} \times S^{\, 1}$ 
\cite{Benini:2015noa}), used in \cite{Bullimore:2017lwu} to construct 
Nekrasov's orbifold defect in 2D $A_1$ quiver gauge theory.  
Generalizing the discussion in \cite{Bullimore:2017lwu} for $A_1$ 
quiver gauge theory, we formally introduce the equivariant characters 
and reconstruct the localization formulae in
\cite{Closset:2015rna,Benini:2015noa} from them.

In Section \ref{sec:bethe_gauge}, we recall the Bethe/Gauge correspondence 
\cite{Nekrasov:2009uh,Nekrasov:2009ui} 
for $A_M$ linear quiver gauge theories, and provide the equivariant characters 
for them.

In Section \ref{sec:off_Bethe_AM}, after briefly recalling the construction 
of orbifold defects in $A_1$ quiver gauge theory, we extend this construction 
to a simple $A_M$ linear quiver gauge theory and obtain 
the nested coordinate Bethe 
wavefunctions of the  $\mathfrak{su}(M+1)$ XXX spin-chain with spin states 
in the fundamental representation.

In Section \ref{sec:gen_AM}, using a Higgsing procedure, we generalize 
the orbifold construction of  
Section \ref{sec:off_Bethe_AM} to more general 
$A_M$ linear quiver gauge theories 
and find the corresponding partition functions (as specializations of 
coordinate Bethe wavefunctions) of the corresponding $\mathfrak{su}(M+1)$ 
vertex lattice models. 
In Section \ref{section_6}, we interpret the Higgsing procedure 
as an Izergin-Korepin-type specialization 
of the lattice parameters on the Bethe side of the Bethe/Gauge correspondence, 
and interpret the Hanany-Witten moves that are involved in the Higgsing on the 
gauge side as a re-assignment of the lattice parameters on the Bethe side, 
and in Section \ref{section_7}, we include remarks.

In Appendix \ref{app:A1PDWPF}, we discuss the partial domain wall partition 
functions (DWPFs) in the rational $\mathfrak{su}(2)$ (six-)vertex model and 
prove Proposition \ref{prop:det_su2_pDW}, and in Appendix \ref{app:AM_v_model}, 
we define the partition function of the $\mathfrak{su}(M+1)$ vertex model 
which corresponds to the orbifold defect in the $A_M$ quiver constructed 
in the present work.

\section{Rewriting the localization formulae}
\label{sec:reform_loc_f}

\textit{We review the localization formulae of the partition 
functions of the A-twisted gauged linear sigma models (GLSMs) 
\cite{Witten:1993yc,Morrison:1994fr}, which are 
2D A-twisted $\mathcal{N}=(2,2)$ supersymmetric gauge theories 
on the $\Omega$-deformed $S_{\hbar}^{\, 2}$ \cite{Closset:2015rna} 
(and 3D twisted $\mathcal{N}=2$ gauge theories 
on $S_{\hbar}^{\, 2} \times S^{\, 1}$ \cite{Benini:2015noa}), 
where $\hbar$ is the $\Omega$-deformation parameter. 
Following \cite{Bullimore:2017lwu}, we provide the localization 
formulae in terms of equivariant characters which are more fundamental 
objects than the partition functions.
The equivariant characters are used to construct orbifold defects in 
Sections \ref{sec:off_Bethe_AM} and \ref{sec:gen_AM}.}
\footnote{\, 
In the present work, we use
\textit{\lq equivariant character\rq}, 
as well as the notation  
$\hbar$ for the $\Omega$-deformation parameter, and
$\gamma$ for a twisted mass parameter.
In \cite{Bullimore:2017lwu}, 
Bullimore \textit{et al.} use 
\textit{\lq equivariant index\rq} instead of 
\textit{\lq equivariant character\rq}, 
as well as the notation 
$\epsilon$ for the $\Omega$-deformation parameter, and  
$\hbar$ for the twisted mass parameter.}

\subsection{Equivariant characters}

Consider the 2D $\mathcal{N}=(2,2)$ (or 3D $\mathcal{N}=2$) supersymmetric 
gauge theory consisting of a vector multiplet $V$ in a Lie algebra $\mathfrak{g}$, 
of gauge group $G$, and $L$ chiral matter multiplets 
$\Phi_{\mathfrak{R}_i}^{r_i}$, with representations $\mathfrak{R}_i$ 
of $\mathfrak{g}$, $U(1)$ vector $R$-charges $r_i$ and twisted masses 
$\lambda_i$, $i = 1, \ldots, L$. In the present work, $r_i=0$ 
or $2$. The vector multiplet $V$ contains scalars 
$\bu=\{u_1,\ldots,u_{\mathrm{rk}(\mathfrak{g})}\}$, which take values 
in $\mathfrak{h} \otimes_{\IR}{\IC}$ and parametrize the Coulomb 
branch of the gauge theory, where $\mathrm{rk}(\mathfrak{g})$ is 
the rank of $\mathfrak{g}$ and $\mathfrak{h}$ is the Cartan subalgebra 
of $\mathfrak{g}$.

\begin{defi}
\label{def:eq_character}
The equivariant characters of the vector multiplet $V$ and the chiral matter 
multiplet $\Phi_{\mathfrak{R}}^{r}$ are
\begin{equation}
\chi_{\hbar}^{V}(\bu)
:= - \sum_{\alpha \in \Delta_+} 
\frac{\mathrm{e}^{\, \alpha(\bu)} + \mathrm{e}^{- \alpha(\bu)}}
{1 - \mathrm{e}^{\, \hbar}},
\qquad
\chi_{\hbar}^{\Phi_{\, \mathfrak{R}}^{\, r}} (\bu)
:= \sum_{\rho \in \mathfrak{R}} 
\frac{\mathrm{e}^{\, \rho(\bu) + \, \lambda + \, \hbar\, \delta_{\, r, \, 2}}}
{1 - \mathrm{e}^{\, \hbar}},
\end{equation}
where $\Delta_+$ is the set of positive roots of $\mathfrak{g}$, while 
$\alpha(\bu)$ and $\rho(\bu)$ are the canonical pairings. The equivariant characters charged 
with GNO charges
$\bd=\{d_1, \ldots, d_{\, \mathrm{rk} (\mathfrak{g})}\}$, $d_a\in \pi_1(U(1)) 
\cong {\IZ}$, 
associated with the quantized magnetic fluxes of the $U(1)^{\, \mathrm{rk} (\mathfrak{g})}$ 
gauge fields on $S_{\hbar}^{\, 2}$, are
\begin{align}
\chi_{\hbar,\bd}^{P}(\bu):=
\chi_{+\hbar}^{P}(\bu|_{+})+
\chi_{-\hbar}^{P}(\bu|_{-}),
\label{eq_charge_ch}
\end{align}
where $P=V$ or $\Phi_{\mathfrak{R}}^{r}$, and
\begin{align}
u_a|_{\pm} = u_a \mp \frac{d_a}{2}\, \hbar
\end{align}
where the characters $\chi_{+\hbar}^{P}(\bu|_{+})$ 
and $\chi_{-\hbar}^{P}(\bu|_{-})$ are 
defined on the north pole and the south pole of $S_{\hbar}^{\, 2}$, respectively.
\end{defi}

\begin{defi}
The total equivariant character is
\begin{align}
\chi_{\hbar}^{\mathrm{total}}(\bu):=
\chi_{\hbar}^{V}(\bu)+
\sum_{i=1}^L \chi_{\hbar}^{\Phi_{\mathfrak{R}_i}^{r_i}}(\bu),
\end{align}
and the total equivariant character with charges $\bd$ is
\begin{align}
\chi_{\hbar,\bd}^{\mathrm{total}}(\bu):=
\chi_{+\hbar}^{\mathrm{total}}(\bu|_{+})+
\chi_{-\hbar}^{\mathrm{total}}(\bu|_{-})=
\chi_{\hbar,\bd}^{V}(\bu)+
\sum_{i=1}^L \chi_{\hbar,\bd}^{\Phi_{\mathfrak{R}_i}^{r_i}}(\bu)
\label{t_ind_d}
\end{align}
\end{defi}

\subsection{Partition functions}

Each character $\chi_{\hbar,\bd}^{P}(\bu)$ with charges 
$\bd$ can be expanded as
\begin{equation}
\label{w_v}
\chi_{\hbar,\bd}^{P}(\bu)=
\sum_{I} \mathrm{e}^{w_I(\bu;\bd;\hbar)}
-\sum_{J} \mathrm{e}^{v_J(\bu;\bd;\hbar)},
\end{equation}
where \eqref{w_v} defines $w_{\, I}$ and $v_{\, J}$, and we obtain 
the building blocks of the topologically twisted partition functions 
of 2D and 3D gauge theories by
\begin{align}
\widehat{\mathcal{Z}}_{\bd}^{P}(\bu;\hbar)=
\frac{\prod_J v_J(\bu;\bd;\hbar)}
{\prod_I w_I(\bu;\bd;\hbar)},
\label{rp_ch_Z}
\end{align}
and
\begin{align}
\widehat{\mathcal{Z}}_{\bd}^{\mathrm{K},P}(\bu;\hbar)=
\frac{\prod_J 2\sinh \ll v_J(\bu;\bd;\hbar)/2\rr }
{\prod_I 2\sinh \ll w_I(\bu;\bd;\hbar)/2\rr },
\label{rp_ch_Z_K}
\end{align}
respectively.

\begin{prop}
\label{prop:build_pf}
The partition functions \eqref{rp_ch_Z} and \eqref{rp_ch_Z_K}, 
which are obtained from Definition \ref{def:eq_character}, as
\begin{align}
\begin{split}
&
\widehat{\mathcal{Z}}_{\bd}^{V}(\bu;\hbar)=
(-1)^{\sum_{\alpha \in \Delta_+} \ll \alpha(\bd)+1\rr }\,
\prod_{\alpha \in \Delta_+} 
\ll \alpha(\bu)^{\, 2}-\frac{\alpha(\bd)^{\, 2}}{4}\, 
\hbar^{\, 2} \rr ,
\\
&
\widehat{\mathcal{Z}}_{\bd}^{\Phi_{\mathfrak{R}}^{r}}(\bu;\hbar)=
\prod_{\rho \in \mathfrak{R}} 
\frac{\hbar^{r-\rho(\bd)-1}}
{\ll \frac{\rho(\bu)+\lambda}{\hbar}
+\frac{r-\rho(\bd)}{2} \rr _{\rho(\bd)+1-r}},
\label{bb_oz}
\end{split}
\end{align}
and
\begin{align}
\begin{split}
&
\widehat{\mathcal{Z}}_{\bd}^{\mathrm{K},V}(\bu;\hbar)=
\ll -q^{-\frac12}\rr ^{\sum_{\alpha \in \Delta_+} \alpha(\bd)}\,
\prod_{\alpha \in \Delta_+} 
\ll 1-\bu^{-\alpha} q^{\frac12 \alpha(\bd)}\rr 
\ll 1-\bu^{\alpha} q^{\frac12 \alpha(\bd)}\rr ,
\\
&
\widehat{\mathcal{Z}}_{\bd}^{\mathrm{K},\Phi_{\mathfrak{R}}^{r}}(\bu;\hbar)=
\prod_{\rho \in \mathfrak{R}}
\frac{\ll \bu^{\rho} \Lambda \rr ^{\frac{\rho(\bd)
+1-r}{2}}}
{\ll \bu^{\rho} \Lambda q^{\frac{r-\rho(\bd)}{2}};q\rr _{\rho(\bd)+1-r}},
\label{bb_oz_k}
\end{split}
\end{align}
respectively, 
give the building blocks of 2D and 3D topologically twisted 
partition functions in \cite{Closset:2015rna,Benini:2015noa} 
(see Proposition \ref{prop:localization_formula}). 
Here $U_a= \mathrm{e}^{-u_a}$, $\Lambda= \mathrm{e}^{-\lambda}$, 
$q= \mathrm{e}^{-\hbar}$, 
$\bu^{\alpha}= \mathrm{e}^{-\alpha(\bu)}$, and
$\bu^{\rho}= \mathrm{e}^{-\rho(\bu)}$. 
\end{prop}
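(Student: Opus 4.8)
The plan is to reduce both displays to a single term‑by‑term computation. In \eqref{rp_ch_Z}--\eqref{rp_ch_Z_K} the passage from a character $\chi_{\hbar,\bd}^{P}$, expanded as in \eqref{w_v}, to $\widehat{\mathcal{Z}}_{\bd}^{P}$ (resp.\ $\widehat{\mathcal{Z}}_{\bd}^{\mathrm{K},P}$) turns a concatenation of the exponent data $\{w_I\},\{v_J\}$ into a product, and by \eqref{t_ind_d} the total character is the sum of the vector and chiral contributions; so it suffices to evaluate $\widehat{\mathcal{Z}}_{\bd}$ on one positive root $\alpha\in\Delta_+$ of $\chi_{\hbar,\bd}^{V}$ and on one weight $\rho$ of $\mathfrak{R}$ in $\chi_{\hbar,\bd}^{\Phi_{\mathfrak{R}}^{r}}$ (with $r\in\{0,2\}$), then take products over $\Delta_+$ and over $\mathfrak{R}$. \textbf{Chiral block.} Substituting $u_a|_{\pm}=u_a\mp\tfrac{d_a}{2}\hbar$ into the two terms of \eqref{eq_charge_ch}, and remembering that the south‑pole term carries $\hbar\mapsto-\hbar$ (hence $\hbar\delta_{r,2}\mapsto-\hbar\delta_{r,2}$), I combine them over a common denominator by the elementary identity $\tfrac{1}{1-\mathrm{e}^{\hbar}}+\tfrac{\mathrm{e}^{c\hbar}}{1-\mathrm{e}^{-\hbar}}=\tfrac{1-\mathrm{e}^{(c+1)\hbar}}{1-\mathrm{e}^{\hbar}}$ with $c=\rho(\bd)-2\delta_{r,2}$. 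Writing $N:=\rho(\bd)+1-r=c+1$, the single‑weight character becomes $\mathrm{e}^{\,\rho(\bu)+\lambda+(\delta_{r,2}-\rho(\bd)/2)\hbar}\,\tfrac{1-\mathrm{e}^{N\hbar}}{1-\mathrm{e}^{\hbar}}$, whose geometric factor expands into the \emph{finite} list $\sum_{k=0}^{N-1}\mathrm{e}^{k\hbar}$ if $N>0$ (positive, i.e.\ $w_I$‑type), nothing if $N=0$, and $-\sum_{k=0}^{-N-1}\mathrm{e}^{(N+k)\hbar}$ if $N<0$ (negative, i.e.\ $v_J$‑type, via $\tfrac{1-\mathrm{e}^{N\hbar}}{1-\mathrm{e}^{\hbar}}=-\mathrm{e}^{N\hbar}\tfrac{1-\mathrm{e}^{-N\hbar}}{1-\mathrm{e}^{\hbar}}$). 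The resulting $w_I,v_J$ form an arithmetic progression in $\rho(\bu)+\lambda$; feeding them into \eqref{rp_ch_Z}, extracting one $\hbar$ from each linear factor, and identifying the product (with $(a)_0=1$ and $(a)_{-n}=1/(a-n)_n$) as $\hbar^{\,r-\rho(\bd)-1}\,/\,\big(\tfrac{\rho(\bu)+\lambda}{\hbar}+\tfrac{r-\rho(\bd)}{2}\big)_{\rho(\bd)+1-r}$ gives the second line of \eqref{bb_oz}.

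\textbf{Vector block.} The same steps are run on one $\alpha\in\Delta_+$, now keeping the overall minus sign and the two exponentials $\mathrm{e}^{\pm\alpha(\bu)}$; with $n:=\alpha(\bd)$ and $n\ge 0$ they produce $v_J$'s filling the symmetric progression $\{\alpha(\bu)+(k-\tfrac n2)\hbar:0\le k\le n\}$ and $w_I$'s filling the inner progression $\{-\alpha(\bu)+(j+1-\tfrac n2)\hbar:0\le j\le n-2\}$. In \eqref{rp_ch_Z} the ratio telescopes: writing $-\alpha(\bu)+l\hbar=-(\alpha(\bu)-l\hbar)$ and using the symmetry $l\mapsto-l$ of the inner range, the denominator equals $(-1)^{\,n-1}$ times the product over the inner progression, which cancels all of the symmetric progression except its two extremes, leaving $(-1)^{\,n-1}\big(\alpha(\bu)^2-\tfrac{n^2}{4}\hbar^2\big)$. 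The edge cases $n=0$ (both exponentials $v_J$, no $w_I$, giving $-\alpha(\bu)^2$), $n=1$ (two $v_J$, no $w_I$), and $n<0$ (swap the roles of $\mathrm{e}^{+\alpha(\bu)}$ and $\mathrm{e}^{-\alpha(\bu)}$) obey the same formula. Multiplying over $\Delta_+$ and using $(-1)^{\alpha(\bd)-1}=(-1)^{\alpha(\bd)+1}$ gives the first line of \eqref{bb_oz}.

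\textbf{$\mathrm{K}$‑theoretic version.} The lists $\{w_I\},\{v_J\}$ are unchanged; one only feeds them through \eqref{rp_ch_Z_K} instead. With $\bu^{\rho}=\mathrm{e}^{-\rho(\bu)}$, $\Lambda=\mathrm{e}^{-\lambda}$, $q=\mathrm{e}^{-\hbar}$, the factorisation $2\sinh\!\big(\tfrac12(\rho(\bu)+\lambda+c\hbar)\big)=(\bu^{\rho}\Lambda q^{c})^{-1/2}\big(1-\bu^{\rho}\Lambda q^{c}\big)$ turns each product of $2\sinh$‑factors over a progression into a $q$‑Pochhammer $(\bu^{\rho}\Lambda q^{(r-\rho(\bd))/2};q)_{\rho(\bd)+1-r}$ (extended to non‑positive index by $(a;q)_{-n}=1/(aq^{-n};q)_n$) times a monomial $(\bu^{\rho}\Lambda)^{\pm(\rho(\bd)+1-r)/2}$, the half‑integer powers of $q$ dropping out since the exponents $c$ in each progression sum to zero; this yields the second line of \eqref{bb_oz_k}. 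For the vector block the same substitution together with $(1-xq^{\,l})=-xq^{\,l}(1-x^{-1}q^{-l})$, applied across the symmetric inner progression, collects the monomials and signs into $(-q^{-1/2})^{\sum_{\alpha\in\Delta_+}\alpha(\bd)}$, giving the first line of \eqref{bb_oz_k}. One concludes by checking that the resulting blocks are exactly those of the localization formulae of \cite{Closset:2015rna,Benini:2015noa}, i.e.\ Proposition \ref{prop:localization_formula}.

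The main obstacle is not conceptual but bookkeeping: one must check that the three regimes $N=\rho(\bd)+1-r\gtrless0$ (and the edge cases $\alpha(\bd)\in\{0,1\}$ for $V$) each reproduce \emph{one and the same} closed form once the ordinary and $q$‑Pochhammer symbols are continued to non‑positive index, and that every power of $\hbar$ (resp.\ $q^{1/2}$) and every sign comes out exactly as written --- that is essentially the only place where care is needed.
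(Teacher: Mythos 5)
Your proposal is correct: the paper states Proposition \ref{prop:build_pf} without any proof, and your term-by-term expansion of the charged characters into finite geometric progressions, fed through \eqref{rp_ch_Z} and \eqref{rp_ch_Z_K} one positive root and one weight at a time (with the telescoping cancellation of the inner progression for the vector block, and the check that the $q^{1/2}$-powers cancel in the K-theoretic case), is exactly the direct verification the statement implicitly relies on; the edge cases $\rho(\bd)+1-r\lessgtr 0$ and $\alpha(\bd)\in\{0,1\}$ work out as you say and reproduce \eqref{bb_oz} and \eqref{bb_oz_k}. The only caveat is notational: your identification uses the Gamma-ratio convention $(x)_d=\prod_{\ell=0}^{d-1}(x+\ell)$ for $d>0$ (and the analogous convention for $(x;q)_d$), which is what the paper intends and what makes the $\hbar$-counting consistent, even though the paper's displayed product formula runs the index up to $\ell=d$.
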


In the above proposition, the Pochhammer and $q$-Pochhammer symbols are, 
respectively, defined by
\begin{align}
\begin{split}
(x)_d=\frac{\Gamma(x+d)}{\Gamma(x)}=
\begin{cases}
\prod_{\ell=0}^{d} \ll x+\ell \rr \ \ 
&\textrm{if}\ d >0,
\\
1
&\textrm{if}\ d =0,
\\
\prod_{\ell=d}^{-1} \ll x+\ell \rr ^{-1}\ \ 
&\textrm{if}\ d <0,
\end{cases}
\end{split}
\end{align}
and
\begin{align}
\begin{split}
(x;q)_d=
\begin{cases}
\prod_{\ell=0}^{d} \ll 1-x q^{\ell} \rr \ \ 
&\textrm{if}\ d >0,
\\
1
&\textrm{if}\ d =0,
\\
\prod_{\ell=d}^{-1} \ll 1-x q^{\ell} \rr ^{-1}\ \ 
&\textrm{if}\ d < 0
\end{cases}
\end{split}
\end{align}

In the following, we assume that the gauge group $G$ contains central 
$U(1)^{\mathsf{c}}$ factors, and then 
one can deform the gauge theory by the associated Fayet-Iliopoulos 
(FI) parameters $\xi^{\, a} $, and theta angles $\theta^{\, a}$, $a=1,\ldots,\mathsf{c}$.

\begin{prop}[\cite{Closset:2015rna,Benini:2015noa}]
\label{prop:localization_formula}
Combining the complexified FI parameters 
$\tau^{\, a}  = \mathrm{i}\, \xi^{\, a}  + \frac{1}{2\pi}\, \theta^{\, a} $ with 
the building blocks in Proposition \ref{prop:build_pf} obtained 
from \eqref{t_ind_d}, up to sign factors, 
the partition function of the A-twisted GLSM on $S_{\hbar}^{\, 2}$ is given by
\begin{equation}
Z_{S_{\hbar}^{\, 2}} = \frac{1}{|\mathcal{W}|}
\sum_{\bd \in {\IZ}^{\mathrm{rk}(\mathfrak{g})}} 
\oint_{\Gamma} d^{\mathrm{rk}(\mathfrak{g})} u \, 
\widehat{\mathcal{Z}}_{\bd}^{\textrm{total}}(\bu;\hbar),
\label{A_ZS}
\end{equation} 
where
\begin{equation}
\widehat{\mathcal{Z}}_{\bd}^{\textrm{total}}(\bu;\hbar):=
\mathrm{e}^{2\pi \mathrm{i}\, \tau(\bd)}\,
\widehat{\mathcal{Z}}_{\bd}^{V}(\bu;\hbar)\,
\prod_{i=1}^L \widehat{\mathcal{Z}}_{\bd}^{\Phi_{\mathfrak{R}_i}^{r_i}}(\bu;\hbar)
\label{integrand_A_Z}
\end{equation}
Here $|\mathcal{W}|$ is the order of the Weyl group of $G$, and the pairing 
$\tau(\bd)=\sum_a \tau^{\, a}  d_a$ is defined by embedding ${\boldsymbol\tau}$ into 
$\mathfrak{h}^* \otimes_{\IR}{\IC}$. The contour integral along $\Gamma$ is 
given by 
the Jeffrey-Kirwan residue operation (JK contour integral) 
\cite{JeKi:1993,BrVer:1999,SzVe:2003} (see also \cite{Benini:2013xpa}), 
which  picks relevant poles of the integrand. Similarly, 
the correlation function of two codimension-2 defects, $\mathcal{O}_N(\bu)$ 
and $\mathcal{O}_S(\bu)$, inserted at the north pole and at the south pole 
of $S_{\hbar}^{\, 2}$, respectively, is given by
\begin{align}
\left<\mathcal{O}_N(\bu)\, \mathcal{O}_S(\bu)\right>_{S_{\hbar}^{\, 2}}=
\frac{1}{|\mathcal{W}|}
\sum_{\bd \in {\IZ}^{\mathrm{rk}(\mathfrak{g})}} 
\oint_{\Gamma} d^{\mathrm{rk}(\mathfrak{g})} u \, 
\widehat{\mathcal{Z}}_{\bd}^{\textrm{total}}(\bu;\hbar) \,
\mathcal{O}_N(\bu|_+)\, \mathcal{O}_S(\bu|_-)
\label{A_correl_S}
\end{align}
The topologically twisted partition function and correlation functions of 
the $\mathcal{N}=2$ gauge theory on $S_{\hbar}^{\, 2} \times S^{\, 1}$ 
are obtained, up to Chern-Simons factors, by replacing 
$\widehat{\mathcal{Z}}_{\bd}^{P}(\bu;\hbar)$ with 
$\widehat{\mathcal{Z}}_{\bd}^{\mathrm{K},P}(\bu;\hbar)$ in the above 
formulae.
\end{prop}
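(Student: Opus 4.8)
The plan is to recognize Proposition~\ref{prop:localization_formula} as the localization theorem of Closset--Cremonesi--Park~\cite{Closset:2015rna} (and, in the 3D case, of Benini--Zaffaroni~\cite{Benini:2015noa}), so that the only thing genuinely left to check is that the one-loop building blocks appearing in their formulae coincide --- up to the sign, resp.\ Chern--Simons, factors that the statement already excludes --- with the $\widehat{\mathcal{Z}}^{V}_{\bd}$, $\widehat{\mathcal{Z}}^{\Phi^{r}_{\mathfrak{R}}}_{\bd}$, resp.\ $\widehat{\mathcal{Z}}^{\mathrm{K},V}_{\bd}$, $\widehat{\mathcal{Z}}^{\mathrm{K},\Phi^{r}_{\mathfrak{R}}}_{\bd}$, of Proposition~\ref{prop:build_pf}. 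First I would quote from \cite{Closset:2015rna,Benini:2015noa} the structural content: the sum over GNO fluxes $\bd\in\IZ^{\mathrm{rk}(\mathfrak{g})}$, the $\tfrac{1}{|\mathcal{W}|}$ Weyl prefactor, the Jeffrey--Kirwan residue prescription that defines the contour $\Gamma$, and the rule that a codimension-2 defect at the north (resp.\ south) pole contributes the factor $\mathcal{O}_{N}(\bu|_{+})$ (resp.\ $\mathcal{O}_{S}(\bu|_{-})$) evaluated at the pole-shifted scalars of Definition~\ref{def:eq_character}; this reproduces \eqref{A_ZS} and \eqref{A_correl_S} verbatim once the integrand is known. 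The complexified FI/theta contribution is the elementary classical factor $\mathrm{e}^{2\pi\mathrm{i}\,\tau(\bd)}=\prod_{a=1}^{\mathsf{c}}\mathrm{e}^{2\pi\mathrm{i}\,\tau^{a}d_{a}}$, one $\mathrm{e}^{2\pi\mathrm{i}\,\tau^{a}}$ per central $U(1)$, which combines with the one-loop factors exactly as in \eqref{integrand_A_Z}.

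It then remains to match the per-multiplet one-loop determinants. Here I would invoke the reformulation of \cite{Bullimore:2017lwu}: the CCP/BZ building blocks are produced from the equivariant characters of Definition~\ref{def:eq_character} by precisely the prescription \eqref{w_v}--\eqref{rp_ch_Z}, resp.\ \eqref{w_v} and \eqref{rp_ch_Z_K}, i.e.\ ``list the exponents $w_{I}$, $v_{J}$ in $\chi^{P}_{\hbar,\bd}(\bu)=\sum_{I}\mathrm{e}^{w_{I}}-\sum_{J}\mathrm{e}^{v_{J}}$ and take $\prod_{J}v_{J}/\prod_{I}w_{I}$, resp.\ $\prod_{J}2\sinh(v_{J}/2)/\prod_{I}2\sinh(w_{I}/2)$''. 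Granting this, the identification is literally Proposition~\ref{prop:build_pf}, which I take as already established; multiplying $\widehat{\mathcal{Z}}^{V}_{\bd}$, the $\widehat{\mathcal{Z}}^{\Phi^{r_{i}}_{\mathfrak{R}_{i}}}_{\bd}$ and $\mathrm{e}^{2\pi\mathrm{i}\,\tau(\bd)}$ according to \eqref{t_ind_d}, then summing over $\bd$, dividing by $|\mathcal{W}|$ and applying the JK contour integral, reproduces \eqref{A_ZS}; adjoining $\mathcal{O}_{N}(\bu|_{+})\mathcal{O}_{S}(\bu|_{-})$ reproduces \eqref{A_correl_S}. The 3D statement is the same computation with every $\widehat{\mathcal{Z}}^{P}_{\bd}(\bu;\hbar)$ replaced by $\widehat{\mathcal{Z}}^{\mathrm{K},P}_{\bd}(\bu;\hbar)$.

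The one genuine computation underlying all of this --- the proof of Proposition~\ref{prop:build_pf} --- is a telescoping of geometric series. For a fixed positive root $\alpha$ (resp.\ weight $\rho$) one combines the north- and south-pole summands of \eqref{eq_charge_ch} using $\tfrac{1}{1-\mathrm{e}^{-\hbar}}=-\tfrac{\mathrm{e}^{\hbar}}{1-\mathrm{e}^{\hbar}}$, so that the summand collapses to $\tfrac{\mathrm{e}^{a}-\mathrm{e}^{a'}}{1-\mathrm{e}^{\hbar}}$ with $a,a'$ affine in $\bu,\lambda,\hbar$; expanding $\tfrac{\mathrm{e}^{a}}{1-\mathrm{e}^{\hbar}}=\sum_{k\ge0}\mathrm{e}^{a+k\hbar}$ and likewise for $a'$, the two series agree term-by-term outside a finite window whose endpoints are dictated by $\alpha(\bd)$ and, for the chiral, by $r\in\{0,2\}$. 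This window is the list $\{w_{I}\}$ --- together with the list $\{v_{J}\}$ for the vector, where both of $\mathrm{e}^{\pm\alpha(\,\cdot\,)}$ occur --- and feeding it into \eqref{rp_ch_Z} one recognizes the resulting finite products as Pochhammer symbols, reads off the power $\hbar^{\,r-\rho(\bd)-1}$ and the sign $(-1)^{\sum_{\alpha\in\Delta_{+}}(\alpha(\bd)+1)}$ from the endpoints, and obtains \eqref{bb_oz}; applying the $2\sinh$-version \eqref{rp_ch_Z_K} to the same window and setting $q=\mathrm{e}^{-\hbar}$ gives the $q$-Pochhammer symbols of \eqref{bb_oz_k}. \textbf{The main obstacle} is the case analysis forced by the sign of the GNO charge $\alpha(\bd)$, resp.\ $\rho(\bd)$ (and by the value of $r$): it reverses the direction of the telescoping and flips the sign of the Pochhammer subscript, so one must pass to the $d<0$ branch of the (extended) Pochhammer and $q$-Pochhammer definitions; beyond this bookkeeping, the only remaining care is to keep the $\hbar$- and $q$-powers and the overall signs aligned with the conventions of \cite{Closset:2015rna,Benini:2015noa,Bullimore:2017lwu}, which is precisely why the statement is phrased ``up to sign factors'' and ``up to Chern--Simons factors''.
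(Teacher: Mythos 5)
Your proposal is correct and takes essentially the same route as the paper: Proposition \ref{prop:localization_formula} is imported from \cite{Closset:2015rna,Benini:2015noa} without an independent proof, the only content the paper adds being the identification of the one-loop building blocks with the character-derived expressions of Proposition \ref{prop:build_pf}, which is exactly how you structure the argument. Your supplementary telescoping sketch of Proposition \ref{prop:build_pf} (combining the north/south contributions via $\tfrac{1}{1-\mathrm{e}^{-\hbar}}=-\tfrac{\mathrm{e}^{\hbar}}{1-\mathrm{e}^{\hbar}}$ and reading off the finite window of exponents, with the case analysis in the sign of $\alpha(\bd)$, $\rho(\bd)$ and in $r$) reproduces the stated Pochhammer and $q$-Pochhammer forms, so nothing essential is missing.
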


\section{The Bethe/Gauge correspondence}
\label{sec:bethe_gauge}

\begin{figure}[t]
\centering
\includegraphics[width=100mm]{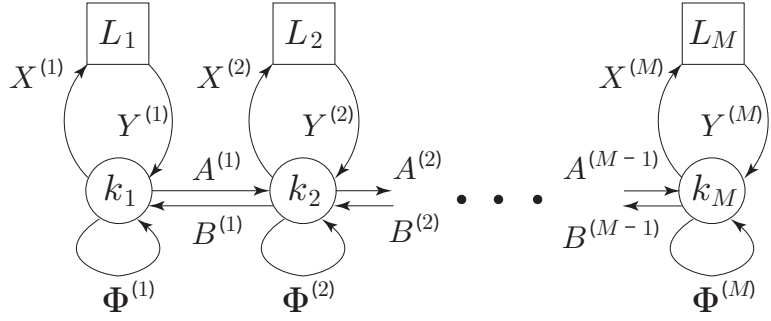}
\caption{$A_M$ linear quiver with the matter content in Table \ref{suM_matt}.}
\label{fig:am_quiver}
\end{figure}

\textit{We recall the basics of the Bethe/Gauge correspondence 
\cite{Nekrasov:2009uh,Nekrasov:2009ui}, 
between the supersymmetric vacua of 2D $\mathcal{N}=(2,2)$ 
(resp. 3D $\mathcal{N}=2$) $A_M$ linear quiver gauge 
theories as in Figure \ref{fig:am_quiver}, and 
the Bethe eigenfunctions of XXX (resp. XXZ) spin-chain Hamiltonians,  
with spins in the fundamental representation of $\mathfrak{su}(M+1)$.}

\medskip

\begin{table}[t]
\begin{center}
\begin{tabular}{|c|c|c|c|c|c|}
\hline
Field & $U(k_{p})$ & $U(k_{p+1})$ & $U(k_{q})$ & twisted mass& $U(1)_R$ 
\\ 
\hline
$X_i^{(p)}$ & $\mathbf{k}_p$ & $\mathbf{1}$ 
& $\mathbf{1}$ & $-m_i^{(p)}-\frac{\gamma}{2}$ & $0$ \\
$Y_i^{(p)}$ & $\overline{\mathbf{k}}_p$ 
& $\mathbf{1}$ & $\mathbf{1}$ & $m_i^{(p)}-\frac{\gamma}{2}$ 
& $0$ 
\\
$\Phi^{(p)}$ & {\bf adj} & $\mathbf{1}$ & $\mathbf{1}$ & $\gamma$ & $2$ 
\\ 
$A^{(p)}$ & $\mathbf{k}_p$ & $\overline{\mathbf{k}}_{p+1}$ & $\mathbf{1}$ & $-\frac{\gamma}{2}$ & $0$ \\
$B^{(p)}$ & $\overline{\mathbf{k}}_p$ & $\mathbf{k}_{p+1}$ & $\mathbf{1}$ & $-\frac{\gamma}{2}$ & $0$ \\
\hline
\end{tabular}
\caption{
The matter content of the $A_M$ quiver gauge theory in Figure \ref{fig:am_quiver} 
which describes an $\mathfrak{su}(M+1)$ spin-chain. 
For $X_i^{(p)}$, $Y_i^{(p)}$, $\Phi^{(p)}$,
$i=1, \ldots, L_{\, p}$ and $p=1, \ldots, M$. 
For $A^{(p)}$,  $B^{(p)}$,
$p=1, \ldots, M-1$.
Further, $q=1, \ldots, M$, with $q \neq p, p+1$, and  
$U(1)_R$ denotes the $U(1)$ vector $R$-charge.
}
\label{suM_matt}
\end{center}
\end{table}

The $A_M$ quiver gauge theory which corresponds to the $\mathfrak{su}(M+1)$ XXX spin-chain, 
with spins in the fundamental representation, is described by an A-twisted GLSM, 
on the $\Omega$-deformed $S_{\hbar}^{\, 2}$, with the gauge group 
$U(k_1)\times \cdots \times U(k_M)$, the matter content in Table \ref{suM_matt}, and the superpotential
\begin{align}
W = 
\sum_{p=1}^{M}
\sum_{i=1}^{L_{\, p}}
\sum_{a,b=1}^{k_{\, p}} 
X_{i,a}^{(p)} \, 
\Phi_{a,b}^{(p)} \, 
Y_{b,i}^{(p)}
+
\sum_{p=1}^{M-1}
\sum_{a,b=1}^{k_{\, p}}
\sum_{c=1}^{k_{p+1}} 
\Phi_{a,b}^{(p)} \, 
A_{b,c}^{(p)} \, 
B_{c,a}^{(p)}
+
\sum_{p=1}^{M-1}
\sum_{a=1}^{k_{\, p}}
\sum_{b,c=1}^{k_{p+1}}  
A_{a,b}^{(p)} \, 
\Phi_{b,c}^{(p+1)} \, 
B_{c,a}^{(p)}
\nonumber
\end{align}
It contains a set of vector multiplet scalars 
$\bu_{\bk}^M=\{\bu_{k_1}^{(1)},\ldots,\bu_{k_M}^{(M)}\}$,
with 
$\bu_{k_{\, p}}^{(p)}=\{u_1^{(p)},\ldots,u_{k_{\, p}}^{(p)}\}$, 
which parametrize the Coulomb branch, and twisted masses 
$\bm_{\bL}^M = \{ \bm_{L_1}^{(1)},$
$ \ldots,\bm_{L_M}^{(M)}\}$,  
with $\bm_{L_{\, p}}^{(p)}=\{m_1^{(p)},\ldots,m_{L_{\, p}}^{(p)}\}$ and $\gamma$, 
associated with the $U(L_1)\times \cdots \times U(L_M)\times U(1)$ flavor 
symmetry. In Table \ref{bg_dictionary}, we summarize the Bethe/Gauge dictionary 
\cite{Nekrasov:2009uh,Nekrasov:2009ui} which translates the gauge theory language 
to the spin-chain language. From Definition \ref{def:eq_character}, 
we obtain the  
equivariant characters for the $U(k_{\, p})$ vector multiplets $V^{(p)}$ 
and the chiral matter multiplets in Table \ref{suM_matt} as
\begin{align}
\begin{split}
&
\chi_{\hbar}^{V^{(p)}}
\ll \bu_{k_{\, p}}^{(p)} \rr 
=
- \sum_{a \ne b}^{k_{\, p}} 
\frac{
\mathrm{e}^{u_{a,b}^{(p)}}
}{
1-\mathrm{e}^{\hbar}
},
\qquad
\chi_{\hbar}^{\Phi^{(p)}}
\ll \bu_{k_{\, p}}^{(p)} \rr 
= 
\sum_{a,b=1}^{k_{\, p}} 
\frac{
\mathrm{e}^{u_{a,b}^{(p)} + \gamma + \hbar}
}{
1-\mathrm{e}^{\hbar}
},
\\
&
\chi_{\hbar}^{X_i^{(p)}}
\ll \bu_{k_{\, p}}^{(p)} \rr 
= \sum_{a=1}^{k_{\, p}} 
\frac{
\mathrm{e}^{u_a^{(p)} - m_i^{(p)} - \frac{\gamma}{2}}
}{
1-\mathrm{e}^{\hbar}
},
\qquad
\chi_{\hbar}^{Y_i^{(p)}}
\ll \bu_{k_{\, p}}^{(p)} \rr 
= \sum_{a=1}^{k_{\, p}} 
\frac{
\mathrm{e}^{-u_a^{(p)} + m_i^{(p)} - \frac{\gamma}{2}}
}{
1-\mathrm{e}^{\hbar}
},
\\
&
\chi_{\hbar}^{A^{(p)}}
\ll \bu_{k_{\, p}}^{(p)},\bu_{k_{p+1}}^{(p+1)} \rr 
= 
\sum_{a=1}^{k_{\, p}} 
\sum_{b=1}^{k_{p+1}} 
\frac{
\mathrm{e}^{u_a^{(p)} - u_b^{(p+1)} - \frac{\gamma}{2}}
}{
1-\mathrm{e}^{\hbar}
},
\\
&
\chi_{\hbar}^{B^{(p)}}
\ll \bu_{k_{\, p}}^{(p)},\bu_{k_{p+1}}^{(p+1)} \rr 
= 
\sum_{a=1}^{k_{\, p}} 
\sum_{b=1}^{k_{p+1}} 
\frac{
\mathrm{e}^{-u_a^{(p)} + u_b^{(p+1)} - \frac{\gamma}{2}}
}{
1-\mathrm{e}^{\hbar}
},
\label{chi_AM}
\end{split}
\end{align}
where $u_{a,b}^{(p)} = u_a^{(p)} - u_b^{(p)}$. 
\begin{table}[t]
\begin{center}
 \begin{tabular}{|c|c|c|}
\hline
& 2D/3D gauge with $A_M$ quiver (Figure \ref{fig:am_quiver}) 
& $\mathfrak{su}(M+1)$ XXX/XXZ spin-chain\\
\hline
$u_a^{(p)}$ & vector multiplet scalar & Bethe root \\
$m_i^{(p)}$ & twisted mass & inhomogeneity \\
$\gamma$ & twisted mass & coupling constant \\
$\mathfrak{q}_p$ & exponentiated FI parameter & 
periodic spin-chain boundary twist parameter \\
\hline
\eqref{eff_twist_sp_AM} & effective twisted superpotential & 
Yang-Yang function \\
\eqref{nBE_AM} & vacuum equation & nested Bethe equation \\
\hline
\end{tabular}
\caption{The Bethe/Gauge dictionary, where 
$a=1, \ldots, k_{\, p}$, 
$i=1, \ldots, L_{\, p}$, $p=1, \ldots, M$. 
Here $k_{\, p}$ is the number of vector multiplet scalars in $U (k_{\, p})$ 
on the gauge side/spin excitations on the Bethe side, 
and 
$L_{\, p}$ is the number of (anti-)fundamental flavors in $U(k_{\, p})$ 
on the gauge side/total number of spins on the Bethe side.}
\label{bg_dictionary}
\end{center}
\end{table}
Then, the building blocks of the $S^{\, 2}$ partition function 
in Proposition \ref{prop:build_pf} are obtained as
\begin{align}
\begin{split}
&
\widehat{\mathcal{Z}}_{\bd_{k_{\, p}}^{(p)}}^{V^{(p)}}
\ll \bu_{k_{\, p}}^{(p)}; \hbar \rr =
(-1)^{(k_{\, p}-1)\sum_{a=1}^{k_{\, p}} (d_a^{(p)}+\frac12)}
\prod_{a<b}^{k_{\, p}} \ll 
(u_{a,b}^{(p)})^{\, 2}-\frac{(d_{a,b}^{(p)})^{\, 2}}{4}\, \hbar^{\, 2}\rr ,
\\
&
\widehat{\mathcal{Z}}_{\bd_{k_{\, p}}^{(p)}}^{\Phi^{(p)}}
\ll \bu_{k_{\, p}}^{(p)}; \hbar \rr =
\prod_{a,b=1}^{k_{\, p}} \mathcal{Z}^{(d_{a,b}^{(p)}-2)}
\ll u_{a,b}^{(p)}+\gamma \rr,
\\
&
\widehat{\mathcal{Z}}_{\bd_{k_{\, p}}^{(p)}}^{X_i^{(p)}}
\ll \bu_{k_{\, p}}^{(p)}; \hbar \rr =
\prod_{a=1}^{k_{\, p}} \mathcal{Z}^{(d_a^{(p)})}
\ll u_a^{(p)}-m_i^{(p)}-\frac{\gamma}{2} \rr,
\\
&
\widehat{\mathcal{Z}}_{\bd_{k_{\, p}}^{(p)}}^{Y_i^{(p)}}
\ll \bu_{k_{\, p}}^{(p)}; \hbar \rr =
\prod_{a=1}^{k_{\, p}} \mathcal{Z}^{(-d_a^{(p)})}
\ll -u_a^{(p)}+m_i^{(p)}-\frac{\gamma}{2} \rr,
\\
&
\widehat{\mathcal{Z}}_{\bd_{k_{\, p}}^{(p)},\bd_{k_{p+1}}^{(p+1)}}^{A^{(p)}}
\ll \bu_{k_{\, p}}^{(p)},\bu_{k_{p+1}}^{(p+1)}; \hbar \rr =
\prod_{a=1}^{k_{\, p}} 
\prod_{b=1}^{k_{p+1}} 
\mathcal{Z}^{(d_a^{(p)}-d_b^{(p+1)})}
\ll u_a^{(p)}-u_b^{(p+1)}-\frac{\gamma}{2} \rr,
\\
&
\widehat{\mathcal{Z}}_{\bd_{k_{\, p}}^{(p)},\bd_{k_{p+1}}^{(p+1)}}^{B^{(p)}}
\ll \bu_{k_{\, p}}^{(p)},\bu_{k_{p+1}}^{(p+1)}; \hbar \rr =
\prod_{a=1}^{k_{\, p}} 
\prod_{b=1}^{k_{p+1}} 
\mathcal{Z}^{(-d_a^{(p)}+d_b^{(p+1)})}
\ll -u_a^{(p)}+u_b^{(p+1)}-\frac{\gamma}{2} \rr,
\label{Z_AM}
\end{split}
\end{align}
where $\bd_{\bk}^M=\{\bd_{k_1}^{(1)},\ldots,\bd_{k_M}^{(M)}\}$ 
with $\bd_{k_{\, p}}^{(p)}=\{d_1^{(p)},\ldots,d_{k_{\, p}}^{(p)}\}$, 
$d_a^{(p)}\in {\IZ}$, $p=1,\ldots,M$, are sets of GNO charges, 
$d_{a,b}^{(p)}=d_a^{(p)}-d_b^{(p)}$, and
\begin{align}
\mathcal{Z}^{(d)}(u)=
\begin{cases}
\prod_{\ell=0}^{d}\ll u-\frac{d}{2}\, \hbar + \ell\, \hbar \rr ^{-1}
\ \ &
\textrm{if}\ d \ge 0,
\\
\prod_{\ell=1}^{-d-1}\ll u+\frac{d}{2}\, \hbar + \ell\, \hbar \rr 
\ \ &
\textrm{if}\ d < 0
\end{cases}
\end{align}
Combining \eqref{Z_AM} with the complexified FI parameters $\tau^{\, p}$, $p=1,\ldots,M$,  
associated with the central $U(1)^M \subset U(k_1)\times \cdots \times U(k_M)$, the integrand 
\eqref{integrand_A_Z} of the $S^{\, 2}$ partition function is
\begin{multline}
\label{pf_AM}
\widehat{\mathcal{Z}}_{\bd_{\bk}^M}^{\textrm{total}}(\bu_{\bk}^M;\hbar)
=
\\ 
\prod_{p=1}^{M}
\mathfrak{q}_p^{\sum_{a=1}^{k_{\, p}} d_a^{(p)}}\,
\widehat{\mathcal{Z}}_{\bd_{k_{\, p}}^{(p)}}^{V^{(p)}}(\bu_{k_{\, p}}^{(p)};\hbar)\,
\widehat{\mathcal{Z}}_{\bd_{k_{\, p}}^{(p)}}^{\Phi^{(p)}}(\bu_{k_{\, p}}^{(p)};\hbar)\,
\times 
\prod_{i=1}^{L_{\, p}} \widehat{\mathcal{Z}}_{\bd_{k_{\, p}}^{(p)}}^{X_i^{(p)}}(\bu_{k_{\, p}}^{(p)};\hbar)\,
\widehat{\mathcal{Z}}_{\bd_{k_{\, p}}^{(p)}}^{Y_i^{(p)}}(\bu_{k_{\, p}}^{(p)};\hbar)
\\ 
\times 
\prod_{p=1}^{M-1} 
\widehat{\mathcal{Z}}_{\bd_{k_{\, p}}^{(p)},\bd_{k_{p+1}}^{(p+1)}}^{A^{(p)}}
(\bu_{k_{\, p}}^{(p)},\bu_{k_{p+1}}^{(p+1)};\hbar)\,
\widehat{\mathcal{Z}}_{\bd_{k_{\, p}}^{(p)},\bd_{k_{p+1}}^{(p+1)}}^{B^{(p)}}
(\bu_{k_{\, p}}^{(p)},\bu_{k_{p+1}}^{(p+1)};\hbar),
\end{multline}
where $\mathfrak{q}_p=(-1)^{L_{\, p}+k_{p-1}+k_{p}}\, \mathrm{e}^{2\pi \mathrm{i}\, \tau^{\, p}}$ 
are exponentiated FI parameters. 

In the limit $\hbar \to 0$, with positive FI parameters $\xi^{\, p} > 0$, and summing over 
the GNO charges $\bd_{\bk}^M$ with $d_a^{\, (p)} \ge 0$, the partition function \eqref{A_ZS} 
can be written in terms of a contour integral around the roots of the vacuum equations \cite{Closset:2015rna}
\begin{equation}
Z_{S^{\, 2}} =
\oint_{\Gamma_{\mathrm{eff}}} 
\prod_{p=1}^M \frac{d^{k_{\, p}} u^{(p)}}{k_{\, p}!}\, 
\frac{(-1)^{\frac{k_{\, p}}{2}(k_{\, p}-1)}}
{\prod_{a=1}^{k_{\, p}} \ll 1-\mathrm{e}^{2\pi \mathrm{i} \, 
\partial_{u_a^{(p)}}\mathcal{W}_{\mathrm{eff}}(\bu_{\bk}^M)}\rr }
\times \mathcal{Z}_{\mathbf{0}}(\bu_{\bk}^M),
\end{equation}
where
\begin{multline}
\mathcal{Z}_{\mathbf{0}}(\bu_{\bk}^M)
=
\prod_{p=1}^M
\frac{
\prod_{a<b}^{k_{\, p}}   \ll u_{a,b}^{(p)} \rr^{\, 2} \cdot 
\prod_{a,b=1}^{k_{\, p}} \ll u_{a,b}^{(p)}+\gamma \rr}
{\prod_{i=1}^{L_{\, p}} \prod_{a=1}^{k_{\, p}} 
\ll u_a^{(p)}  - m_i^{(p)}-\frac{\gamma}{2}\rr 
\ll -u_a^{(p)} + m_i^{(p)}-\frac{\gamma}{2}\rr 
}
\\
\times
\prod_{p=1}^{M-1}
\frac{1}{\prod_{a=1}^{k_{\, p}} \prod_{b=1}^{k_{p+1}} \ll u_a^{(p)}-u_b^{(p+1)}-\frac{\gamma}{2}\rr 
\ll -u_a^{(p)}+u_b^{(p+1)}-\frac{\gamma}{2}\rr}
\end{multline}
Here the effective twisted superpotential \cite{Nekrasov:2009uh} (in 
the denominator of the integrand),
\begin{align}
\mathcal{W}_{\mathrm{eff}}(\bu_{\bk}^M)=
\mathcal{W}_{\mathrm{cl}}(\bu_{\bk}^M)+
\mathcal{W}_{\mathrm{vec}}(\bu_{\bk}^M)+
\mathcal{W}_{\mathrm{matt}}(\bu_{\bk}^M),
\label{eff_twist_sp_AM}
\end{align}
consists of
\begin{align}
\begin{split}
&
\mathcal{W}_{\mathrm{cl}}(\bu_{\bk}^M)=
\sum_{p=1}^M \tau^{\, p} \, \sum_{a=1}^{k_{\, p}} u_a^{(p)},
\\
&
\mathcal{W}_{\mathrm{vec}}(\bu_{\bk}^M)=
-\frac{1}{2}\sum_{p=1}^M \sum_{a<b}^{k_{\, p}} u_{a,b}^{(p)} = 
-\frac{1}{2}\sum_{p=1}^M \sum_{a=1}^{k_{\, p}} \ll k_{\, p}-2a+1\rr  u_{a}^{(p)},
\\
&
\mathcal{W}_{\mathrm{matt}}(\bu_{\bk}^M)=
-\frac{1}{2\pi \mathrm{i}}\Bigg[
\sum_{p=1}^M \sum_{i=1}^{L_{\, p}} \sum_{a=1}^{k_{\, p}}
\ll u_a^{(p)}-m_i^{(p)}-\frac{\gamma}{2}\rr  \ll \log \ll u_a^{(p)}-m_i^{(p)}-\frac{\gamma}{2}\rr -1\rr 
\\
&\qquad\qquad\qquad
+\sum_{p=1}^M \sum_{i=1}^{L_{\, p}} \sum_{a=1}^{k_{\, p}}
\ll -u_a^{(p)}+m_i^{(p)}-\frac{\gamma}{2}\rr  \ll \log \ll -u_a^{(p)}+m_i^{(p)}-\frac{\gamma}{2}\rr -1\rr 
\\
&\qquad\qquad\qquad
+\sum_{p=1}^M \sum_{a,b=1}^{k_{\, p}}
\ll u_{a,b}^{(p)}+\gamma\rr  \ll \log \ll u_{a,b}^{(p)}+\gamma\rr -1\rr 
\\
&\qquad\qquad\qquad
+\sum_{p=1}^{M-1} \sum_{a=1}^{k_{\, p}} \sum_{b=1}^{k_{p+1}}
\ll u_a^{(p)}-u_b^{(p+1)}-\frac{\gamma}{2}\rr  \ll \log \ll u_a^{(p)}-u_b^{(p+1)}-\frac{\gamma}{2}\rr -1\rr 
\\
&\qquad\qquad\qquad
+\sum_{p=1}^{M-1} \sum_{a=1}^{k_{\, p}} \sum_{b=1}^{k_{p+1}}
\ll -u_a^{(p)}+u_b^{(p+1)}-\frac{\gamma}{2}\rr  \ll \log \ll -u_a^{(p)}+u_b^{(p+1)}-\frac{\gamma}{2}\rr -1\rr 
\Bigg] 
\nonumber
\end{split}
\end{align}
The contour $\Gamma_{\mathrm{eff}}$ encloses the roots of 
vacuum equations 
$\mathrm{e}^{2\pi \mathrm{i}\, \partial_{u_a^{(p)}} \mathcal{W}_{\mathrm{eff}}(\bu_{\bk}^M)}=1$,
\begin{multline}
\prod_{i=1}^{L_{\, p}} \frac{u_a^{(p)}-m_i^{(p)}-\frac{\gamma}{2}}{u_a^{(p)}-m_i^{(p)}+\frac{\gamma}{2}}
=
\mathfrak{q}_p\, \prod_{b\ne a}^{k_{\, p}} \frac{u_{a,b}^{(p)}-\gamma}{u_{a,b}^{(p)}+\gamma}
\times
\prod_{b=1}^{k_{p-1}} \frac{u_a^{(p)}-u_b^{(p-1)}+\frac{\gamma}{2}}
{u_a^{(p)}-u_b^{(p-1)}-\frac{\gamma}{2}}
\times
\prod_{b=1}^{k_{p+1}} \frac{u_a^{(p)}-u_b^{(p+1)}+\frac{\gamma}{2}}
{u_a^{(p)}-u_b^{(p+1)}-\frac{\gamma}{2}},
\label{nBE_AM}
\end{multline}
where $a=1,\ldots,k_{\, p},\ p=1,\ldots,M$, $k_0 = k_{M+1} = 0$. 
These vacuum equations are the nested Bethe equations of the $\mathfrak{su}(M+1)$ XXX spin-chain 
(see the Bethe/Gauge dictionary in Table 
\ref{bg_dictionary}), which is the starting point of 
the Bethe/Gauge correspondence \cite{Nekrasov:2009uh,Nekrasov:2009ui}.

\begin{remark}
The 3D uplift of the above results for XXZ spin-chains is 
straightforward \cite{Nekrasov:2009uh} (see also \cite{Gaiotto:2013bwa}).
\end{remark}

\section{Nested coordinate Bethe wavefunctions from orbifold defects}
\label{sec:off_Bethe_AM}

\begin{figure}[t]
 \centering
  \includegraphics[width=100mm]{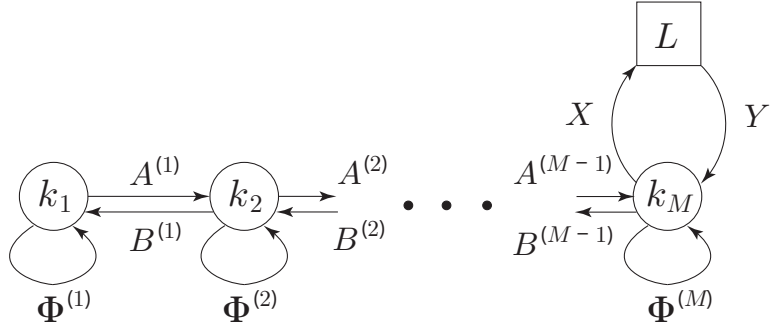}
\caption{The $A_M$ linear quiver that corresponds to Figure \ref{fig:am_quiver} 
with the ranks of the flavour groups 
$L_i=0$, $i=1,\ldots,M-1$, $L_M = L$, and the ranks of the gauge groups 
$k_1\le k_2 \le \ldots \le k_M \le L$.}
\label{fig:am_quiver_l}
\end{figure}

\textit{We review the construction of orbifold-type codimension-2 defects in  
the $A_1$ quiver gauge theory that corresponds to the $\mathfrak{su}(2)$ spin-chain \cite{Nekrasovtalk:1,Nekrasovtalk:2,Bullimore:2017lwu}, then extend that to the $A_M$ 
quiver gauge theory that corresponds to the $\mathfrak{su} (M + 1)$ spin-chain, 
described by the $A_M$ quiver in Figure \ref{fig:am_quiver_l}.}
\footnote{\,
For a suitable choice of the FI parameters, the GLSM described by the $A_M$ quiver 
in Figure \ref{fig:am_quiver_l} flows in the IR limit to a non-linear sigma model 
with the cotangent bundle of a partial flag variety as a target. This partial 
flag variety is defined by the set of subspaces
$
\{ \{0\} \subset {\IC}^{k_1} \subset {\IC}^{k_2} \subset \ldots 
\subset {\IC}^{k_M} \subset {\IC}^{L} \}
$,
in ${\IC}^{L}$. 
In the case of $k_{\, p} = p$, $M=L-1$, the variety is called 
the complete flag variety.}

\subsection{Orbifold defect for $A_1$ quiver}

\begin{table}[t]
\begin{center}
\begin{tabular}{|c|c|c|c|}
\hline
Field & $U(k)$ & twisted mass& $U(1)_R$ \\ \hline
$X_i$ & $\mathbf{k}$ & $-m_i-\frac{\gamma}{2}$ & $0$ \\
$Y_i$ & $\overline{\mathbf{k}}$ & $m_i-\frac{\gamma}{2}$ & $0$ \\
$\Phi$ & {\bf adj} & $\gamma$ & $2$ \\ \hline
\end{tabular}
\caption{The matter content of the $A_1$ quiver gauge theory that 
corresponds to the $\mathfrak{su}(2)$ spin-chain. 
Here $i=1, \ldots, L$, $k\le L$.}
\label{su2_matt}
\end{center}
\end{table}

Consider the A-twisted $U(k)$ GLSM on $S_{\hbar}^{\, 2}$, with matter 
content as in Table \ref{su2_matt}, and the superpotential 
$W=\sum_{i=1}^{L}\sum_{a,b=1}^{k} X_{i,a}\Phi_{a,b}Y_{b,i}$, $k \le L$. 
In this case, the equivariant characters \eqref{chi_AM} are given by
\begin{align}
\begin{split}
&
\chi_{\hbar}^{V}(\bu_k)
=- \sum_{a \ne b}^k \frac{\mathrm{e}^{u_{a,b}}}
{1-\mathrm{e}^{\hbar}},
\qquad
\chi_{\hbar}^{\Phi}(\bu_k)
= \sum_{a,b=1}^k \frac{\mathrm{e}^{u_{a,b} + \gamma + \hbar}}
{1-\mathrm{e}^{\hbar}},
\\
&
\chi_{\hbar}^{X_i}(\bu_k)
= \sum_{a=1}^k \frac{\mathrm{e}^{u_a - m_i - \frac{\gamma}{2}}}
{1-\mathrm{e}^{\hbar}},
\qquad
\chi_{\hbar}^{Y_i}(\bu_k)
= \sum_{a=1}^k \frac{\mathrm{e}^{-u_a + m_i - \frac{\gamma}{2}}}
{1-\mathrm{e}^{\hbar}}
\label{chi_A1}
\end{split}
\end{align}

Now, we recall the construction of orbifold defects for the $A_1$ quiver 
in \cite{Nekrasovtalk:1,Nekrasovtalk:2,Bullimore:2017lwu}. 
The orbifold defects inserted at the north (or south) pole 
of $S_{\hbar}^{\, 2}$ are characterized by a discrete holonomy 
$\omega^{\, n}$, $n = 0, 1, \ldots, L-1$, with $\omega^{\, L}=1$, 
associated with a ${\IZ}_L$ orbifold 
around the north (or south) pole, such that the gauge 
symmetry $U(k)$ and the flavor symmetry $U(L)$ are broken 
to a maximal torus. Firstly, for constructing such orbifold defects,
we change the parameters 
\begin{align}
m_i\ \to\ m_i + \ll i-1\rr \frac{\hbar}{L},\qquad
u_a|_{\pm}\ \to\ u_a|_{\pm} + \ll I_a-1\rr \frac{\hbar}{L},\qquad
\hbar\ \to\ \frac{\hbar}{L},
\label{orb_rep_su2}
\end{align}
in the total equivariant character $\chi_{+\hbar}^{\mathrm{total}}(\bu_{k}|_{+})$ 
or $\chi_{-\hbar}^{\mathrm{total}}(\bu_{k}|_{-})$ in \eqref{t_ind_d},
which is composed of the expressions in \eqref{chi_A1}, where
\begin{equation}
\bI_k=\{I_1,\ldots,I_k\} \subset 
\bI_L=\{1,\ldots,L\},\qquad I_a<I_{a+1},
\label{nest_a1}
\end{equation} 
is an ordered set that characterizes the orbifold defect. 
Next, taking the ${\IZ}_L$ invariant part under 
$\hbar \to \hbar + 2\pi \mathrm{i} n$, $n=0,1,\ldots,L-1$, 
of $\chi_{+\hbar}^{\mathrm{total}}(\bu_{k}|_{+})$ 
or $\chi_{-\hbar}^{\mathrm{total}}(\bu_{k}|_{-})$, we obtain
\begin{align}
\begin{split}
&
\chi_{\pm\hbar}^{\mathrm{total}}(\bu_{k}|_{\pm}) 
\\ 
& -
\sum_{a=1}^k \ll 
\sum_{i=1}^{I_a-1}\mathrm{e}^{\pm u_a|_{\pm} \mp m_i-\frac{\gamma}{2}}
+
\sum_{i=I_a+1}^{L}\mathrm{e}^{\mp u_a|_{\pm} \pm m_i-\frac{\gamma}{2}}
-\sum_{a<b}^k \ll \mathrm{e}^{\mp u_{a,b}|_{\pm}} + 
\mathrm{e}^{\pm u_{a,b}|_{\pm}+\gamma}
\rr 
\rr, 
\label{ch_def_a1}
\end{split}
\end{align}
which follows from the following lemma.

\begin{lemm}[\cite{Bullimore:2017lwu}]\label{lemma:ch_inv}
For any parameters $x$ and $\hbar$, 
and integers $I$ and $L$, performing the shift of parameters
\begin{align}
x\ \to\ x + I\, \frac{\hbar}{L},\qquad
\hbar\ \to\ \frac{\hbar}{L},\qquad -L+1 \le I \le L-1,
\end{align}
in $\mathrm{e}^{\pm x}/(1-\mathrm{e}^{\pm \hbar})$, 
leads to the ${\IZ}_L$ invariant part
\begin{align}
\frac{\mathrm{e}^{\pm x}}{1-\mathrm{e}^{\pm \hbar}} \ \
\longrightarrow\ \
\begin{cases}
\frac{\mathrm{e}^{\pm x}}{1-\mathrm{e}^{\pm \hbar}}\ \
&\textrm{if}\ \ -L+1 \le I \le 0,
\\
\frac{\mathrm{e}^{\pm x}}{1-\mathrm{e}^{\pm \hbar}} - \mathrm{e}^{\pm x}\ \
&\textrm{if}\ \ 1 \le I \le L-1
\end{cases}
\end{align}
\end{lemm}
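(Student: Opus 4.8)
The plan is to prove the lemma by a direct computation, working separately with the series expansions of $\mathrm{e}^{\pm x}/(1-\mathrm{e}^{\pm\hbar})$ and extracting the $\IZ_L$-invariant part under $\hbar \to \hbar + 2\pi\mathrm{i}\,n$, $n = 0,1,\ldots,L-1$. The key observation is that $\mathrm{e}^{\hbar}$ is invariant under $\hbar \to \hbar + 2\pi\mathrm{i}\,n$, while after the rescaling $\hbar \to \hbar/L$ it becomes $\mathrm{e}^{\hbar/L}$, which is multiplied by $\omega^n = \mathrm{e}^{2\pi\mathrm{i}\,n/L}$ under the shift. So the invariant part is what survives averaging over the $L$-th roots of unity acting on $\mathrm{e}^{\hbar/L}$.

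First I would treat the case of $\mathrm{e}^{+x}/(1-\mathrm{e}^{+\hbar})$. After the substitution $x \to x + I\hbar/L$, $\hbar \to \hbar/L$, this becomes $\mathrm{e}^{x}\mathrm{e}^{I\hbar/L}/(1-\mathrm{e}^{\hbar/L})$. I would expand $1/(1-\mathrm{e}^{\hbar/L})$ — more precisely, I would use the geometric-series identity $\frac{1}{1-t} = \frac{1+t+\cdots+t^{L-1}}{1-t^L}$ with $t = \mathrm{e}^{\hbar/L}$, so that $\frac{\mathrm{e}^{I\hbar/L}}{1-\mathrm{e}^{\hbar/L}} = \frac{\mathrm{e}^{I\hbar/L}(1 + \mathrm{e}^{\hbar/L} + \cdots + \mathrm{e}^{(L-1)\hbar/L})}{1-\mathrm{e}^{\hbar}}$. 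The denominator $1-\mathrm{e}^{\hbar}$ is now manifestly $\IZ_L$-invariant, so I only need the invariant part of the numerator $\sum_{j=0}^{L-1}\mathrm{e}^{(I+j)\hbar/L}$. Under $\hbar \to \hbar + 2\pi\mathrm{i}\,n$ the term $\mathrm{e}^{(I+j)\hbar/L}$ picks up $\omega^{n(I+j)}$, so averaging over $n$ kills every term except those with $I+j \equiv 0 \pmod L$. For $I$ in the range $-L+1 \le I \le L-1$: if $-L+1 \le I \le 0$, the unique $j \in \{0,\ldots,L-1\}$ with $I+j \equiv 0$ is $j = -I$, contributing $\mathrm{e}^{0} = 1$, giving invariant part $\frac{1}{1-\mathrm{e}^{\hbar}} = \frac{\mathrm{e}^{\hbar/L}\cdot\mathrm{e}^{-x}\cdot\mathrm{e}^{x}}{\,\cdots}$ — better stated, it gives $\frac{\mathrm{e}^{x}}{1-\mathrm{e}^{\hbar}}$, and one checks this equals $\frac{\mathrm{e}^{x}}{1-\mathrm{e}^{\hbar}}$ as claimed (I note $\mathrm{e}^{x}/(1-\mathrm{e}^{\hbar})$ is already $\IZ_L$-invariant so this is consistent); if $1 \le I \le L-1$, then $j = -I$ is out of range, but $j = L - I \in \{1,\ldots,L-1\}$ works, and its contribution is $\mathrm{e}^{(I + L - I)\hbar/L} = \mathrm{e}^{\hbar}$, giving invariant part $\frac{\mathrm{e}^{x}\mathrm{e}^{\hbar}}{1-\mathrm{e}^{\hbar}} = \frac{\mathrm{e}^{x}}{1-\mathrm{e}^{\hbar}} - \mathrm{e}^{x}$, matching the second case. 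The case $\mathrm{e}^{-x}/(1-\mathrm{e}^{-\hbar})$ follows by the same computation with $x \mapsto -x$, $\hbar \mapsto -\hbar$, $I \mapsto -I$ — and one checks the range condition $-L+1\le I\le L-1$ is symmetric, so the two cases of the conclusion map to each other correctly.

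The main obstacle, such as it is, is purely bookkeeping: getting the range of the residual index $j$ and the sign conventions (the $+$ versus $-$ subscripts, and which value of $I$ falls in which case) exactly right, and making sure the "invariant part" is defined as the honest average $\frac{1}{L}\sum_{n=0}^{L-1}(\cdot)|_{\hbar \to \hbar + 2\pi\mathrm{i}\,n}$ so that the geometric-sum manipulation is legitimate as an identity of formal expressions rather than just of power series. I would also remark that the boundary values $I = 0$ (covered by the first case) and the relation between $I = 0$ and $I = L$ (which would wrap around) explain why the stated range is $-L+1 \le I \le L-1$ and why $I = 0$ belongs to the "unchanged" branch. Given Lemma \ref{lemma:ch_inv} this way, the displayed formula \eqref{ch_def_a1} follows by applying it termwise to each summand of $\chi_{\pm\hbar}^{\mathrm{total}}(\bu_k|_\pm)$ built from \eqref{chi_A1}, reading off which shifts $(i-1)$ or $(I_a-1)$ land in which range — but that is the content of the surrounding text, not of the lemma itself.
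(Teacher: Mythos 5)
Your argument is the same as the paper's: the entire published proof is the identity $\frac{1}{1-\mathrm{e}^{\hbar/L}}=\frac{1+\mathrm{e}^{\hbar/L}+\cdots+\mathrm{e}^{(L-1)\hbar/L}}{1-\mathrm{e}^{\hbar}}$, and your treatment of the $+$ sign simply makes explicit the root-of-unity averaging that the paper leaves implicit (the denominator $1-\mathrm{e}^{\hbar}$ is $\IZ_L$-invariant, and in the numerator $\sum_{j=0}^{L-1}\mathrm{e}^{(I+j)\hbar/L}$ only the term with $I+j\equiv 0 \pmod L$ survives, namely $1$ for $-L+1\le I\le 0$ and $\mathrm{e}^{\hbar}$ for $1\le I\le L-1$). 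That part is correct and complete.

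The one slip is your reduction of the $-$ case. After the shift, the minus-sign expression is $\mathrm{e}^{-x-I\hbar/L}/\bigl(1-\mathrm{e}^{-\hbar/L}\bigr)$, which is the plus-sign expression evaluated at $x'=-x$, $\hbar'=-\hbar$ with the \emph{same} $I$, not with $I\mapsto -I$: substituting $I\mapsto -I$ as you wrote would produce the exponent $-x+I\hbar/L$ instead of $-x-I\hbar/L$. Moreover, your remark that ``the two cases of the conclusion map to each other'' is exactly the symptom of this slip: if the branches swapped under the sign flip, the minus-sign statement would acquire its extra term $-\mathrm{e}^{-x}$ for $-L+1\le I\le -1$, contradicting the lemma, in which both signs lose the same branch precisely for $1\le I\le L-1$. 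The fix is immediate: either keep $I$ fixed in the reduction (then the plus-sign result at $(x',\hbar',I)$ is verbatim the minus-sign claim), or repeat your direct computation with $t=\mathrm{e}^{-\hbar/L}$, where the surviving numerator term for $1\le I\le L-1$ is $\mathrm{e}^{-\hbar}$ and $\frac{\mathrm{e}^{-x}\mathrm{e}^{-\hbar}}{1-\mathrm{e}^{-\hbar}}=\frac{\mathrm{e}^{-x}}{1-\mathrm{e}^{-\hbar}}-\mathrm{e}^{-x}$, giving the same $I$-branching as for the plus sign. With that correction your proof is complete and matches the paper's.
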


\begin{proof}
By
$$
\frac{1}{1 - \mathrm{e}^{\, \frac{\hbar}{L}}}
=
\frac{
1 + 
\mathrm{e}^{\,    \frac{\hbar}{L}} + 
\mathrm{e}^{\, 2\,\frac{\hbar}{L}} + 
\cdots + 
\mathrm{e}^{\, (L-1) \, \frac{\hbar}{L}}
}{
1-\mathrm{e}^{\, \hbar}
},
$$
the lemma is proved.
\end{proof}

Symmetrizing in the variables $\bu_k$, the contribution of a defect inserted 
at the north (resp. south) pole of $S_{\hbar}^{\, 2}$ to the integrand of the 
JK contour integral 
\footnote{\,
In the sequel, we will also simply say 
\textit{\lq the (orbifold) defect\rq}, rather than 
\textit{\lq the contribution of the (orbifold) defect to the integrand of the JK contour integral representation of the gauge theory partition function\rq}.
}
is $\psi_{\bI_k}^{(L)}(\bu_k|_{+};\bm_L)$ 
(resp. $\psi_{\bI_k^{\vee}}^{(L)}(\bu_k|_{-};\bm_L^{\vee})$ 
with $I_a^{\vee}=L-I_a+1$ and $m_i^{\vee}=m_{L-i+1}$), where
\begin{align}
\psi_{\bI_k}^{(L)}(\bu_k;\bm_L)=
\mathop{\Sym}_{\bu_k}
\frac{\prod_{a=1}^k\ll \prod_{i=1}^{I_a-1}\ll u_a-m_i-\frac{\gamma}{2}\rr 
\cdot
\prod_{i=I_a+1}^{L}\ll -u_a+m_i-\frac{\gamma}{2}\rr \rr }
{\prod_{a<b}^k u_{b,a}\ll u_{a,b}+\gamma\rr }
\label{def_psi_a1}
\end{align}
Here $\mathop{\Sym}_{\, \bu_k}$ stands for the symmetrization of 
a function $f(\bu_k)$ in the variables $\bu_k$, 
$$
\mathop{\Sym}_{\bu_k}\, f(\bu_k)=
\sum_{\sigma \in \mathfrak{S}_k} f(u_{\sigma(1)},\ldots,u_{\sigma(k)}),
$$
and $\mathfrak{S}_k$ is the symmetric group of degree $k$.

\begin{prop}[\cite{Nekrasovtalk:1,Nekrasovtalk:2,Bullimore:2017lwu}]
\label{prop:def_a1}
Consider a normalization of the $A_1$ quiver orbifold defect as
\begin{align}
\label{defect_A1}
\widehat{\psi}_{\bI_k}^{(L)}(\bu_k;\bm_L)&=
(-1)^{kL+\sum_{a=1}^k I_a}\,
\prod_{a<b}^{k}\ll \gamma^{\, 2}-u_{a,b}^{\, 2}\rr \times 
\psi_{\bI_k}^{(L)}(\bu_k;\bm_L)
\\
&=
\mathop{\Sym}_{\bu_k}\, \omega_{\bI_k}^{(L)}(\bu_k;\bm_L),
\nonumber
\end{align}
where 
\begin{align}
\omega_{\bI_k}^{(L)}(\bu_k;\bm_L) =
\prod_{a=1}^k\ll \prod_{i=1}^{I_a-1}\ll u_a-m_i-\frac{\gamma}{2}\rr 
\times
\prod_{i=I_a+1}^{L}\ll u_a-m_i+\frac{\gamma}{2}\rr \rr 
\times 
\prod_{a<b}^k \frac{u_{a,b}-\gamma}{u_{a,b}}
\label{su2_sk}
\end{align}
Then, the defect $\widehat{\psi}_{\bI_k}^{(L)}(\bu_k;\bm_L)$ gives 
the coordinate Bethe wavefunction of $\mathfrak{su}(2)$ XXX 
spin-$\frac{1}{2}$ chain. 
\end{prop}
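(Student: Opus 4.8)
The plan is to verify that the symmetrized function $\widehat{\psi}_{\bI_k}^{(L)}(\bu_k;\bm_L)$ coincides with the known coordinate Bethe wavefunction of the $\mathfrak{su}(2)$ XXX spin-$\tfrac12$ chain, using the standard Bethe-ansatz expression
\begin{align}
\Psi_{\bI_k}(\bu_k;\bm_L)
=\sum_{\sigma\in\mathfrak{S}_k}
\prod_{a<b}\frac{u_{\sigma(a)}-u_{\sigma(b)}-\gamma}{u_{\sigma(a)}-u_{\sigma(b)}}
\prod_{a=1}^k
\Bigl(\prod_{i=1}^{I_a-1}\bigl(u_{\sigma(a)}-m_i-\tfrac{\gamma}{2}\bigr)
\prod_{i=I_a+1}^{L}\bigl(u_{\sigma(a)}-m_i+\tfrac{\gamma}{2}\bigr)\Bigr).
\nonumber
\end{align}
The first step is purely bookkeeping: start from the orbifold-modified total character. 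Apply the substitutions \eqref{orb_rep_su2} termwise to the pieces \eqref{chi_A1}, then apply Lemma \ref{lemma:ch_inv} to each summand $\mathrm{e}^{\pm x}/(1-\mathrm{e}^{\pm\hbar})$ to extract the $\IZ_L$-invariant part; the shift $m_i\to m_i+(i-1)\hbar/L$ and $u_a|_\pm\to u_a|_\pm+(I_a-1)\hbar/L$ makes the relevant integer $I=I_a-i$ (up to signs), so the threshold $1\le I\le L-1$ versus $-L+1\le I\le 0$ in the lemma selects exactly the ranges $i<I_a$ and $i>I_a$ that appear in \eqref{ch_def_a1}. This reproduces \eqref{ch_def_a1}, and hence \eqref{def_psi_a1} after passing from the character to the partition-function building block via \eqref{w_v}–\eqref{rp_ch_Z} (with the $\hbar\to0$ limit taken as in Section~\ref{sec:bethe_gauge}).

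The second step is the normalization and the identification of $\omega_{\bI_k}^{(L)}$. Multiplying \eqref{def_psi_a1} by the prefactor $(-1)^{kL+\sum I_a}\prod_{a<b}(\gamma^2-u_{a,b}^2)$ clears the denominator $\prod_{a<b}u_{b,a}(u_{a,b}+\gamma)$, leaving $\prod_{a<b}(u_{a,b}-\gamma)/u_{a,b}$ up to the sign which is absorbed into the overall prefactor and into flipping $-u_a+m_i-\tfrac{\gamma}{2}$ to $u_a-m_i+\tfrac{\gamma}{2}$ in the second product; this is exactly \eqref{su2_sk}. So after these two steps, $\widehat{\psi}_{\bI_k}^{(L)}=\mathop{\Sym}_{\bu_k}\omega_{\bI_k}^{(L)}=\Psi_{\bI_k}$ by inspection, since the summand of $\Psi_{\bI_k}$ above is precisely $\omega_{\bI_k}^{(L)}$ evaluated at the permuted variables.

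The remaining — and in my view the \emph{main} — step is to justify that $\Psi_{\bI_k}$ is genuinely \emph{the} coordinate Bethe wavefunction, i.e.\ to match it against the $S$-matrix / two-body scattering structure of the XXX chain. Concretely, I would check that the factor $\prod_{a<b}\tfrac{u_{\sigma(a)}-u_{\sigma(b)}-\gamma}{u_{\sigma(a)}-u_{\sigma(b)}}$ reproduces the correct product of XXX $R$-matrix amplitudes $S(u_a,u_b)=\tfrac{u_a-u_b-\gamma}{u_a-u_b+\gamma}$ up to the symmetric normalization $\prod_{a<b}(u_{a,b}+\gamma)$ that was divided out, and that the single-particle factors $\prod_{i<I_a}(u_a-m_i-\tfrac{\gamma}{2})\prod_{i>I_a}(u_a-m_i+\tfrac{\gamma}{2})$ are the standard one-magnon plane-wave amplitudes at inhomogeneities $m_i$, with $I_a$ recording the magnon position. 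This is where one must invoke the explicit form of the coordinate Bethe wavefunction for the inhomogeneous XXX spin-$\tfrac12$ chain (e.g.\ as in \cite{gaudin.book,essler.book}), rescaling rapidities by $\gamma$ and identifying the $m_i$ with inhomogeneities as in Table~\ref{bg_dictionary}; the bookkeeping of signs and of the overall $\bu$-symmetric normalization is the only real subtlety, since any $\bu$-symmetric prefactor is physically irrelevant (it drops out of the Bethe equations and correlation functions) but must be tracked to land on the stated formula \eqref{su2_sk} exactly. Once this identification is in place, the proposition follows, and I would note that it is essentially the content of \cite{Nekrasovtalk:1,Nekrasovtalk:2,Bullimore:2017lwu}, re-derived here in the present notation.
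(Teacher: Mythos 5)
Your proposal is correct and follows essentially the same route as the paper: your first two steps (extracting \eqref{ch_def_a1} via Lemma \ref{lemma:ch_inv} and then clearing the denominator of \eqref{def_psi_a1} with the symmetric prefactor $(-1)^{kL+\sum_a I_a}\prod_{a<b}(\gamma^2-u_{a,b}^2)$, which indeed turns each summand into \eqref{su2_sk}) reproduce exactly the derivation the paper gives in the text leading up to the proposition. The paper offers no independent proof of the final identification — it quotes it from \cite{Nekrasovtalk:1,Nekrasovtalk:2,Bullimore:2017lwu} — and your third step is precisely that comparison with the standard inhomogeneous coordinate Bethe wavefunction (correct relative $S$-matrix amplitudes $\frac{u_{a,b}-\gamma}{u_{a,b}+\gamma}$ up to a $\bu$-symmetric normalization), so the approaches coincide.
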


Note that the defects inserted at the north pole and the south pole of 
$S_{\hbar}^{\, 2}$ are given by
$$
\widehat{\psi}_{\bI_k}^{(L)} 
\ll \bu_k|_{+};\bm_L \rr \ \
\textrm{and}\ \
\widehat{\psi}_{\bI_k^{\vee}}^{(L)} 
\ll \bu_k|_{-};\bm_L^{\vee} \rr,
$$
respectively, where $I_a^{\vee}=L-I_a+1$ and $m_i^{\vee}=m_{L-i+1}$.

\begin{figure}[t]
 \centering
  \includegraphics[width=100mm]{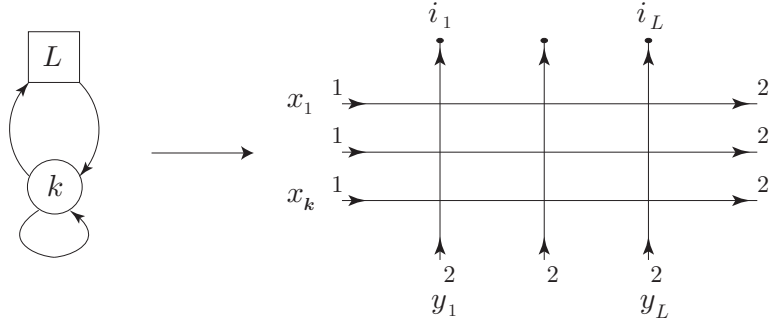}
\caption{
The $\mathfrak{su} (2)$ lattice configuration which describes 
the orbifold defect \eqref{defect_A1} for the $A_1$ quiver. 
Here $k \le L$, and the indices 
$i_{\ell} \in \{1,2\}$, $\ell=1,\ldots,L$, represent spin-states, or equivalently, \textit{colours}.
}
\label{fig:a1_lattice}
\end{figure}

\begin{remark}
Proposition \ref{prop:def_a1} implies that, setting $u_a = x_a + 1/2$, $m_i=y_i$, 
and $\gamma=1$, the defect $\widehat{\psi}_{\bI_k}^{(L)}(\bu_k;\bm_L)$ 
in \eqref{defect_A1} agrees with the partition function 
\eqref{am_lattice_pf} for the lattice 
configuration in Figure \ref{fig:a1_lattice} of the rational $\mathfrak{su}(2)$ 
six-vertex model, where the set $\bI_L$ labels the positions $\ell$ of colours 
$i_{\ell} \in \{1,2\}$, and the set $\bI_k$ labels the positions $k$ of 
colour $1$.
\end{remark}

\begin{remark}\label{rem:trig_A1_def}
We have derived $\widehat{\psi}_{\bI_k}^{(L)}(\bu_k;\bm_L)$ 
from the character in \eqref{ch_def_a1}, using \eqref{rp_ch_Z}. 
Using \eqref{rp_ch_Z_K} instead of \eqref{rp_ch_Z}, we obtain the coordinate Bethe 
wavefunction of the $\mathfrak{su}(2)$ XXZ spin-$\frac{1}{2}$ chain, 
and \eqref{su2_sk} is replaced by 
\begin{multline}
\omega_{\bI_k}^{\mathrm{K}, (L)}(\bu_k;\bm_L)=
\prod_{a=1}^k\ll \prod_{i=1}^{I_a-1}\left[u_a-m_i-\frac{\gamma}{2}\right]
\times 
\prod_{i=I_a+1}^{L}\left[u_a-m_i+\frac{\gamma}{2}\right]\rr 
\times 
\prod_{a<b}^k \frac{\left[u_{a,b}-\gamma\right]}{\left[u_{a,b}\right]},
\label{K_su2_sk}
\end{multline}
where $[x]=2\sinh (x/2)$.
\end{remark}

From the orbifold defect $\widehat{\psi}_{\bI_k}^{(L)}(\bu_k;\bm_L)$, 
one obtains the $\mathfrak{su}(2)$ six-vertex model partial domain wall partition
function (DWPF) \cite{Foda:2012yg}. 
In Appendix \ref{app:A1PDWPF}, we prove the following proposition as a corollary 
of Propositions \ref{prop:def_a1} and \ref{prop:det_su2}.

\begin{prop}[$A_1$ partial DWPF]\label{prop:det_su2_pDW}
Summing over the ordered set $\bI_k$ 
in the orbifold defect $\widehat{\psi}_{\bI_k}^{(L)}(\bu_k;\bm_L)$, 
we define a partition function
\begin{align}
\widehat{\mathcal{Z}}_{k}(\bu_k;\bm_L)=
\sum_{\bI_k\subset \{1,\ldots,L\}}
\widehat{\psi}_{\bI_k}^{(L)}(\bu_k;\bm_L) 
\label{wave_A1}
\end{align}
Then, the partition function $\widehat{\mathcal{Z}}_{k}(\bu_k;\bm_L)$ 
agrees with the $\mathfrak{su}(2)$ 
six-vertex model partial DWPF, 
which has the determinant expression \cite{Foda:2012yg},
\begin{align}
\begin{split}
\widehat{\mathcal{Z}}_k(\bu_k;\bm_L)&=
\frac{\prod_{a=1}^k\prod_{i=1}^L
\ll u_a-m_i-\frac{\gamma}{2}\rr 
\ll u_a-m_i+\frac{\gamma}{2}\rr }
{\prod_{a<b}^k (u_b-u_a) \cdot \prod_{i<j}^L (m_i-m_j)}
\\
&\qquad \times
\left|
\begin{array}{ccc}
\frac{1}{\ll u_1-m_1-\frac{\gamma}{2}\rr 
\ll u_1-m_1+\frac{\gamma}{2}\rr }
&\cdots&
\frac{1}{\ll u_1-m_L-\frac{\gamma}{2}\rr 
\ll u_1-m_L+\frac{\gamma}{2}\rr }\\
\vdots&{}&\vdots\\
\frac{1}{\ll u_k-m_1-\frac{\gamma}{2}\rr 
\ll u_k-m_1+\frac{\gamma}{2}\rr }
&\cdots&
\frac{1}{\ll u_k-m_L-\frac{\gamma}{2}\rr 
\ll u_k-m_L+\frac{\gamma}{2}\rr }\\
m_1^{L-k-1} & \cdots &m_L^{L-k-1}\\
\vdots&{}&\vdots\\
m_1^{0} & \cdots &m_L^{0}
\end{array}
\right|,
\end{split}
\end{align}
or equally \cite{Kostov:2012jr,Kostov:2012yq}, 
\begin{align}
\begin{split}
\widehat{\mathcal{Z}}_k(\bu_k;\bm_L)&=
\frac{\prod_{a=1}^k\prod_{i=1}^L
\ll u_a-m_i+\frac{\gamma}{2}\rr }
{\prod_{a<b}^k (u_b-u_a)}
\\
&\qquad \times
\det \ll 
u_a^{b-1}
\prod_{i=1}^L \frac{u_a-m_i-\frac{\gamma}{2}}{u_a-m_i+\frac{\gamma}{2}}
-(u_a-\gamma)^{b-1}\rr _{a,b=1,\ldots,k}
\end{split}
\end{align}
\end{prop}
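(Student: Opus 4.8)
The plan is to prove Proposition \ref{prop:det_su2_pDW} by combining the coordinate-Bethe-wavefunction interpretation of the orbifold defect (Proposition \ref{prop:def_a1}) with a known determinant formula for partial domain wall partition functions. First I would recall from Proposition \ref{prop:def_a1} and the remark following it that, upon the specialization $u_a=x_a+1/2$, $m_i=y_i$, $\gamma=1$, each summand $\widehat{\psi}_{\bI_k}^{(L)}(\bu_k;\bm_L)$ equals the partition function of the $\mathfrak{su}(2)$ six-vertex lattice configuration in Figure \ref{fig:a1_lattice} in which the ordered set $\bI_k$ records the positions of colour~$1$ among $L$ columns. Summing over all $\bI_k\subset\{1,\dots,L\}$ therefore builds up the partition function in which $k$ of the $L$ columns carry colour~$1$ with no constraint on which columns; this is precisely the $\mathfrak{su}(2)$ six-vertex partial DWPF of \cite{Foda:2012yg} (a DWPF on a $k\times L$ rectangle with the ``missing'' $L-k$ rows summed over, equivalently the overlap of a $k$-magnon off-shell Bethe state with the fully polarized state on $L$ sites). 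Thus \eqref{wave_A1} is, as a function identity, the partial DWPF, and it remains only to quote its closed evaluations.

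Next I would invoke Proposition \ref{prop:det_su2} (the DWPF/partial-DWPF determinant identity cited in Appendix \ref{app:A1PDWPF}) to replace the combinatorial sum by the Izergin--Korepin-type determinant: the partial DWPF of the rational six-vertex model on $k$ Bethe roots $\bu_k$ and $L$ inhomogeneities $\bm_L$ equals the stated ratio of $\prod_{a,i}(u_a-m_i-\tfrac\gamma2)(u_a-m_i+\tfrac\gamma2)$ over the two Vandermonde factors $\prod_{a<b}(u_b-u_a)\prod_{i<j}(m_i-m_j)$, times the $(k+\text{extra rows})\times L$ determinant whose first $k$ rows are $1/\big((u_a-m_i-\tfrac\gamma2)(u_a-m_i+\tfrac\gamma2)\big)$ and whose remaining rows are the monomials $m_i^{L-k-1},\dots,m_i^{0}$ (the ``frozen'' rows that implement the absence of $L-k$ magnon lines). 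This is the standard way a partial DWPF degenerates from the Izergin determinant, and no new computation is needed beyond citing \cite{Foda:2012yg}.

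Finally, for the second, manifestly polynomial expression, I would cite \cite{Kostov:2012jr,Kostov:2012yq}: expanding the $L\times L$ determinant along its last $L-k$ monomial rows, or equivalently rewriting the rational entries $1/((u_a-m_i-\tfrac\gamma2)(u_a-m_i+\tfrac\gamma2))$ via partial fractions in $m_i$ and resumming the Lagrange-interpolation kernel, collapses the $L\times L$ determinant to a $k\times k$ determinant. After pulling out the factor $\prod_{a,i}(u_a-m_i+\tfrac\gamma2)/\prod_{a<b}(u_b-u_a)$, the remaining matrix has entries $u_a^{b-1}\prod_{i=1}^L\frac{u_a-m_i-\gamma/2}{u_a-m_i+\gamma/2}-(u_a-\gamma)^{b-1}$, $a,b=1,\dots,k$; the shift $u\mapsto u-\gamma$ in the subtracted term is exactly the crossing/shift symmetry of the rational $R$-matrix weights. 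Assembling the three steps completes the proof.

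\smallskip
The main obstacle I anticipate is not any single step but the bookkeeping in the second step: one must check carefully that summing $\widehat{\psi}_{\bI_k}^{(L)}$ over \emph{all} $\bI_k$ (rather than over configurations with fixed boundary data on every edge) reproduces exactly the partial DWPF of \cite{Foda:2012yg} with the correct normalization and the correct number $L-k-1$ of monomial rows, and that the sign $(-1)^{kL+\sum_a I_a}$ absorbed into $\widehat{\psi}_{\bI_k}^{(L)}$ in \eqref{defect_A1} is compatible with the overall sign in the determinant. Once the dictionary between the orbifold-defect sum and the lattice-model partial DWPF is pinned down, both determinant formulas follow directly from the cited references, so the proof is genuinely a corollary of Propositions \ref{prop:def_a1} and \ref{prop:det_su2} as stated.
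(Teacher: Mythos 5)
Your proposal is essentially correct, but it proves the statement by a genuinely different route than the paper. You identify each $\widehat{\psi}_{\bI_k}^{(L)}(\bu_k;\bm_L)$ with a fixed-boundary lattice configuration (via Proposition \ref{prop:def_a1} and the remark following it), observe that summing over $\bI_k$ frees the top boundary and hence reproduces the partial DWPF, and then quote the determinant evaluations of \cite{Foda:2012yg} and \cite{Kostov:2012jr,Kostov:2012yq}. The paper's Appendix \ref{app:A1PDWPF} instead gives a self-contained algebraic derivation that never invokes the vertex-model dictionary: Lemma \ref{lemm:limit_su2} establishes polynomiality of degree $L-1$ in each $u_a$, the decoupling limit \eqref{limit_su2w} expresses $\widehat{\mathcal{Z}}_k$ as a normalized $u_{k+1},\ldots,u_L\to\infty$ limit of the ``square'' case $\widehat{\mathcal{Z}}_L$, Proposition \ref{prop:cond_su2} verifies directly on the symmetrized expression \eqref{su2_sk} that $\widehat{\mathcal{Z}}_L$ satisfies Korepin's four conditions \cite{Korepin:1982gg}, uniqueness then identifies it with the Izergin/Kostov determinants (Proposition \ref{prop:det_su2}), and the partial-DWPF formulas follow by applying the same decoupling limit to those determinants. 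What your route buys is brevity and a transparent physical interpretation (sum over magnon positions equals overlap with the summed boundary), at the price of resting the key identification on Remark 4.5 --- which the paper states without detailed proof and which includes the $\gamma=1$ specialization and the sign/normalization in \eqref{defect_A1} that you flag yourself --- and on importing the partial-DWPF determinant from the literature. What the paper's route buys is independence from that dictionary (the appendix in effect provides evidence for it) and an explicit mechanism, the $u\to\infty$ decoupling limit with the $1/(L-k)!$ and $u_a^{1-L}$ normalization, which is absent from your sketch but is exactly how both determinant expressions for $k<L$ are produced from the $k=L$ case. If you wanted to make your version airtight within this paper, you would need to either prove the Remark 4.5 identification (including signs, normalization, and restoring general $\gamma$ by homogeneity) or replace it by the algebraic limit argument the paper uses.
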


\subsection{Orbifold defect for a simple $A_M$ quiver}
\label{subsec:AM_l_quiver}

We extend the above construction of the $A_1$ quiver orbifold defects to 
the simple $A_M$ linear quiver in Figure \ref{fig:am_quiver_l}, 
with the matter content in Table \ref{suM_matt}. 
The equivariant characters are given in \eqref{chi_AM}, with 
$L_{\, p}=0$, $p=1,\ldots,M-1$, and $k_1\le k_2 \le \ldots \le k_M \le L$. 
To construct orbifold defects which break the gauge symmetry 
$U(k_1)\times \cdots \times U(k_M)$ and the flavor symmetry $U(L)$ to 
a maximal torus, 
as a generalization of the change of parameters \eqref{orb_rep_su2}, 
we consider
\begin{align}
m_i\ \to\ m_i + \ll i-1\rr \frac{\hbar}{L},
\qquad
u_a^{(p)}\big|_{\pm}\ \to\ u_a^{(p)}\big|_{\pm} + \ll I_a^{(p)}-1\rr \frac{\hbar}{L},
\qquad
\hbar\ \to\ \frac{\hbar}{L},
\label{orb_rep_suM}
\end{align}
in the total equivariant character $\chi_{+\hbar}^{\mathrm{total}}(\bu_{\bk}^M|_{+})$ 
or $\chi_{-\hbar}^{\mathrm{total}}(\bu_{\bk}^M|_{-})$ in \eqref{t_ind_d}, 
where
\begin{equation} 
\bI_{k_1}^{(1)} \subset \bI_{k_2}^{(2)} \subset \cdots \subset 
\bI_{k_M}^{(M)} \subset \bI_L=\{1,\ldots,L\}, \quad
\bI_{k_{\, p}}^{(p)}=\big\{I_1^{(p)},\ldots,I_{k_{\, p}}^{(p)}\big\}, \quad
I_a^{(p)}<I_{a+1}^{(p)},
\label{nest_am}
\end{equation}
Taking the ${\IZ}_L$ invariant part of the total equivariant 
character under $\hbar \to \hbar + 2 \pi \mathrm{i} n$, $n=0,1,\ldots,L-1$, from Lemma \ref{lemma:ch_inv} one finds
\begin{align}
\begin{split}
&
\chi_{\pm\hbar}^{\mathrm{total}} \ll \bu_{\bk}^M|_{\pm} \rr 
\\
& -
\sum_{a=1}^{k_M} \ll 
\sum_{i=1}^{I_a^{(M)}-1}\mathrm{e}^{\pm u_a^{(M)}|_{\pm} \mp m_i-\frac{\gamma}{2}}
+
\sum_{i=I_a^{(M)}+1}^{L}\mathrm{e}^{\mp u_a^{(M)}|_{\pm} \pm m_i-\frac{\gamma}{2}}
\rr 
\\
& +
\sum_{p=1}^M \sum_{a<b}^{k_{\, p}} \ll \mathrm{e}^{\mp u_{a,b}^{(p)}|_{\pm}} + 
\mathrm{e}^{\pm u_{a,b}^{(p)}|_{\pm}+\gamma}
\rr
\\ 
& -
\sum_{p=1}^{M-1} \sum_{a=1}^{k_{\, p}} \ll 
\sum_{b=1}^{\widetilde{I}_a^{(p)}-1}
\mathrm{e}^{\pm u_a^{(p)}|_{\pm} \mp u_b^{(p+1)}|_{\pm}-\frac{\gamma}{2}}
+
\sum_{b=\widetilde{I}_a^{(p)}+1}^{k_{p+1}}
\mathrm{e}^{\mp u_a^{(p)}|_{\pm} \pm u_b^{(p+1)}|_{\pm}-\frac{\gamma}{2}}
\rr,
\end{split}
\end{align}
where the set $\widetilde{\bI}_a^{(p)}$ is defined by the map
\begin{align}
\bI_{k_{\, p}}^{(p)} \subset \bI_{k_{p+1}}^{(p+1)}
\quad \to \quad 
\widetilde{\bI}_{k_{\, p}}^{(p)} =
\big\{\widetilde{I}_1^{(p)},\ldots,\widetilde{I}_{k_{\, p}}^{(p)}\big\} \subset 
\{1,\ldots,k_{p+1}\}, 
\label{rearrange_I}
\end{align}
which can be explained as follows. 
$\bI_{k_{\, p}}^{(p)}$             is a subset in the set $\bI_{k_{p+1}}^{(p+1)}$, and
$\widetilde{\bI}_{k_{\, p}}^{(p)}$ is a subset in the set $\{1,\ldots,k_{p+1}\}$. 
Mapping the set $\bI_{k_{p+1}}^{(p+1)}$ to the set $\{1,\ldots,k_{p+1}\}$ using 
the map 
$I_{a}^{(p+1)} \mapsto a$, $a = 1, \ldots, p+1$,  
induces a map from the subset 
$\bI_{k_{\, p}}^{(p)}$ to the subset 
$\widetilde{\bI}_{k_{\, p}}^{(p)}$, which defines 
$\widetilde{\bI}_{k_{\, p}}^{(p)}$.  
By \eqref{rp_ch_Z}, after symmetrization in the 
vector multiplet scalars $\bu_{k_1}, \ldots, \bu_{k_M}$, one obtains the defect
\begin{multline}
\label{def_psi_aM}
\psi_{\bI_{\bk}^M}^{(L)}
\ll \bu_{\bk}^M; \bm_L \rr 
\\
=
\mathop{\Sym}_{\bu_{k_1}^{(1)},\ldots,\bu_{k_M}^{(M)}}
\prod_{p=1}^{M-1}
\frac{\prod_{a=1}^{k_{\, p}}
\ll \prod_{b=1}^{\widetilde{I}_a^{(p)}-1}
\ll u_a^{(p)}-u_b^{(p+1)}-\frac{\gamma}{2}\rr 
\cdot
\prod_{b=\widetilde{I}_a^{(p)}+1}^{k_{p+1}}
\ll -u_a^{(p)}+u_b^{(p+1)}-\frac{\gamma}{2}\rr \rr }
{\prod_{a<b}^{k_{\, p}} u_{b,a}^{(p)} \ll u_{a,b}^{(p)}+\gamma\rr}
\\
\times 
\frac{\prod_{a=1}^{k_M}
\ll \prod_{i=1}^{I_a^{(M)}-1}
\ll u_a^{(M)}-m_i-\frac{\gamma}{2}\rr 
\cdot
\prod_{i=I_a^{(M)}+1}^{L}
\ll -u_a^{(M)}+m_i-\frac{\gamma}{2}\rr \rr }
{\prod_{a<b}^{k_M} u_{b,a}^{(M)}\ll u_{a,b}^{(M)}+\gamma\rr},
\end{multline}
which generalizes $\psi_{\bI_k}^{(L)}(\bu_k;\bm_L)$ in \eqref{def_psi_a1}, 
where $\bI_{\bk}^M=\{\bI_{k_1}^{(1)},\ldots,\bI_{k_M}^{(M)}\}$, and 
$$
\mathop{\Sym}_{\bu_{k_1}^{(1)},\ldots,\bu_{k_M}^{(M)}}=
\mathop{\Sym}_{\bu_{k_1}^{(1)}}\cdots\mathop{\Sym}_{\bu_{k_M}^{(M)}}
$$
Using a normalization similar to that in \eqref{defect_A1}, we find the following proposition.

\begin{prop}\label{prop:def_aM}
The orbifold defect, for the simple $A_M$ quiver in Figure \ref{fig:am_quiver_l}, 
\begin{align}
\widehat{\psi}_{\bI_{\bk}^M}^{(L)}
\ll \bu_{\bk}^M;\bm_L \rr =
\mathop{\Sym}_{\bu_{k_1}^{(1)},\ldots,\bu_{k_M}^{(M)}} \prod_{p=1}^{M-1} 
\omega_{\widetilde{\bI}_{k_{\, p}}^{(p)}}^{(k_{p+1})}
\ll \bu_{k_{\, p}}^{(p)};\bu_{k_{p+1}}^{(p+1)} \rr  
\times
\omega_{\bI_{k_M}^{(M)}}^{(L)}
\ll \bu_{k_M}^{(M)};\bm_{L} \rr,
\label{defect_AM}
\end{align}
gives the nested coordinate Bethe wavefunction (see e.g. \cite{Mestyan:2017xyk}) 
in the $\mathfrak{su}(M+1)$ XXX spin-chain with spins in the fundamental
representation, where $\omega_{\bI_k}^{(L)}(\bu_k;\bm_L)$ is defined in 
\eqref{su2_sk}.
\end{prop}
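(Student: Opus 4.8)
The plan is to reduce Proposition \ref{prop:def_aM} to the already-established $A_1$ case, Proposition \ref{prop:def_a1}, by exhibiting the $A_M$ orbifold defect $\widehat{\psi}_{\bI_{\bk}^M}^{(L)}(\bu_{\bk}^M;\bm_L)$ in \eqref{defect_AM} as an iterated product of $A_1$-type wavefunctions $\widehat{\psi}^{(\cdot)}_{\cdot}$, one for each arrow of the $A_M$ quiver, and then invoking the standard recursive (nesting) construction of the $\mathfrak{su}(M+1)$ coordinate Bethe wavefunction. First I would recall that, by \eqref{ch_def_a1}--\eqref{su2_sk}, a single factor $\omega_{\widetilde{\bI}_{k_{\, p}}^{(p)}}^{(k_{p+1})}(\bu_{k_{\, p}}^{(p)};\bu_{k_{p+1}}^{(p+1)})$ is \emph{formally identical} to the $A_1$ building block $\omega_{\bI_k}^{(L)}(\bu_k;\bm_L)$ of \eqref{su2_sk}, with the substitution $k\to k_{\, p}$, $L\to k_{p+1}$, $\bu_k\to\bu_{k_{\, p}}^{(p)}$, $\bm_L\to\bu_{k_{p+1}}^{(p+1)}$, and $\bI_k\to\widetilde{\bI}_{k_{\, p}}^{(p)}$; similarly the last factor is the genuine $A_1$ block with the physical inhomogeneities $\bm_L$. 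The nested chain of inclusions \eqref{nest_am} together with the relabelling map \eqref{rearrange_I} ensures that the auxiliary ``spin positions'' of level $p$ are interpreted inside the level-$(p+1)$ lattice of size $k_{p+1}$ exactly as required by the Bethe/Gauge dictionary in Table \ref{bg_dictionary}.

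Next I would carry out the character computation that produces \eqref{def_psi_aM}: starting from $\chi_{\pm\hbar}^{\mathrm{total}}(\bu_{\bk}^M|_{\pm})$ built from \eqref{chi_AM} with $L_{\, p}=0$ for $p<M$, apply the shift \eqref{orb_rep_suM} and extract the $\IZ_L$-invariant part via Lemma \ref{lemma:ch_inv}; the only subtlety is that the bifundamental characters $\chi_\hbar^{A^{(p)}}$, $\chi_\hbar^{B^{(p)}}$ contribute the cross terms $u_a^{(p)}-u_b^{(p+1)}$, and the condition $\bI_{k_{\, p}}^{(p)}\subset\bI_{k_{p+1}}^{(p+1)}$ is precisely what makes the sign of the shifted integer $I_a^{(p)}-I_b^{(p+1)}$ split the two cases of the Lemma as $b<\widetilde I_a^{(p)}$ versus $b>\widetilde I_a^{(p)}$. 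Feeding the resulting truncated character through \eqref{rp_ch_Z} and symmetrizing yields \eqref{def_psi_aM}. Multiplying by the normalization factor $\prod_{p=1}^{M}(-1)^{\cdots}\prod_{a<b}^{k_{\, p}}(\gamma^{\,2}-(u_{a,b}^{(p)})^{\,2})$, level by level, turns each $\psi$-block into the corresponding $\omega$-block and gives \eqref{defect_AM}; here one checks the $u_{b,a}^{(p)}\to -u_{a,b}^{(p)}$ and $\gamma^2-u_{a,b}^2$ bookkeeping using the same identity as in the proof of Proposition \ref{prop:def_a1}.

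Finally, to identify \eqref{defect_AM} with the nested coordinate Bethe wavefunction of the $\mathfrak{su}(M+1)$ XXX chain, I would match it against the recursive definition: the $\mathfrak{su}(M+1)$ nested wavefunction is obtained by using the $\mathfrak{su}(2)$ (i.e.\ $A_1$) wavefunction at the outermost level with inhomogeneities $\bm_L$ and level-$M$ rapidities $\bu_{k_M}^{(M)}$, and then recursively treating the $\bu_{k_M}^{(M)}$ as the ``inhomogeneities'' of an $\mathfrak{su}(M)$ problem with rapidities $\bu_{k_{M-1}}^{(M-1)}$, and so on down to $\bu_{k_1}^{(1)}$; the sums over the nested ordered sets in \eqref{nest_am} are exactly the sums over colour positions at each level. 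Since each factor $\omega$ was just shown to be the $A_1$ wavefunction kernel, the product structure in \eqref{defect_AM} reproduces this recursion term by term, and compatibility of the symmetrizations $\Sym_{\bu_{k_1}^{(1)}}\cdots\Sym_{\bu_{k_M}^{(M)}}$ with the nesting is immediate because the blocks at level $p$ depend only on $\bu_{k_{\, p}}^{(p)}$ and $\bu_{k_{p+1}}^{(p+1)}$. I expect the main obstacle to be the combinatorial bookkeeping of the relabelling \eqref{rearrange_I}: one must verify carefully that, after the map $I_a^{(p+1)}\mapsto a$, the truncation cut-offs $\widetilde I_a^{(p)}-1$ and $\widetilde I_a^{(p)}+1$ appearing in \eqref{def_psi_aM} coincide with the colour-counting in the standard nested ansatz, so that no spurious factors of $\bu_{k_{p+1}}^{(p+1)}$ are produced or omitted; once this dictionary is pinned down, the rest follows from Proposition \ref{prop:def_a1} applied $M$ times.
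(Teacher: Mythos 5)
Your proposal is correct and follows essentially the same route as the paper: the paper likewise obtains \eqref{def_psi_aM} by applying the shifts \eqref{orb_rep_suM} with the nested sets \eqref{nest_am} to the total character built from \eqref{chi_AM}, extracting the $\IZ_L$-invariant part via Lemma \ref{lemma:ch_inv} (with the inclusion $\bI_{k_{\, p}}^{(p)}\subset\bI_{k_{p+1}}^{(p+1)}$ governing the case split at $\widetilde{I}_a^{(p)}$ exactly as you describe), passing through \eqref{rp_ch_Z}, and then normalizing as in \eqref{defect_A1} to reach \eqref{defect_AM}. The final identification with the nested coordinate Bethe wavefunction is made in the paper, as in your sketch, by recognizing each $\omega$-factor as the $A_1$ kernel \eqref{su2_sk} evaluated with the level-$(p+1)$ rapidities playing the role of inhomogeneities, i.e.\ by matching the recursive $\mathfrak{su}(2)$-block structure of the nested ansatz (with a pointer to the literature rather than a term-by-term verification).
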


\begin{remark}\label{rem:am_defect_lattice}
Proposition \ref{prop:def_aM} implies that, by the change of variables 
\eqref{mass_trans_lattice}, the orbifold defect \eqref{defect_AM} coincides 
with the partition function \eqref{am_lattice_pf} of a lattice configuration, 
with $L_i=0$, $i=1,\ldots,M-1$, and $L_M=L$, of the rational 
$\mathfrak{su} (M+1)$ XXX vertex model, where the set $\bI_L$ labels 
the positions $\ell$ of all colours $i_{\ell} \in \{1, \ldots, M+1\}$, 
the set $\bI_{k_1}^{(1)}$ labels the positions of colour $1$,
and the set 
$\bI_{k_{\, p}}^{\, (p)} \backslash \bI_{k_{p-1}}^{(p-1)}$, $p=2,\ldots,M$, 
labels the positions of colour $p$.
\end{remark}

\begin{remark}
In \cite{Shenfeld:2013}, the orbifold defect \eqref{defect_AM} was 
geometrically constructed as an element in \textit{a stable basis}
\cite{Maulik:2012wi} in the cotangent bundle of a partial flag 
variety (see also \cite{RTV:2015,Aganagic:2017gsx}).
\end{remark}

\begin{remark}
Similarly to Remark \ref{rem:trig_A1_def},
replacing $\omega_{\bI_k}^{(L)}(\bu_k;\bm_L)$ with 
$\omega_{\bI_k}^{\mathrm{K},(L)}(\bu_k;\bm_L)$ in the orbifold defect \eqref{defect_AM}, 
one obtains the trigonometric expression which coincides with coordinate Bethe wavefunction of the corresponding XXZ spin-chain.
\end{remark}

\begin{prop}\label{prop:det_suM_pDW}
Summing over the ordered sets $\bI_{\bk}^M$, 
with a fixed set $\bI_{k_M}^{(M)}$, 
in the defect \eqref{defect_AM}, we define a partition function
\begin{align}
\widehat{\mathcal{Z}}_{\bI_{k_M}^{(M)}}
\ll \bu_{\bk}^M;\bm_L \rr 
=
\sum_{\bI_{k_1}^{(1)}\subset \bI_{k_2}^{(2)}\subset \ldots \subset \bI_{k_M}^{(M)}}
\widehat{\psi}_{\bI_{\bk}^M}^{(L)}
\ll 
\bu_{\bk}^M; \bm_L 
\rr
\label{wave_AM}
\end{align}
Then, the partition function factorizes into the $A_1$ quiver orbifold defect
in \eqref{defect_A1} and the $\mathfrak{su} (2)$ six-vertex model partial 
DWPF \eqref{wave_A1}, 
\begin{align}
\widehat{\mathcal{Z}}_{\bI_{k_M}^{(M)}}
\ll 
\bu_{\bk}^M; \bm_L \rr 
=
\prod_{p=1}^{M-1} \widehat{\mathcal{Z}}_{k_{\, p}}
\ll \bu_{k_{\, p}}^{(p)};\bu_{k_{p+1}}^{(p+1)} \rr 
\times 
\widehat{\psi}_{\bI_{k_M}^{(M)}}^{(L)}
\ll \bu_{k_M}^{(M)};\bm_L \rr
\label{wave_AM_factor}
\end{align}
Further,
\begin{align}
\sum_{\bI_{k_M}^{(M)} \subset \{1,\ldots,L\}}
\widehat{\mathcal{Z}}_{\bI_{k_M}^{(M)}}
\ll \bu_{\bk}^M;\bm_L \rr 
=
\prod_{p=1}^{M-1} \widehat{\mathcal{Z}}_{k_{\, p}}
\ll \bu_{k_{\, p}}^{(p)};\bu_{k_{p+1}}^{(p+1)} \rr  
\times 
\widehat{\mathcal{Z}}_{k_M}
\ll \bu_{k_M}^{(M)};\bm_L \rr
\label{wave_AM_factor_p}
\end{align}
\end{prop}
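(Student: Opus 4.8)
The plan is to prove the factorisation \eqref{wave_AM_factor} by induction on $M$, and then to deduce \eqref{wave_AM_factor_p} immediately by summing \eqref{wave_AM_factor} over $\bI_{k_M}^{(M)}\subset\{1,\ldots,L\}$ and invoking Proposition \ref{prop:det_su2_pDW} for the last factor, $\sum_{\bI_{k_M}^{(M)}}\widehat{\psi}_{\bI_{k_M}^{(M)}}^{(L)}(\bu_{k_M}^{(M)};\bm_L)=\widehat{\mathcal{Z}}_{k_M}(\bu_{k_M}^{(M)};\bm_L)$. The same remark shows that \eqref{wave_AM_factor} for $A_{M-1}$ quivers already entails \eqref{wave_AM_factor_p} for $A_{M-1}$ quivers, so inside the inductive step either may be used freely. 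The base case $M=1$ of \eqref{wave_AM_factor} is vacuous (empty chain, empty product).

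The one combinatorial ingredient I would isolate first is the compatibility of the rearrangement \eqref{rearrange_I} with relabelling the largest set of the chain. Let $\phi$ be the unique order-preserving bijection $\bI_{k_M}^{(M)}\to\{1,\ldots,k_M\}$ and put $\boldsymbol{J}_{k_{\,p}}^{(p)}:=\phi(\bI_{k_{\,p}}^{(p)})$ for $p=1,\ldots,M-1$; then $\bI_{k_1}^{(1)}\subset\cdots\subset\bI_{k_{M-1}}^{(M-1)}\subset\bI_{k_M}^{(M)}$ is carried bijectively onto the chain $\boldsymbol{J}_{k_1}^{(1)}\subset\cdots\subset\boldsymbol{J}_{k_{M-1}}^{(M-1)}\subset\{1,\ldots,k_M\}$. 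Since order-preserving bijections preserve relative positions, the sets $\widetilde{\bI}_{k_{\,p}}^{(p)}$ obtained from the first chain via \eqref{rearrange_I} coincide, for all $p=1,\ldots,M-1$, with the sets $\widetilde{\boldsymbol{J}}_{k_{\,p}}^{(p)}$ obtained from the second (and for $p=M-1$ the largest member of the second chain being $\{1,\ldots,k_M\}$ gives $\widetilde{\boldsymbol{J}}_{k_{M-1}}^{(M-1)}=\boldsymbol{J}_{k_{M-1}}^{(M-1)}$). In particular the integrand $\prod_{p=1}^{M-1}\omega_{\widetilde{\bI}_{k_{\,p}}^{(p)}}^{(k_{p+1})}(\bu_{k_{\,p}}^{(p)};\bu_{k_{p+1}}^{(p+1)})$ of \eqref{def_psi_aM} depends on the chosen $\bI_{k_M}^{(M)}$ only through this relabelling, so its sum over all chains below a fixed $\bI_{k_M}^{(M)}$ equals its sum over all chains below $\{1,\ldots,k_M\}$.

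For the inductive step I would start from \eqref{defect_AM}, write $\mathop{\Sym}_{\bu_{k_1}^{(1)},\ldots,\bu_{k_M}^{(M)}}=\mathop{\Sym}_{\bu_{k_M}^{(M)}}\circ\mathop{\Sym}_{\bu_{k_1}^{(1)},\ldots,\bu_{k_{M-1}}^{(M-1)}}$, and note that $\omega_{\bI_{k_M}^{(M)}}^{(L)}(\bu_{k_M}^{(M)};\bm_L)$ involves none of $\bu_{k_1}^{(1)},\ldots,\bu_{k_{M-1}}^{(M-1)}$ and none of $\bI_{k_1}^{(1)},\ldots,\bI_{k_{M-1}}^{(M-1)}$; summing \eqref{defect_AM} over $\bI_{k_1}^{(1)}\subset\cdots\subset\bI_{k_{M-1}}^{(M-1)}\subset\bI_{k_M}^{(M)}$, this factor may therefore be taken outside the inner $\mathop{\Sym}$ and the inner sum. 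By the combinatorial ingredient above the remaining inner object equals its counterpart with $\bI_{k_M}^{(M)}$ replaced by $\{1,\ldots,k_M\}$, and by Proposition \ref{prop:def_aM} applied to the $A_{M-1}$ quiver of Figure \ref{fig:am_quiver_l} with last flavour rank $k_M$ and inhomogeneities the $k_M$ variables $\bu_{k_M}^{(M)}$, it is exactly $\sum_{\boldsymbol{J}_{k_1}^{(1)}\subset\cdots\subset\boldsymbol{J}_{k_{M-1}}^{(M-1)}\subset\{1,\ldots,k_M\}}\widehat{\psi}_{\boldsymbol{J}_{\bk'}^{M-1}}^{(k_M)}(\bu_{\bk'}^{M-1};\bu_{k_M}^{(M)})$ with $\bk'=(k_1,\ldots,k_{M-1})$; by the induction hypothesis (the $A_{M-1}$ case of \eqref{wave_AM_factor_p}) this is $\prod_{p=1}^{M-1}\widehat{\mathcal{Z}}_{k_{\,p}}(\bu_{k_{\,p}}^{(p)};\bu_{k_{p+1}}^{(p+1)})$. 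Hence $\widehat{\mathcal{Z}}_{\bI_{k_M}^{(M)}}(\bu_{\bk}^M;\bm_L)=\mathop{\Sym}_{\bu_{k_M}^{(M)}}\bigl[\,\omega_{\bI_{k_M}^{(M)}}^{(L)}(\bu_{k_M}^{(M)};\bm_L)\,\prod_{p=1}^{M-1}\widehat{\mathcal{Z}}_{k_{\,p}}(\bu_{k_{\,p}}^{(p)};\bu_{k_{p+1}}^{(p+1)})\bigr]$; in this product only $\widehat{\mathcal{Z}}_{k_{M-1}}(\bu_{k_{M-1}}^{(M-1)};\bu_{k_M}^{(M)})$ involves $\bu_{k_M}^{(M)}$, and it is symmetric in those variables (as is seen from the determinant expressions in Proposition \ref{prop:det_su2_pDW}), so it passes through $\mathop{\Sym}_{\bu_{k_M}^{(M)}}$; the leftover $\mathop{\Sym}_{\bu_{k_M}^{(M)}}\omega_{\bI_{k_M}^{(M)}}^{(L)}(\bu_{k_M}^{(M)};\bm_L)$ is $\widehat{\psi}_{\bI_{k_M}^{(M)}}^{(L)}(\bu_{k_M}^{(M)};\bm_L)$ by \eqref{defect_A1}, giving \eqref{wave_AM_factor}.

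I expect the only genuinely delicate step to be the combinatorial ingredient of the second paragraph: that the rearranged index sets $\widetilde{\bI}_{k_{\,p}}^{(p)}$ are intrinsic to the nesting and unchanged when the largest set is carried order-isomorphically onto $\{1,\ldots,k_M\}$. This is precisely what lets the chain-sum below a fixed $\bI_{k_M}^{(M)}$ be recognised, node by node, as a nested defect of an $A_{M-1}$ quiver; everything else is the linearity and the disjoint-variable factorisation of $\mathop{\Sym}$, together with the symmetry of the $\mathfrak{su}(2)$ partial DWPF in its inhomogeneities. In place of the induction one may equivalently peel off the nodes $M,M-1,\ldots,2$ one at a time, applying Proposition \ref{prop:det_su2_pDW} at each stage; this reproduces the same factorisation by the same mechanism.
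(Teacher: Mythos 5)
Your argument is correct and rests on essentially the same mechanism as the paper's proof: the symmetry of the $\mathfrak{su}(2)$ partial DWPF $\widehat{\mathcal{Z}}_{k_{\,p}}(\bu_{k_{\,p}}^{(p)};\bu_{k_{p+1}}^{(p+1)})$ in its second set of variables lets it pass through $\mathop{\Sym}$, so the nested sum factorizes node by node. The difference is purely organizational: the paper iterates bottom-up (summing over $\bI_{k_1}^{(1)}$ first and repeating the symmetrization), while you run a top-down induction on $M$ and make explicit the relabelling fact that the sets $\widetilde{\bI}_{k_{\,p}}^{(p)}$ depend only on the relative nesting, a point the paper uses implicitly.
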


\begin{proof}
From {\bf 3} in Proposition \ref{prop:cond_su2} and \eqref{limit_su2w}, 
the partition function $\widehat{\mathcal{Z}}_{k_1}(\bu_{k_1}^{(1)};\bu_{k_{2}}^{(2)})$ is 
a symmetric polynomial of $\bu_{k_{2}}^{(2)}$, and then,
\begin{equation}
\mathop{\Sym}_{\bu_{k_2}^{(2)}}\,
\widehat{\mathcal{Z}}_{k_1}
\ll \bu_{k_1}^{(1)}; \bu_{k_{2}}^{(2)} \rr \,
\omega_{\widetilde{\bI}_{k_2}^{(2)}}^{(k_{3})}
\ll \bu_{k_2}^{(2)}; \bu_{k_{3}}^{(3)} \rr 
=
\widehat{\mathcal{Z}}_{k_1}
\ll \bu_{k_1}^{(1)}; \bu_{k_{2}}^{(2)} \rr \,
\widehat{\psi}_{\widetilde{\bI}_{k_2}^{(2)}}^{(k_{3})}
\ll \bu_{k_2}^{(2)};\bu_{k_{3}}^{(3)} \rr 
\end{equation}
Repeating this symmetrization, one finds the factorizations 
\eqref{wave_AM_factor} and \eqref{wave_AM_factor_p}.
\end{proof}

\begin{remark}
In \cite{Foda:2013gb}, the factorization \eqref{wave_AM_factor_p} 
was shown for the $\mathfrak{su} (M+1)$ vertex model (see Remark
\ref{rem:am_defect_lattice}), and was traced to a property (called 
\textit{\lq colour-independence\rq}) of partition functions that 
satisfy certain boundary conditions.
\end{remark}

\section{Generalizations by Higgsing}
\label{sec:gen_AM}

\textit{In Section \ref{section_5_1}, we recall the Higgsing 
procedure in the $A_M$ quiver in Figure 
\ref{fig:am_quiver} without orbifold defects, 
in terms of the equivariant characters \eqref{chi_AM}.
In Sections \ref{section_5_2} and \ref{section_5_3}, by applying the Higgsing 
to the orbifold construction in Section \ref{sec:off_Bethe_AM}, 
we generalize the simple $A_M$ quiver orbifold defect \eqref{defect_AM} 
to more general $A_2$ then to $A_M$ quiver orbifold defects.
In Section \ref{section_5_4}, we study the dual of Higgsing
on the Bethe side of the Bethe/Gauge correspondence.
} 

\subsection{Higgsing}
\label{section_5_1}

\begin{table}[t]
\begin{center}
\begin{tabular}{|c||c c | c c c c c c c c|}
\hline
IIA & 0 & 1 & 2 & 3 & 4 & 5 & 6 & 7 & 8 & 9 
\\ 
\hline
NS5 & $-$ & $-$ & $-$ & $-$ & $-$ & $-$ & & & &
\\
D2 & $-$ & $-$ & & & & & $-$ & & & 
\\
D4 & $-$ & $-$ & & & & & & $-$ & $-$ & $-$ 
\\
\hline
\end{tabular}
\qquad
\begin{tabular}{|c||c c c | c c c c c c c|}
\hline
IIB & 0 & 1 & 2 & 3 & 4 & 5 & 6 & 7 & 8 & 9 
\\ 
\hline
NS5 & $-$ & $-$ & $-$ & $-$ & $-$ & $-$ & & & &
\\
D3 & $-$ & $-$ & $-$ & & & & $-$ & & & 
\\
D5 & $-$ & $-$ & $-$ & & & & & $-$ & $-$ & $-$ 
\\
\hline
\end{tabular}
\caption{On the left is 
a type-IIA brane configuration, and 
on the right is
the type-IIB brane configuration in \cite{Gaiotto:2013bwa} 
that corresponds to it by T-duality 
along the $x^2$-direction (see also Figure \ref{fig:aM_hw_brane}).}
\label{gk_hw_brane}
\end{center}
\end{table}

\begin{figure}[t]
 \centering
  \includegraphics[width=160mm]{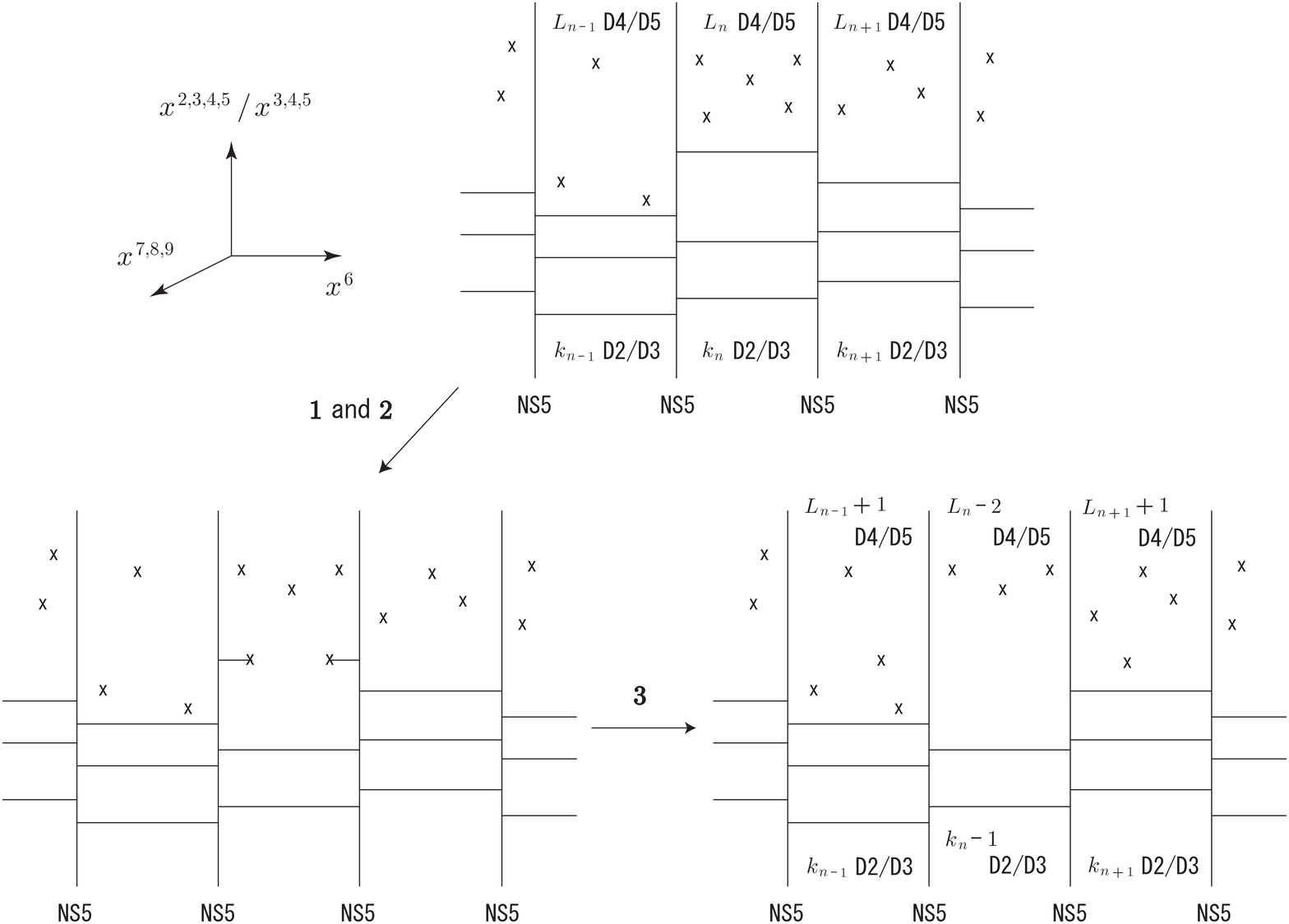}
\caption{The Higgsing procedure including Hanany-Witten moves for 
type-IIA/IIB brane configuration \cite{Gaiotto:2013bwa}, 
where the vertical (resp. horizontal) lines represent NS5 (resp. D2/D3) branes, 
and each x represents a D4/D5 brane. 
The corresponding quiver description is in Figure \ref{fig:aM_hw}.}
\label{fig:aM_hw_brane}
\end{figure}

In \cite{Gaiotto:2013bwa}, Gaiotto and Koroteev discussed a Higgsing 
procedure in terms of type-IIB brane realizations, in 
3D $\mathcal{N}=2$ quiver gauge theories 
that are Bethe/Gauge dual to XXZ spin-chains.
In Table \ref{gk_hw_brane} and Figure \ref{fig:aM_hw_brane}, 
we describe 
a type-IIA brane configuration, and 
a type-IIB brane configuration in \cite{Gaiotto:2013bwa} that 
corresponds to it by T-duality along the $x^2$-direction.
By introducing the twisted mass $\gamma$ in Table \ref{suM_matt}, 
which breaks half the supersymmetry, 
these configurations describe, respectively,
2D $\mathcal{N}=(2,2)$ quiver gauge theories on the $(x^0, x^1)$-directions,
and 
3D $\mathcal{N}=2$ quiver gauge theories on the $(x^0, x^1, x^2)$-directions. 
For the purposes of this work, the basic idea, as described in Figure \ref{fig:aM_hw_brane}, 
is
\begin{itemize}
\item[{\bf 1.}] to fine-tune the quiver data so that two D4/D5 branes are 
aligned at the same position in 
$(x^2, x^3, x^4, x^5)/(x^3, x^4, x^5)$-directions, 
and a segment of a D2/D3 brane stretches between them,  
\item[{\bf 2.}] the D2/D3 segment that stretches between the two aligned 
D4/D5 branes is taken to infinity in the $(x^7, x^8, x^9)$-directions, 
and finally, 
\item[{\bf 3.}] a sequence of Hanany-Witten moves of the two aligned D4/D5 
branes, which are across NS5 branes, are used to simplify 
the resulting quiver
\footnote{\,
The Hanany-Witten moves require that there is at most one D$p$ brane between 
an NS5 brane and a D$(p+2)$ brane. In the present work, $p = 2$ or $3$, and, 
as in Figure \ref{fig:aM_hw_brane}, there is indeed at most one D2/D3 brane 
between an NS5 brane and a D4/D5 brane.
The moves also describe the creation/annihilation of branes. 
In the present work, as in \cite{Gaiotto:2013bwa}, the Higgsing procedure, 
involves the annihilation of D2/D3 branes, as in Step \textbf{3}
of Figure \ref{fig:aM_hw_brane}. We thank A Hanany for discussions on 
this point.
}.
\end{itemize}

\begin{figure}[t]
 \centering
  \includegraphics[width=160mm]{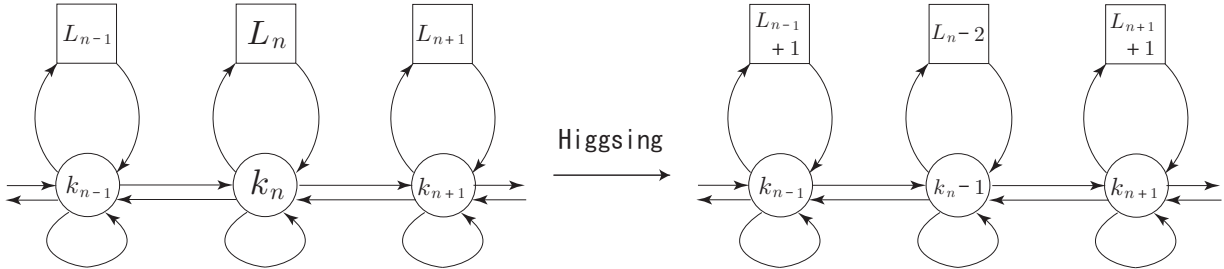}
\caption{Higgsing the $n$-th gauge node of the $A_M$ linear quiver.}
\label{fig:aM_hw}
\end{figure}

\subsubsection{Higgsing the $A_M$ quiver on the left hand side of Figure \ref{fig:aM_hw}}

Following \cite{Gaiotto:2013bwa}, the specialization of the parameters 
\begin{align}
u_{k_n}^{(n)}=m_{L_n-1}^{(n)}-\frac{\gamma}{2}
=m_{L_n}^{(n)}+\frac{\gamma}{2}=:\mu_{H},
\label{HW_move}
\end{align}
Higgses the $n$-th gauge node of the $A_M$ quiver on the left hand 
side of Figure \ref{fig:aM_hw}, and changes the quiver 
parameters as follows, 
\begin{align}
\begin{split}
&
k_{\, p}\ \to\ k_{\, p},\ (\textrm{for}\ p\ne n),\qquad
L_{\, p}\ \to\ L_{\, p},\ (\textrm{for}\ p\ne n, n \pm 1),
\\
&
k_n\ \to\ k_n-1,\qquad
L_n\ \to\ L_n-2,\qquad
L_{n \pm 1}\ \to\ L_{n \pm 1} + 1,
\label{HW_move_def}
\end{split}
\end{align}
as depicted in the transition from the left hand side to the right hand side 
of Figure \ref{fig:aM_hw} \cite{Hanany:1996ie}.

\subsubsection{Higgsing the equivariant characters \eqref{chi_AM}}

Applying \eqref{HW_move} to the characters, we obtain the Higgsed characters 
\begin{equation}
\chi_{H,\hbar}^{P}(\bu):=\chi_{\hbar}^{P}(\bu)\big|_{\eqref{HW_move}},
\end{equation} 
where the following characters remain unchanged
\begin{align}
\begin{split}
&
\chi_{H,\hbar}^{V^{(p)}} \ll \bu_{k_{\, p}}^{(p)} \rr =
\chi_{\hbar}^{V^{(p)}}   \ll \bu_{k_{\, p}}^{(p)} \rr ,\qquad
\chi_{H,\hbar}^{\Phi^{(p)}} \ll \bu_{k_{\, p}}^{(p)} \rr =
\chi_{\hbar}^{\Phi^{(p)}} \ll \bu_{k_{\, p}}^{(p)} \rr ,
\\
&
\chi_{H,\hbar}^{X_i^{(p)}} \ll \bu_{k_{\, p}}^{(p)} \rr =
\chi_{\hbar}^{X_i^{(p)}} \ll \bu_{k_{\, p}}^{(p)} \rr ,\qquad
\chi_{H,\hbar}^{Y_i^{(p)}} \ll \bu_{k_{\, p}}^{(p)} \rr =
\chi_{\hbar}^{Y_i^{(p)}} \ll \bu_{k_{\, p}}^{(p)} \rr ,\qquad 
\textrm{for}\ p \ne n,
\\
&
\chi_{H,\hbar}^{A^{(p')}}\ll \bu_{k_{p'}}^{(p')},\bu_{k_{p'+1}}^{(p'+1)} \rr =
\chi_{\hbar}^{A^{(p')}}\ll \bu_{k_{p'}}^{(p')},\bu_{k_{p'+1}}^{(p'+1)} \rr ,
\\
&
\chi_{H,\hbar}^{B^{(p')}}\ll \bu_{k_{p'}}^{(p')},\bu_{k_{p'+1}}^{(p'+1)} \rr =
\chi_{\hbar}^{B^{(p')}}\ll \bu_{k_{p'}}^{(p')},\bu_{k_{p'+1}}^{(p'+1)} \rr ,\qquad
\textrm{for}\ p' \ne n-1, n,
\label{hg_ch_AM_1}
\end{split}
\end{align}
while the following characters change
\begin{align}
\begin{split}
\label{hg_ch_AM_2}
&
\chi_{H,\hbar}^{V^{(n)}}\ll \bu_{k_n}^{(n)} \rr=
\chi_{\hbar}^{V^{(n)}}  \ll \bu_{k_n-1}^{(n)} \rr 
- \sum_{a=1}^{k_n-1} \frac{\mathrm{e}^{u_a^{(n)}-\mu_{H}}+\mathrm{e}^{-u_a^{(n)}+\mu_{H}}}{1-\mathrm{e}^{\hbar}},
\\
&
\chi_{H,\hbar}^{\Phi^{(n)}}\ll \bu_{k_n}^{(n)} \rr=
\chi_{\hbar}^{\Phi^{(n)}}\ll \bu_{k_n-1}^{(n)} \rr 
+\sum_{a=1}^{k_n-1} \frac{\mathrm{e}^{u_a^{(n)}-\mu_{H} + \gamma + \hbar}+\mathrm{e}^{-u_a^{(n)}+\mu_{H} + \gamma + \hbar}}
{1-\mathrm{e}^{\hbar}}
+\frac{\mathrm{e}^{\gamma + \hbar}}{1-\mathrm{e}^{\hbar}},
\\
&
\sum_{i=1}^{L_n} \chi_{H,\hbar}^{X_i^{(n)}}\ll \bu_{k_n}^{(n)} \rr=
\sum_{i=1}^{L_n-2} \chi_{\hbar}^{X_i^{(n)}}\ll \bu_{k_n-1}^{(n)} \rr 
+\sum_{a=1}^{k_n-1} \frac{\mathrm{e}^{u_a^{(n)} - \mu_H - \gamma} + 
\mathrm{e}^{u_a^{(n)} - \mu_H}}{1-\mathrm{e}^{\hbar}}
+\sum_{i=1}^{L_n} \frac{\mathrm{e}^{\mu_H - m_i^{(n)} - \frac{\gamma}{2}}}{1-\mathrm{e}^{\hbar}},
\\
&
\sum_{i=1}^{L_n} \chi_{H,\hbar}^{Y_i^{(n)}}\ll \bu_{k_n}^{(n)} \rr=
\sum_{i=1}^{L_n-2} \chi_{\hbar}^{Y_i^{(n)}}\ll \bu_{k_n-1}^{(n)} \rr 
+\sum_{a=1}^{k_n-1} \frac{\mathrm{e}^{-u_a^{(n)} + \mu_H} + 
\mathrm{e}^{-u_a^{(n)} + \mu_H - \gamma}}{1-\mathrm{e}^{\hbar}}
+\sum_{i=1}^{L_n} \frac{\mathrm{e}^{-\mu_H + m_i^{(n)} - \frac{\gamma}{2}}}{1-\mathrm{e}^{\hbar}},
\end{split}
\end{align}
and
\begin{align}
\begin{split}
\label{hg_ch_AM_3}
&
\chi_{H,\hbar}^{A^{(n-1)}}\ll \bu_{k_{n-1}}^{(n-1)},\bu_{k_n}^{(n)} \rr =
\chi_{\hbar}^{A^{(n-1)}} \ll \bu_{k_{n-1}}^{(n-1)},\bu_{k_n-1}^{(n)} \rr
+ \sum_{a=1}^{k_{n-1}} 
\frac{\mathrm{e}^{u_a^{(n-1)} - \mu_{H} - \frac{\gamma}{2}}}
{1-\mathrm{e}^{\hbar}},
\\
&
\chi_{H,\hbar}^{A^{(n)}} \ll \bu_{k_n}^{(n)},\bu_{k_{n+1}}^{(n+1)} \rr 
=
\chi_{\hbar}^{A^{(n)}} \ll \bu_{k_n-1}^{(n)},\bu_{k_{n+1}}^{(n+1)} \rr 
+ \sum_{a=1}^{k_{n+1}} 
\frac{\mathrm{e}^{\mu_{H} - u_a^{(n+1)} - \frac{\gamma}{2}}}
{1-\mathrm{e}^{\hbar}},
\\
&
\chi_{H,\hbar}^{B^{(n-1)}}\ll \bu_{k_{n-1}}^{(n-1)},\bu_{k_n}^{(n)} \rr =
\chi_{\hbar}^{B^{(n-1)}} \ll \bu_{k_{n-1}}^{(n-1)},\bu_{k_n-1}^{(n)} \rr + \sum_{a=1}^{k_{n-1}} 
\frac{\mathrm{e}^{-u_a^{(n-1)} + \mu_{H} - \frac{\gamma}{2}}}
{1-\mathrm{e}^{\hbar}},
\\
&
\chi_{H,\hbar}^{B^{(n)}} \ll \bu_{k_n}^{(n)},\bu_{k_{n+1}}^{(n+1)} \rr =
\chi_{\hbar}^{B^{(n)}} \ll \bu_{k_n-1}^{(n)},\bu_{k_{n+1}}^{(n+1)} \rr 
+ \sum_{a=1}^{k_{n+1}} 
\frac{\mathrm{e}^{-\mu_{H} + u_a^{(n+1)} - \frac{\gamma}{2}}}
{1-\mathrm{e}^{\hbar}}
\end{split}
\end{align}

The sum of the Higgsed characters in \eqref{hg_ch_AM_2} becomes
\begin{multline}
\chi_{H,\hbar}^{V^{(n)}}\ll \bu_{k_n}^{(n)} \rr+
\chi_{H,\hbar}^{\Phi^{(n)}}\ll \bu_{k_n}^{(n)} \rr+
\sum_{i=1}^{L_n} \ll  \chi_{H,\hbar}^{X_i^{(n)}}\ll \bu_{k_n}^{(n)} \rr +
\chi_{H,\hbar}^{Y_i^{(n)}}\ll \bu_{k_n}^{(n)} \rr \rr 
\\
=
\chi_{\hbar}^{V^{(n)}}\ll \bu_{k_n-1}^{(n)} \rr +
\chi_{\hbar}^{\Phi^{(n)}}\ll \bu_{k_n-1}^{(n)} \rr +
\sum_{i=1}^{L_n-2} \ll  \chi_{\hbar}^{X_i^{(n)}}\ll \bu_{k_n-1}^{(n)} \rr  +
\chi_{\hbar}^{Y_i^{(n)}}\ll \bu_{k_n-1}^{(n)} \rr  \rr 
\\
\quad
+\sum_{a=1}^{k_n-1} 
\ll 
\frac{\mathrm{e}^{-u_a^{(n)}+\mu_{H} - \gamma}}{1-\mathrm{e}^{\hbar}}
-\frac{\mathrm{e}^{u_a^{(n)}-\mu_{H} + \gamma}}{1-\mathrm{e}^{-\hbar}}
\rr 
+\sum_{a=1}^{k_n-1} 
\ll 
\frac{\mathrm{e}^{u_a^{(n)}-\mu_{H} - \gamma}}{1-\mathrm{e}^{\hbar}}
-\frac{\mathrm{e}^{-u_a^{(n)}+\mu_{H} + \gamma}}{1-\mathrm{e}^{-\hbar}}
\rr 
\\
+\frac{\mathrm{e}^{\gamma + \hbar}}{1-\mathrm{e}^{\hbar}}
+\sum_{i=1}^{L_n} \frac{\mathrm{e}^{\mu_H - m_i^{(n)} - \frac{\gamma}{2}}}{1-\mathrm{e}^{\hbar}}
+\sum_{i=1}^{L_n} \frac{\mathrm{e}^{-\mu_H + m_i^{(n)} - \frac{\gamma}{2}}}{1-\mathrm{e}^{\hbar}}
\end{multline}
By considering the partition function \eqref{rp_ch_Z} or \eqref{rp_ch_Z_K}, 
one finds that the contributions from the second line on the right hand side 
yield sign factors, while the third line does not depend on the variables 
$u_a^{(p)}$ and can be decoupled from the quiver gauge theory. 
One also finds that 
the extra factors of the first and third (resp. second and fourth) characters in 
\eqref{hg_ch_AM_3}, 
\begin{equation}
\sum_{a=1}^{k_{n-1}} 
\frac{\mathrm{e}^{u_a^{(n-1)} - \mu_{H} - \frac{\gamma}{2}}}
{1-\mathrm{e}^{\hbar}}
+ \sum_{a=1}^{k_{n-1}} 
\frac{\mathrm{e}^{-u_a^{(n-1)} + \mu_{H} - \frac{\gamma}{2}}}
{1-\mathrm{e}^{\hbar}},\quad
\textrm{resp.}\ 
\sum_{a=1}^{k_{n+1}} 
\frac{\mathrm{e}^{u_a^{(n+1)} -\mu_{H} - \frac{\gamma}{2}}}
{1-\mathrm{e}^{\hbar}}
+ \sum_{a=1}^{k_{n+1}} 
\frac{\mathrm{e}^{- u_a^{(n+1)} +\mu_{H} - \frac{\gamma}{2}}}
{1-\mathrm{e}^{\hbar}},
\end{equation}
agree with the contributions from extra (anti-)fundamental matter with 
mass parameter $\mu_H$ at the $(n-1)$-th (resp. $(n+1)$-th) gauge node. 
As a result, the transition \eqref{HW_move_def}, under the specialization 
\eqref{HW_move}, is confirmed.

\subsection{Orbifold defect for $A_2$ quiver}
\label{section_5_2}

In this subsection, we apply the Higgsing to the orbifold construction 
in Section \ref{sec:off_Bethe_AM} 
and generalize the simple $A_M$ quiver orbifold defects for 
$M=2$ constructed in Section \ref{subsec:AM_l_quiver}. 
The case of general $M$ will be discussed in Section \ref{section_5_3}. 

\begin{figure}[t]
 \centering
  \includegraphics[width=105mm]{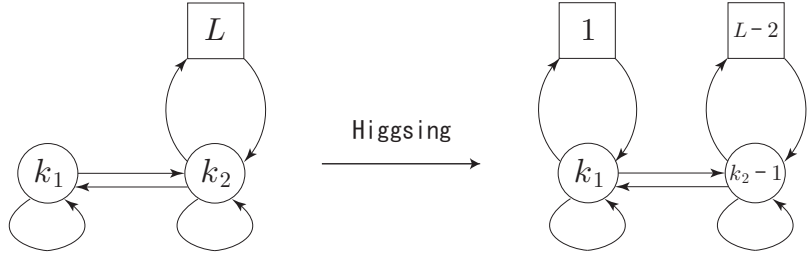}
\caption{Higgsing the $A_2$ linear quiver.}
\label{fig:a2_hw}
\end{figure}

\subsubsection{Higgsing the $A_2$ quiver gauge theory 
in Figure \ref{fig:a2_hw}}

We set the $A_2$ quiver data $k_1 \le k_2 \le L-1$, so 
that the specialization \eqref{HW_move} becomes
\begin{equation}
u_{k_2}^{(2)}=m_{L-1} - \frac{\gamma}{2}
=m_{L} + \frac{\gamma}{2}=:\mu_{H}
\end{equation}
As before, the twisted masses $m_{L-1}$ and $m_{L}$ decouple
after Higgsing, and compared with the discussion in Section 
\ref{subsec:AM_l_quiver}, one only needs to consider the second
factors on the right hand side in the Higgsed characters
\eqref{hg_ch_AM_3}.

\subsubsection{Construction of $A_2$ quiver orbifold defects}

Instead of the change of parameters \eqref{orb_rep_suM}, we consider
\begin{align}
\begin{split}
&
\mu_H\ \to\ \mu_H + \ll L-2\rr \frac{\hbar}{L-1},
\\
&
m_i\ \to\ m_i + \ll i-1\rr \frac{\hbar}{L-1},\qquad
i=1,\ldots,L-2,
\\
&
u_a^{(1)}\big|_{\pm}\ \to\ u_a^{(1)}\big|_{\pm} + \ll I_a^{(1)}-1\rr \frac{\hbar}{L-1},\qquad
a=1,\ldots,k_1,
\\
&
u_a^{(2)}\big|_{\pm}\ \to\ u_a^{(2)}\big|_{\pm} + \ll I_a^{(2)}-1\rr \frac{\hbar}{L-1},\qquad
a=1,\ldots,k_2-1,
\qquad
\hbar\ \to\ \frac{\hbar}{L-1},
\label{orb_rep_su3_hg1}
\end{split}
\end{align}
which is consistent with Higgsing, where
\begin{align}
\bI_{k_1}^{(1)} \subset \bI_{k_2-1}^{(2)} \uplus \{ L-1 \}
\subset \{1,\ldots,L-1\}, \quad
I_a^{(p)}<I_{a+1}^{(p)}, 
\end{align}
and the symbol $\uplus$ denotes the pairwise disjoint union. 
Then, as a generalization of \eqref{defect_AM} for $M=2$, we obtain 
an orbifold defect
\begin{multline}
\widehat{\psi}_{\bI_{k_1}^{(1)},\bI_{k_2-1}^{(2)}}^{(1|1,L-2)}
\ll \bu_{k_1}^{(1)},\bu_{k_2-1}^{(2)};\mu_H,\bm_{L-2} \rr = 
\\
\mathop{\Sym}_{\bu_{k_1}^{(1)},\bu_{k_2-1}^{(2)}} 
\omega_{\widetilde{\bI}_{k_1}^{(1)}}^{(k_2)}
\ll \bu_{k_1}^{(1)};\bu_{k_2-1}^{(2)},\mu_H \rr  \,
\omega_{\bI_{k_2-1}^{(2)}}^{(L-2)}
\ll \bu_{k_2-1}^{(2)};\bm_{L-2} \rr,
\label{defect_A3_hg}
\end{multline}
where $\omega_{\bI_k}^{(L)}(\bu_k;\bm_L)$ is defined in \eqref{su2_sk}, 
and $\widetilde{\bI}_{k_1}^{(1)}$ is defined by the map
\eqref{rearrange_I} for 
$\bI_{k_1}^{(1)} \subset \bI_{k_2-1}^{(2)} \uplus \{ L-1 \}$. Note that, 
instead of the change of parameters \eqref{orb_rep_su3_hg1}, by considering
\begin{align}
\begin{split}
&
\mu_H\ \to\ \mu_H,
\\
&
m_i\ \to\ m_i + i\,\frac{\hbar}{L-1},\qquad
i=1,\ldots,L-2,
\\
&
u_a^{(1)}\big|_{\pm}\ \to\ u_a^{(1)}\big|_{\pm} + \ll I_a^{(1)}-1\rr \frac{\hbar}{L-1},\qquad
a=1,\ldots,k_1,
\\
&
u_a^{(2)}\big|_{\pm}\ \to\ u_a^{(2)}\big|_{\pm} + 
\ll I_a^{(2)}-1 \rr
\, \frac{\hbar}{L-1},
\qquad
a=1,\ldots,k_2-1,
\qquad
\hbar\ \to\ \frac{\hbar}{L-1},
\label{orb_rep_su3_hg2}
\end{split}
\end{align}
with
\begin{align}
\bI_{k_1}^{(1)} \subset \{ 1 \} \uplus \bI_{k_2-1}^{(2)}
\subset \{1,\ldots,L-1\}, \quad
I_a^{(p)}<I_{a+1}^{(p)},
\end{align}
we obtain an another orbifold defect
\begin{multline}
\widehat{\psi}_{\bI_{k_1}^{(1)},\bI_{k_2-1}^{(2)}}^{(2|1,L-2)}
\ll \bu_{k_1}^{(1)},\bu_{k_2-1}^{(2)};\mu_H,\bm_{L-2} \rr 
=
\\ 
\mathop{\Sym}_{\bu_{k_1}^{(1)},\bu_{k_2-1}^{(2)}} 
\omega_{\widetilde{\bI}_{k_1}^{(1)}}^{(k_2)}
\ll \bu_{k_1}^{(1)};\mu_H,\bu_{k_2-1}^{(2)} \rr  \,
\omega_{\widetilde{\bI}_{k_2-1}^{(2)}}^{(L-2)}
\ll \bu_{k_2-1}^{(2)};\bm_{L-2} \rr,
\label{defect_A3_hg2}
\end{multline}
where $\widetilde{I}_a^{(2)}=I_a^{(2)}-1$.

\subsubsection{More general $A_2$ quiver orbifold defects}

One can apply the above Higgsing procedure repeatedly. 
Consider the $A_2$ quiver in Figure \ref{fig:am_quiver} with $M=2$, set 
$k_1 \le L_1 + k_2 \le L_1 + L_2$, and apply the change of parameters 
\begin{align}
\begin{split}
&
m_i^{(1)}\ \to\ m_i^{(1)} + \ll L_2+i-1\rr \frac{\hbar}{L_1+L_2},
\\
&
m_i^{(2)}\ \to\ m_i^{(2)} + \ll i-1\rr \frac{\hbar}{L_1+L_2},
\\
&
u_a^{(p)}\big|_{\pm}\ \to\ u_a^{(p)}\big|_{\pm} + \ll I_a^{(p)}-1\rr \frac{\hbar}{L_1+L_2},\qquad
\hbar\ \to\ \frac{\hbar}{L_1+L_2},
\label{orb_rep_su3_hg3}
\end{split}
\end{align}
to the equivariant characters \eqref{chi_AM}, with $M=2$, where
\begin{align}
\bI_{k_1}^{(1)} \subset \bI_{k_2}^{(2)} \uplus \{ L_2+1, \ldots, L_1+L_2 \}
\subset \{1,\ldots,L_1+L_2\}, \quad
I_a^{(p)}<I_{a+1}^{(p)}.
\end{align}
As a result, we find an orbifold defect for the $A_2$ quiver,
\begin{multline}
\widehat{\psi}_{\bI_{k_1}^{(1)},\bI_{k_2}^{(2)}}^{(1|L_1,L_2)}
\ll \bu_{k_1}^{(1)},\bu_{k_2}^{(2)};\bm_{L_1}^{(1)},\bm_{L_2}^{(2)} \rr =
\\ 
\mathop{\Sym}_{\bu_{k_1}^{(1)},\bu_{k_2}^{(2)}} 
\omega_{\widetilde{\bI}_{k_1}^{(1)}}^{(L_1+k_2)}
\ll \bu_{k_1}^{(1)};\bu_{k_2}^{(2)},\bm_{L_1}^{(1)} \rr  \,
\omega_{\bI_{k_2}^{(2)}}^{(L_2)}
\ll \bu_{k_2}^{(2)};\bm_{L_2}^{(2)} \rr,
\label{defect_A3_hg_gen}
\end{multline}
which generalizes the defect \eqref{defect_A3_hg}. Similarly, instead of 
\eqref{orb_rep_su3_hg3}, by considering
\begin{align}
\begin{split}
&
m_i^{(1)}\ \to\ m_i^{(1)} + \ll i-1\rr \frac{\hbar}{L_1+L_2},
\\
&
m_i^{(2)}\ \to\ m_i^{(2)} + \ll L_1+i-1\rr \frac{\hbar}{L_1+L_2},
\\
&
u_a^{(p)}\big|_{\pm}\ \to\ u_a^{(p)}\big|_{\pm} + \ll I_a^{(p)}-1\rr \frac{\hbar}{L_1+L_2},\qquad
\hbar\ \to\ \frac{\hbar}{L_1+L_2},
\label{orb_rep_su3_hg4}
\end{split}
\end{align}
with
\begin{align}
\bI_{k_1}^{(1)} \subset \{ 1, \ldots, L_1 \} \uplus \bI_{k_2}^{(2)}
\subset \{1,\ldots,L_1+L_2\}, \quad
I_a^{(p)}<I_{a+1}^{(p)},
\end{align}
we find an another orbifold defect for the $A_2$ quiver,
\begin{multline}
\widehat{\psi}_{\bI_{k_1}^{(1)},\bI_{k_2}^{(2)}}^{(2|L_1,L_2)}
\ll 
\bu_{k_1}^{(1)},\bu_{k_2}^{(2)};\bm_{L_1}^{(1)},\bm_{L_2}^{(2)}
\rr =
\\ 
\mathop{\Sym}_{\bu_{k_1}^{(1)},\bu_{k_2}^{(2)}} 
\omega_{\widetilde{\bI}_{k_1}^{(1)}}^{(L_1+k_2)}
\ll 
\bu_{k_1}^{(1)};\bm_{L_1}^{(1)},\bu_{k_2}^{(2)}
\rr  \,
\omega_{\widetilde{\bI}_{k_2}^{(2)}}^{(L_2)}
\ll 
\bu_{k_2}^{(2)};\bm_{L_2}^{(2)}
\rr,
\label{defect_A3_hg_gen2}
\end{multline}
which generalizes the defect \eqref{defect_A3_hg2}, 
where $\widetilde{I}_a^{(2)}=I_a^{(2)}-L_1$.

\subsection{Orbifold defect for $A_M$ quiver}
\label{section_5_3}

It is straightforward to generalize the above constructions of 
the orbifold defects \eqref{defect_A3_hg_gen} and \eqref{defect_A3_hg_gen2} 
to the $A_M$ quiver in Figure \ref{fig:am_quiver} (see Table 
\ref{suM_matt} for the matter content), where we set 
\begin{equation}
k_{\, p} \le L_{\, p} + k_{p+1}, \quad
k_M \le L_M, \quad p=1,\ldots,M-1
\end{equation}
The orbifold defects for the $A_2$ quiver are composed of 
the orbifold defect $\omega_{\bI_k}^{(L)}(\bu_k;\bm_L)$ in \eqref{su2_sk} 
for the $A_1$ quiver, and it is useful to define
\begin{align}
\begin{split}
&
\omega_{\bI_k}^{(1|\ell+L)}(\bu_k;\bv_{\ell},\bm_L):=
\omega_{\bI_k}^{(\ell+L)}(\bu_k;\bv_{\ell},\bm_L),
\\
&
\omega_{\bI_k}^{(2|\ell+L)}(\bu_k;\bv_{\ell},\bm_L):=
\omega_{\bI_k}^{(\ell+L)}(\bu_k;\bm_L,\bv_{\ell})
\label{omega_12}
\end{split}
\end{align}
Now, from the equivariant characters \eqref{chi_AM}, we find orbifold defects 
for the $A_M$ quiver, which generalize 
the simple $A_M$ quiver orbifold defect \eqref{defect_AM},
\begin{multline}
\widehat{\psi}_{\bI_{\bk}^M}^{(\bS|\bL)}
\ll \bu_{\bk}^M;\bm_{\bL}^M \rr =
\\
\mathop{\Sym}_{\bu_{k_1}^{(1)},\ldots,\bu_{k_M}^{(M)}} 
\prod_{p=1}^{M-1} 
\omega_{\widetilde{\bI}_{k_{\, p}}^{(p)}}^{(s_p|k_{p+1}+L_{\, p})}
\ll
\bu_{k_{\, p}}^{(p)};\bu_{k_{p+1}}^{(p+1)},\bm_{L_{\, p}}^{(p)}
\rr \times
\omega_{\widetilde{\bI}_{k_M}^{(M)}}^{(L_M)}
\ll 
\bu_{k_M}^{(M)};\bm_{L_M}^{(M)}
\rr,
\label{defect_AM_gen}
\end{multline}
which is characterized by the set 
$\bS=\{s_1,\ldots,s_{M-1}\}$, $s_p \in \{1,2\}$, 
and the sets $\bI_{\bk}^M$, with the inclusion relations
\begin{align}
\begin{split}
&
\bI_{k_{\, p}}^{(p)} \subset 
\widehat{\bI}_{k_{p+1}+L_{\, p}}^{(p+1)}:=
\begin{cases}
\bI_{k_{p+1}}^{(p+1)} \uplus 
\{L_{\, p}^{(\bS)}+L_{p+1}+1, \ldots, L_{\, p}^{(\bS)}+L_{\, p}+L_{p+1}\}
\ \ &\textrm{if}\ s_p =1,
\\
\{L_{\, p}^{(\bS)}+1, \ldots, L_{\, p}^{(\bS)}+L_{\, p}\} \uplus \bI_{k_{p+1}}^{(p+1)}
\ \ &\textrm{if}\ s_p =2,
\end{cases}
\\
& \hspace{1.6em} \subset 
\{L_{\, p}^{(\bS)}+1, \ldots, L_{\, p}^{(\bS)}+L_{\, p}+L_{p+1}\},\qquad
p=1,\ldots, M-1,
\\
&
\bI_{k_M}^{(M)} \subset \widehat{\bI}_{L_M}^{(M+1)}:=
\{L_M^{(\bS)}+1, \ldots, L_M^{(\bS)}+L_M\}
\end{split}
\end{align}
Here $L_{\, p}^{(\bS)}=\sum_{q=1}^{p-1} L_q \delta_{s_q,2}$, 
and the sets $\widetilde{\bI}_{\bk}^M$ are defined by 
the map \eqref{rearrange_I}, for the above inclusion relations. 
We claim that, by the reparametrizations \eqref{mass_trans_lattice}, 
the orbifold defect \eqref{defect_AM_gen} coincides  
with the partition function \eqref{am_lattice_pf} for a lattice 
configuration of the $\mathfrak{su} (M+1)$ vertex model, 
\begin{align}
 \widehat{\psi}_{\bI_{\bk}^M}^{(\bS|\bL)}
 \ll \bu_{\bk}^M;\bm_{\bL}^M \rr =   
 \widehat{\psi}_{\mathrm{L},\boldsymbol{i}_{\bL}^M}^{(\bS|\bL)}
 \ll \bx_{\bk}^M;\by_{\bL}^M \rr, 
\label{defect_lattice}
\end{align}
where 
$\widehat{\bI}_{k_{p+1}+L_{\, p}}^{(p+1)}\backslash \bI_{k_{p+1}}^{(p+1)}$,  
$p=1, \ldots, M-1$, and $\widehat{\bI}_{L_M}^{(M+1)}$ label the positions 
$\ell$ of colours $i_{\ell}^{(p)} \in \{1,\ldots,p+1\}$ and 
$i_{\ell}^{(M)} \in \{1,\ldots,M+1\}$, respectively, 
the set $\bI_{k_1}^{(1)}$ labels the positions of colour $1$, 
the set $\widehat{\bI}_{k_{\, p}+L_{p-1}}^{(p)} \backslash \bI_{k_{p-1}}^{(p-1)}$, 
$p=2, \ldots, M$, labels 
the positions of colour $p$, and 
the set $\widehat{\bI}_{L_{M}}^{(M+1)} \backslash \bI_{k_M}^{(M)}$ 
labels the positions of colour $M+1$.

\begin{figure}[t]
 \centering
  \includegraphics[width=135mm]{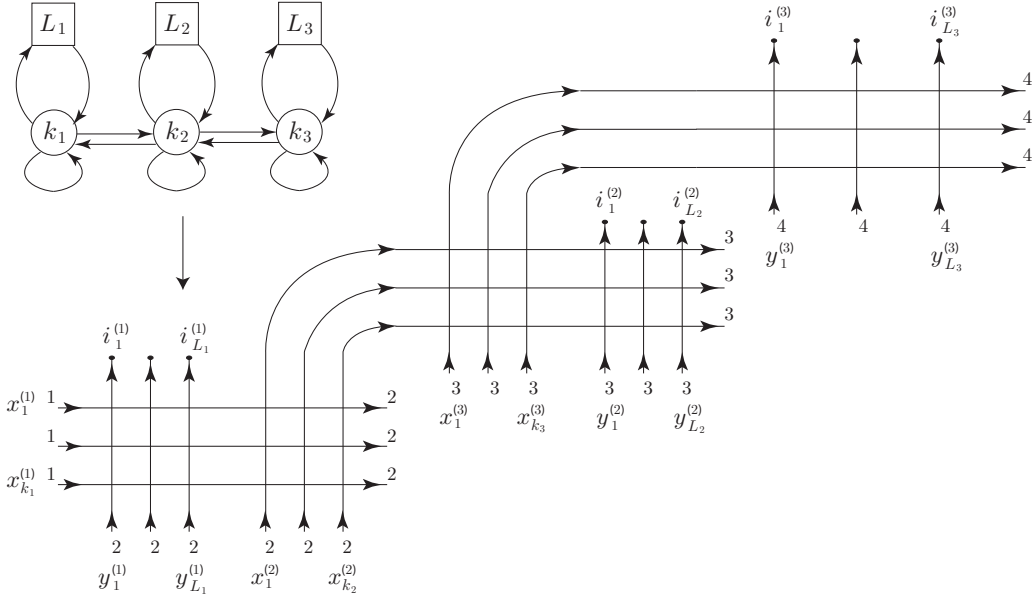}
\caption{An $\mathfrak{su} (4)$ lattice configuration with $s_1=2, s_2=1$ 
which describes the orbifold defect \eqref{defect_A3} for the $A_3$ quiver. 
Here $k_1 \le L_1+k_2$, $k_2 \le L_2+k_3$, $k_3 \le L_3$ and 
$i_{\ell}^{(p)} \in \{1,\ldots,p+1\}$, $\ell=1,\ldots,L_{\, p}$, $p=1,2,3$ 
represent colours.}
\label{fig:a3_lattice}
\end{figure}

\begin{exam}
Consider the $A_3$ quiver in Figure \ref{fig:am_quiver}, with $M=3$, and 
the change of parameters
\begin{align}
\begin{split}
&
m_i^{(1)}\ \to\ m_i^{(1)} + \ll i-1\rr \frac{\hbar}{L_1+L_2+L_3},
\\
&
m_i^{(2)}\ \to\ m_i^{(2)} + \ll L_1+L_3+i-1\rr \frac{\hbar}{L_1+L_2+L_3},
\\
&
m_i^{(3)}\ \to\ m_i^{(3)} + \ll L_1+i-1\rr \frac{\hbar}{L_1+L_2+L_3},
\\
&
u_a^{(p)}\big|_{\pm}\ \to\ u_a^{(p)}\big|_{\pm} + \ll I_a^{(p)}-1\rr \frac{\hbar}{L_1+L_2+L_3},\qquad
\hbar\ \to\ \frac{\hbar}{L_1+L_2+L_3},
\label{orb_rep_su4_hg}
\end{split}
\end{align}
with
\begin{align}
\begin{split}
\bI_{k_1}^{(1)} \subset \{ 1, \ldots, L_1 \} \uplus \bI_{k_2}^{(2)} 
&\subset \{ 1, \ldots, L_1 \} \uplus \bI_{k_3}^{(3)} 
\uplus \{ L_1+L_3+1, \ldots, L_1+L_2+L_3 \}
\\
&
\subset \{1,\ldots,L_1+L_2+L_3\},
\end{split}
\end{align}
where $I_a^{(p)}<I_{a+1}^{(p)}$. Our claim is that the orbifold defect
\begin{multline}
\label{defect_A3}
\widehat{\psi}_{\bI_{\bk}^3}^{(2,1|\bL)}
\ll \bu_{\bk}^3;\bm_{\bL}^3 \rr 
\\
=\mathop{\Sym}_{\bu_{k_1}^{(1)},\bu_{k_2}^{(2)},\bu_{k_3}^{(3)}} 
\omega_{\widetilde{\bI}_{k_1}^{(1)}}^{(2|k_{2}+L_1)}
\ll 
\bu_{k_1}^{(1)};\bu_{k_{2}}^{(2)},\bm_{L_1}^{(1)}
\rr \, 
\omega_{\widetilde{\bI}_{k_2}^{(2)}}^{(1|k_{3}+L_2)}
(\bu_{k_2}^{(2)};\bu_{k_{3}}^{(3)},\bm_{L_2}^{(2)})\, 
\omega_{\widetilde{\bI}_{k_3}^{(3)}}^{(L_3)}
\ll 
\bu_{k_3}^{(3)};\bm_{L_3}^{(3)}
\rr,
\end{multline}
coincides with the partition function \eqref{am_lattice_pf} for the $\mathfrak{su} (4)$ lattice 
configuration in Figure \ref{fig:a3_lattice} by the reparametrizations 
\eqref{mass_trans_lattice}. 
\end{exam}

\begin{remark}
In the $A_M$ quiver orbifold defects \eqref{defect_AM_gen}, 
by replacing the elementary blocks 
$\omega_{\bI_k}^{(s|\ell+L)}(\bu_k;\bv_{\ell},\bm_L)$ in \eqref{omega_12}
with
\begin{align}
\omega_{\bI_k}^{(\ell+L)}(\bu_k;\bm_{L^{(c)}},\bv_{\ell},\bm_{L-L^{(c)}}),
\end{align}
we obtain a further generalization of the orbifold defects, 
where $\bm_{L-L^{(c)}}=\{m_1,\ldots,m_{L-L^{(c)}}\}$ and 
$\bm_{L^{(c)}}=\{m_{L-L^{(c)}+1},\ldots,m_{L}\}$. 
By this generalization, our claim \eqref{defect_lattice} is obviously 
generalized to a claim for the lattice configurations which mix 
two lattice configurations in Figure \ref{fig:quiver_lattice_2} 
(and Figure \ref{fig:quiver_lattice_1}).
\end{remark}

\subsection{Higgsing on the Bethe side of the Bethe/Gauge correspondence}
\label{section_5_4}

As in Appendix \ref{app:AM_v_model}, the lattice configurations of the rational 
$\mathfrak{su} (M+1)$ vertex model consist of three types of vertices $a$, $b$ 
and $c$ in Figure \ref{fig:6v_abc} with vertex weights $a(x,y)=x-y+1$, 
$b(x,y)=x-y$ and $c(x,y)=1$. Consider the two types of lattice configurations 
in Figure \ref{fig:lattice_hw_1} consisting of $k+1$ horizontal-line 
variables $x$ and $x_a$, $a=1,\ldots,k$, and $L+2$ vertical-line variables 
$w_1$, $w_2$ and $y_{\ell}$, $\ell=1,\ldots,L$. 

\begin{figure}[H]
 \centering
  \includegraphics[width=105mm]{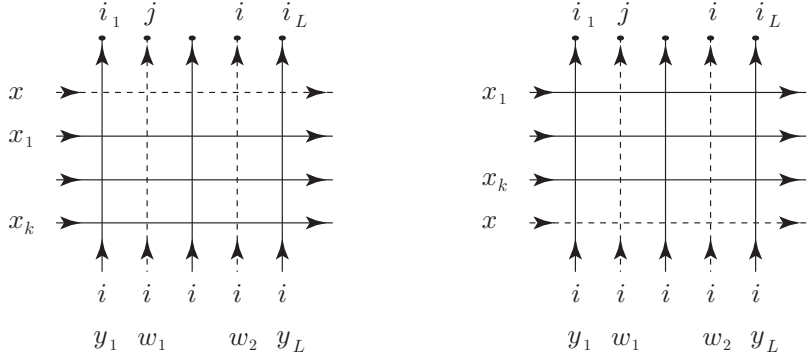}
\caption{Lattice configurations.}
\label{fig:lattice_hw_1}
\end{figure}

The bonds on the lower boundary are assigned the (fixed and same) colour $i$, and 
the bonds on the upper boundary are assigned (fixed but varying) colours 
$i$, $j$, $i_{\ell}$, $\ell=1, \ldots, L$, $j<i$. 
We take the colours $s$ coming from the left boundary to be less than $i$, 
i.e. $s<i$, while the colours that remain unspecified  
(such as those on the bonds on the right boundary) can take any value that 
is allowed by colour conservation. 
If the top bond of the left-most vertical line has colour $i$, then all 
vertices on this vertical line are of type-$b$ and their weights factor 
out trivially.

We now consider the Higgsing of the dashed lines by imposing the condition 
\eqref{HW_move}, where, by the change of variables \eqref{mass_trans_lattice} 
this condition yields
\begin{equation}
\textrm{A-type condition}:\ 
x_{k_n}^{(n)}=y_{L_{n-1}}^{(n)}-1 \quad \textrm{and} \quad
\textrm{B-type condition}:\ 
x_{k_n}^{(n)}=y_{L_n}^{(n)}
\label{ab_condition}
\end{equation}
In other words, 
$a ( x_{k_n}^{(n)},y_{L_{n-1}}^{(n)} ) =0$ and 
$b ( x_{k_n}^{(n)},y_{L_n}^{(n)} ) =0$, respectively. 
For the above two lattice configurations, let us impose the A-type condition 
$a(x,w_1)=0$ on the first dashed vertical line and the B-type condition 
$b(x,w_2)=0$ on the second dashed vertical line. 
For the Higgsing, we further consider $x \to \infty$ limit corresponding to the 
decoupling of the vector multiplet scalar. 
With the above conditions, the dominant lattice configuration should be one 
with the minimal number of $c$ vertices on the dashed lines. 

If the intersection of the horizontal dashed line and the vertical dashed line 
with top boundary colour $i$ is a $c$ vertex, then there are, at least, three 
(resp. two) $c$ vertices on the dashed lines for the left (resp. right) 
lattice configuration. To have the minimal number of $c$ vertices, this 
intersection should be an $a$ vertex by the B-type condition. 
In fact, in this case, the lattice configurations with minimal number of $c$ 
vertices are given in Figure \ref{fig:lattice_hw_2}.

\begin{figure}[H]
 \centering
  \includegraphics[width=105mm]{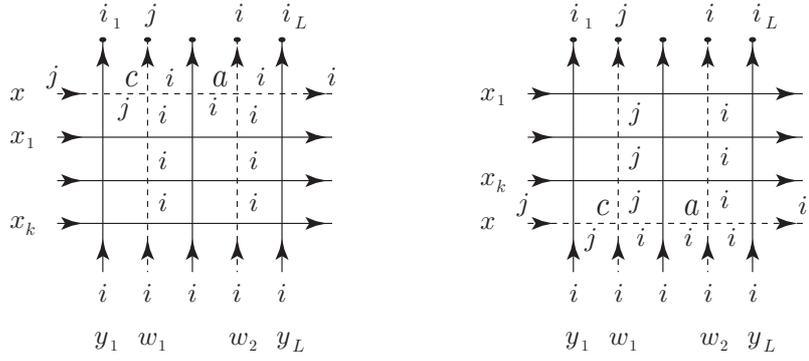}
\caption{Higgsed lattice configurations.}
\label{fig:lattice_hw_2}
\end{figure}

Here, on the dashed lines, there is exactly one $c$ vertex at the intersection 
of the horizontal dashed line and the vertical dashed line with top boundary 
colour $j$. The other vertices on the dashed lines are also uniquely determined, 
where the left and right side boundary colours of the horizontal dashed line are 
fixed as $j$ and $i$, respectively.
Note that without the second vertical dashed line with variable $w_2$, the 
lattice configuration on the dashed lines with a minimal number of $c$ vertex is 
not uniquely determined, and the Higgsing procedure for two vertical lines is 
inevitable. 

Now we consider that the horizontal dashed lines are connected with other lattice 
configuration in Figure \ref{fig:lattice_hw_3}.

\begin{figure}[H]
 \centering
  \includegraphics[width=160mm]{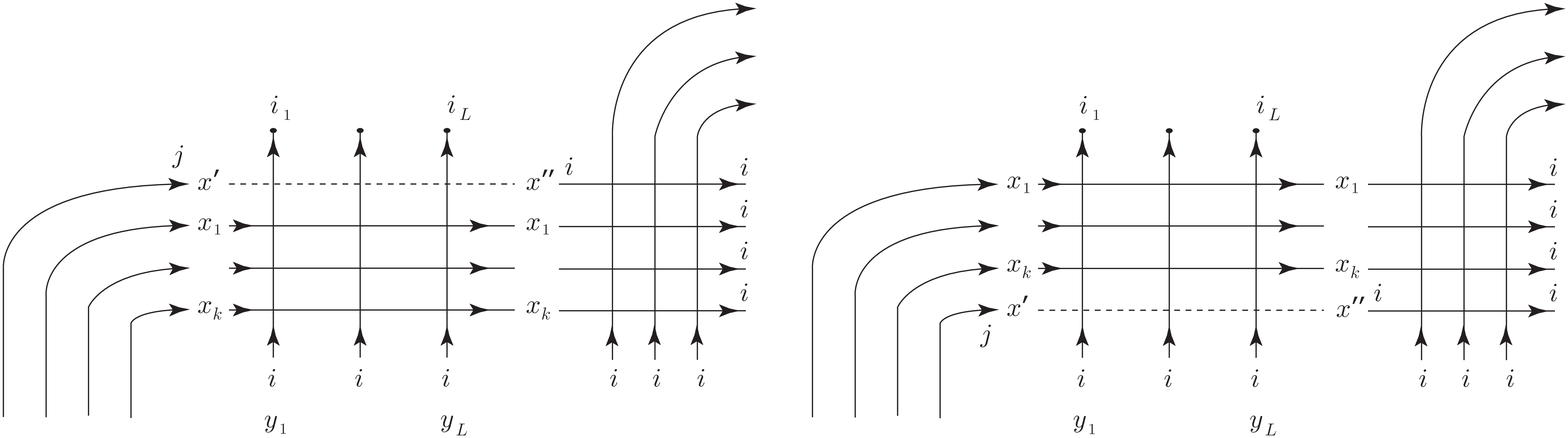}
\caption{Higgsed lattice configurations connected with other lattices.}
\label{fig:lattice_hw_3}
\end{figure}

In this case, before taking the limit $x \to \infty$, we split this variable 
into three pieces $x$, $x'$ and $x''$. 
As we will discuss in Section \ref{lattice_hanny_witten}, the splitting of 
the variable $x$ is interpreted as the lattice version of a Hanany-Witten move. 
Further, we deform the lattice configuration 
by taking the decoupling limit $x'' \to \infty$ to remove 
the horizontal line, with the variable $x''$ and colour $i$ at 
the two end boundaries, which uniquely results in no $c$ vertex 
\footnote{\,
On the gauge side, we can interpret taking the limit $x'' \to \infty$ as 
the decoupling of a pair of fundamental and anti-fundamental matter, 
with twisted masses $x''$, at the $(n+1)$-th node after the Hanany-Witten move 
in Figure \ref{fig:aM_hw} (see Section \ref{lattice_hanny_witten}).
}.

Repeating this process, one can construct the $\mathfrak{su} (M+1)$ lattice 
configurations in Appendix \ref{app:AM_v_model} 
(e.g. see Figure \ref{fig:a3_lattice}) from the lattice configuration associated 
with the simple $A_M$ linear quiver in Figure \ref{fig:am_quiver_l}.

\section{The lattice versions of Higgsing and Hanany-Witten moves}
\label{section_6}

\subsection{Korepin's specialization of parameters}

A fundamental object in exact computations in spin-chain physics is 
the domain wall 
partition function (DWPF) introduced by Korepin, in the $\mathfrak{su} (2)$
spin-$\frac12$ six-vertex model \cite{Korepin:1982gg}. 
As explained in Appendix \ref{app:A1PDWPF}, Korepin proposed a specialization of 
the parameters (the rapidities and the inhomogeneities) of the DWPF that leads to a recursion relation (and an initial condition) that completely 
determine it \cite{Korepin:1982gg}. 
In \cite{Izergin:1987}, Izergin solved Korepin's recursion relation and obtained 
a determinant expression for the DWPF (see Propositions \ref{prop:cond_su2} 
and \ref{prop:det_su2}).

\subsection{Gaiotto and Koroteev's specialization of parameters as a variation on Korepin's}

The specialization \eqref{HW_move} of the parameters, used in 
the Higgsing procedure by Gaiotto and Koroteev, is a variation on Korepin's, 
in the sense that it is essentially the same with two differences between them.

\subsubsection{The first difference}

Korepin's derivation of the recursion relation requires one type of 
conditions, and can either, while the Higgsing procedure requires both 
as in \eqref{ab_condition}. 

In the case of domain wall boundary conditions, the DWPF
is symmetric in the horizontal-line variables and also in the vertical-line variables, 
and one can choose the variables that one wishes to specialize to be those on lines 
on the boundaries of the (finite) lattice configuration on which the DWPF is defined. 
Once we do that, we have more information about the colours of the state variables 
on these lines, and only one condition is needed to derive the recursion relation.

In the case of Higgsing, when translated to the lattice, one deals with lattice 
configurations that correspond to the coordinate Bethe wavefunction which is 
not a symmetric function in the vertical-line (inhomogeneity) variables, one 
cannot (for general choices of the Higgsing parameters) associate the variables 
that one wishes to specialize to boundary lattice lines, and one requires 
(in general) two independent conditions. This makes the specialization of Gaiotto 
and Koroteev a more general version of Korepin's.

\subsubsection{The second difference}

In Izergin-Korepin-type computations, one wishes to factor out the (finite) contributions 
of a single horizontal and a single vertical lattice line (rather than completely trivialize 
them), so the parameters that are identified by Korepin's conditions are allowed to remain finite.

On the other hand, in the lattice version of Higgsing, one wishes to trivialize 
the contributions of a single horizontal and two vertical lattice lines, and 
this is achieved by identifying three parameters (using two conditions), 
then taking that parameter to infinity and normalizing appropriately. 
In the type-IIA/IIB brane realizations in Figure \ref{fig:aM_hw_brane}, 
this Higgsing procedure corresponds to Steps {\bf 1} and {\bf 2}.
This makes the specialization of Gaiotto and Koroteev a limiting case 
of Korepin's.

\subsection{The lattice version of the Hanany-Witten moves}
\label{lattice_hanny_witten}

One of the ingredients of the Higgsing of Gaiotto and Koroteev is a sequence 
of Hanany-Witten moves, in the sense that, in the type-IIA/IIB brane realizations 
in Figure \ref{fig:aM_hw_brane}, the transition from (anti-)bifundamental matter 
to new (anti-)fundamental matter is the result of Hanany-Witten moves and described 
by Step {\bf 3} \cite{Hanany:1996ie, Gaiotto:2013bwa}
\footnote{\, 
In the case of the $A_1$ quiver, there is no (anti-)bifundamental matter and no 
Hanany-Witten moves.
}.
In Figure \ref{fig:aM_hw}, 
the appearance (after Higgsing) of two pairs of fundamental and anti-fundamental 
matter from the pair of initial (before Higgsing) bifundamental and anti-bifundamental 
matter is the consequence of a sequence of Hanany-Witten moves.
On the lattice side, following the splitting of the variable $x$ in 
Figure \ref{fig:lattice_hw_3} and below, 
the disappearance of the initial horizontal-line parameter $x$, 
and the appearance of two new vertical-line parameters $x'$ and $x''$, 
which is a re-assignment of what were (initially) horizontal-line variables 
as (new) vertical-line variables, is the lattice version of the Hanany-Witten 
move.

\section{Remarks}
\label{section_7}

\subsection{Affinization}
\label{affinization}

In \cite{Bonelli:2015kpa}, Bonelli, Sciarappa, Tanzini and Vasko 
studied connections between 4D $\mathcal{N} = 2$ supersymmetric 
gauge theories and 
quantum integrable systems of the hydrodynamic type. 
In particular,
the $\widehat{A}_M$ quiver gauge theory that plays a central role, 
and appears in Figure 1, in \cite{Bonelli:2015kpa}, 
is the affine version of 
the $A_M$ quiver gauge theory that plays a central role, 
and appears in Figure 1, in the present work.
This leads us to expect that the present work has an affine extension along the 
lines of \cite{Bonelli:2015kpa}.

\subsection{Quiver with orbifold defects}

In Section \ref{sec:off_Bethe_AM}, we considered a ${\IZ}_L$ 
orbifold of the 2D gauge theory described by the $A_M$ quiver 
in Figure \ref{fig:am_quiver_l} and obtained the orbifold 
defect \eqref{defect_AM} labeled by the nested sequences 
\eqref{nest_am}. 
In \cite{Bonelli:2019jns}, Bonelli, Fasola and Tanzini studied 
a class of 4D $A_1$ quiver gauge theories, with a codimension-2 
surface defect, that supports \textit{nested instantons} 
obtained by an orbifold and labeled by nested partitions. 
They discussed a 2D gauge theory described by a quiver 
that is different from that used in the present work, and 
that corresponds to the moduli space of nested instantons. 
It would be interesting to find the relation, if any, between 
the two constructions.


\section*{Acknowledgements}
We thank
F C Alcaraz,
G Bonelli, 
A Hanany, 
H Kanno,
H C Kim,
L Piroli, 
B Pozsgay, 
A Tanzini and 
Y Terashima
for discussions and correspondence,
and the Australian Research Council for financial support.  

\appendix

\section{The $\mathfrak{su} (2)$ 
six-vertex model partial domain wall partition function}
\label{app:A1PDWPF}

\textit{We prove Proposition \ref{prop:det_su2_pDW}, to the effect 
that the partition function $\widehat{\mathcal{Z}}_{k}(\bu_k;\bm_L)$ in 
\eqref{wave_A1}, constructed from the orbifold defect 
$\widehat{\psi}_{\bI_k}^{(L)}(\bu_k;\bm_L)$ in \eqref{defect_A1}, 
agrees with the $\mathfrak{su} (2)$ partial domain wall partition function (DWPF) of the rational six-vertex model.}

\begin{lemm}\label{lemm:limit_su2}
The orbifold defect $\widehat{\psi}_{\bI_k}^{(L)}(\bu_k;\bm_L)$ 
and the partition function $\widehat{\mathcal{Z}}_k(\bu_k;\bm_L)$ 
are polynomials of degree $L-1$ in each $u_a$.
\end{lemm}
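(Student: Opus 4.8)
The plan is to read the statement off Proposition~\ref{prop:def_a1}, which already writes the defect as a symmetrization $\widehat{\psi}_{\bI_k}^{(L)}(\bu_k;\bm_L)=\mathop{\Sym}_{\bu_k}\omega_{\bI_k}^{(L)}(\bu_k;\bm_L)$ with $\omega_{\bI_k}^{(L)}$ given by \eqref{su2_sk}, and to show that this symmetrization cancels the apparent poles at $u_a=u_b$, leaving a polynomial of the asserted degree. The claim for $\widehat{\mathcal{Z}}_k$ is then immediate, since by \eqref{wave_A1} it is the finite sum $\sum_{\bI_k\subset\{1,\dots,L\}}\widehat{\psi}_{\bI_k}^{(L)}$.

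First I would put $\omega_{\bI_k}^{(L)}$ in a transparent form. Write $\phi_{I}(u):=\prod_{i=1}^{I-1}(u-m_i-\tfrac{\gamma}{2})\prod_{i=I+1}^{L}(u-m_i+\tfrac{\gamma}{2})$, a monic polynomial of degree $(I-1)+(L-I)=L-1$ in $u$, and set $\Delta(\bu_k):=\prod_{a<b}^{k}(u_a-u_b)$. Then \eqref{su2_sk} says that $\omega_{\bI_k}^{(L)}(\bu_k;\bm_L)$ equals $\prod_{a=1}^{k}\phi_{I_a}(u_a)$ times $\prod_{a<b}^{k}(u_{a,b}-\gamma)/\Delta(\bu_k)$. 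Using the identity $\prod_{a<b}^{k}(u_{\sigma(a)}-u_{\sigma(b)})=\operatorname{sgn}(\sigma)\,\Delta(\bu_k)$ for $\sigma\in\mathfrak{S}_k$, the symmetrization takes the form
\[
\widehat{\psi}_{\bI_k}^{(L)}(\bu_k;\bm_L)=\frac{P_{\bI_k}(\bu_k;\bm_L)}{\Delta(\bu_k)},
\qquad
P_{\bI_k}(\bu_k;\bm_L):=\sum_{\sigma\in\mathfrak{S}_k}\operatorname{sgn}(\sigma)\prod_{a=1}^{k}\phi_{I_a}(u_{\sigma(a)})\prod_{a<b}^{k}\bigl(u_{\sigma(a)}-u_{\sigma(b)}-\gamma\bigr),
\]
where $P_{\bI_k}$ is visibly a polynomial.

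The core of the argument is that $P_{\bI_k}$ is antisymmetric in $\bu_k$: applying a transposition $\tau$ to the arguments reindexes the sum by $\sigma\mapsto\tau\sigma$ and pulls out the overall factor $\operatorname{sgn}(\tau)=-1$. An antisymmetric polynomial is divisible by the Vandermonde $\Delta(\bu_k)$, so $\widehat{\psi}_{\bI_k}^{(L)}=P_{\bI_k}/\Delta$ is a polynomial. For the degree bound I would count, in a single term of $P_{\bI_k}$, the occurrences of a fixed variable $u_a$: with $c=\sigma^{-1}(a)$, it appears once inside $\phi_{I_c}(u_a)$ (contributing degree $L-1$) and inside exactly the $k-1$ linear factors $u_{\sigma(a')}-u_{\sigma(b')}-\gamma$ with $c\in\{a',b'\}$; hence $\deg_{u_a}P_{\bI_k}\le(L-1)+(k-1)=L+k-2$, and since $\deg_{u_a}\Delta(\bu_k)=k-1$ we get $\deg_{u_a}\widehat{\psi}_{\bI_k}^{(L)}\le L-1$. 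Summing over all $\bI_k$ yields the same bound for $\widehat{\mathcal{Z}}_k$.

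The only step that is not pure bookkeeping is the passage to a polynomial, i.e.\ the cancellation of the poles $\prod_{a<b}u_{a,b}$, which is precisely what the antisymmetry of $P_{\bI_k}$ delivers; I expect that to be the main point to get right. That the bound is attained (so the degree is exactly $L-1$) can be checked by extracting the top coefficient in $u_a$ of $P_{\bI_k}$, which is nonzero because each $\phi_I$ is monic; but only the upper bound $\deg_{u_a}\le L-1$ is needed for the Korepin-type recursion used in the remainder of Appendix~\ref{app:A1PDWPF}.
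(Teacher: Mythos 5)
Your proposal is correct, and it rests on the same mechanism as the paper's proof --- antisymmetry under exchanging two $u$-variables is what kills the apparent poles of \eqref{su2_sk} along $u_a=u_b$ --- but the packaging differs. The paper argues locally and pairwise: it groups $\omega_{\bI_k}$ with $\widehat{\sigma}_{A,B}\cdot\omega_{\bI_k}$ inside the symmetrization and notes that the simple poles at $u_A=u_B$ cancel between the two, after which the degree statement is read off without explicit bookkeeping. You instead argue globally: you put the whole symmetrized sum over the common denominator $\Delta(\bu_k)=\prod_{a<b}(u_a-u_b)$, check that the resulting numerator $P_{\bI_k}$ is alternating (reindexing $\sigma\mapsto\tau\sigma$), and invoke divisibility of an alternating polynomial by the Vandermonde; the explicit count $\deg_{u_a}P_{\bI_k}\le L+k-2$ minus $\deg_{u_a}\Delta=k-1$ then gives $\deg_{u_a}\widehat{\psi}_{\bI_k}^{(L)}\le L-1$, and the finite sum \eqref{wave_A1} inherits the bound. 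What your route buys is a cleaner global statement and an explicit degree count (the paper leaves it implicit); what it gives up is nothing of substance, since your hedge about whether the degree is \emph{exactly} $L-1$ is harmless: as you note, only the bound $\deg_{u_a}\le L-1$ is used in Proposition \ref{prop:cond_su2} and the ensuing uniqueness-by-recursion argument, and in fact the paper's own proof is no more explicit about attainment of the top degree (it is confirmed a posteriori by the leading behaviour extracted in \eqref{limit_k_su2w}).
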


\begin{proof}
We show that $\widehat{\mathcal{Z}}_k(\bu_k;\bm_L)$ is regular at $u_A=u_B$. 
Let $\widehat{\sigma}_{A,B}$ be an operator acting on functions of 
$\bu_k$ which exchanges $u_A$ and $u_B$, and consider the defect 
$\omega_{\bI_k}(\bu_k):=\omega_{\bI_k}^{(L)}(\bu_k;\bm_L)$ in \eqref{su2_sk}.
Symmetrizing in the variables ${\bu_k}$, the pole at $u_A=u_B$, 
$A<B$, in $\omega_{\bI_k}(\bu_k)$ cancels the corresponding pole 
in $\widehat{\sigma}_{A,B}\cdot \omega_{\bI_k}(\bu_k)$. Therefore, 
$\omega_{\bI_k}(\bu_k)+\widehat{\sigma}_{A,B}\cdot \omega_{\bI_k}(\bu_k)$
has no poles at $u_A=u_B$, thus $\widehat{\psi}_{\bI_k}^{(L)}(\bu_k;\bm_L)$ and 
$\widehat{\mathcal{Z}}_k(\bu_k;\bm_L)$ are regular 
at $u_a=u_b$, $a,b=1,\ldots,L$, and polynomials of degree $L-1$ in 
each $u_a$. 
\end{proof}

\begin{lemm}
One can decouple the vector multiplet scalars by
\begin{align}
\widehat{\mathcal{Z}}_k(\bu_k;\bm_L)=
\frac{1}{L-k} \lim_{u_{k+1} \to \infty}
u_{k+1}^{1-L}\,
\widehat{\mathcal{Z}}_{k+1}(\bu_{k+1};\bm_L),
\label{limit_k_su2w}
\end{align}
and further, 
\begin{align}
\widehat{\mathcal{Z}}_k(\bu_k;\bm_L)=
\frac{1}{(L-k)!} \lim_{u_{k+1},\ldots,u_L \to \infty}
\ll \prod_{a=k+1}^L u_a^{1-L}\rr 
\widehat{\mathcal{Z}}_L(\bu_L;\bm_L)
\label{limit_su2w}
\end{align}
\end{lemm}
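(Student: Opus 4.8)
The plan is to prove \eqref{limit_k_su2w} first and then obtain \eqref{limit_su2w} by iterating it $L-k$ times. The starting point is the defining formula \eqref{wave_A1}, $\widehat{\mathcal{Z}}_{k+1}(\bu_{k+1};\bm_L)=\sum_{\bJ_{k+1}\subset\{1,\ldots,L\}}\widehat{\psi}_{\bJ_{k+1}}^{(L)}(\bu_{k+1};\bm_L)$, together with the explicit form of the summand obtained from \eqref{defect_A1} and \eqref{su2_sk}. From Lemma \ref{lemm:limit_su2} we already know $\widehat{\mathcal{Z}}_{k+1}$ is a polynomial of degree $L-1$ in $u_{k+1}$, so the limit $u_{k+1}^{1-L}\,\widehat{\mathcal{Z}}_{k+1}(\bu_{k+1};\bm_L)$ as $u_{k+1}\to\infty$ exists and equals the coefficient of $u_{k+1}^{L-1}$; the whole content of the lemma is to identify that coefficient as $(L-k)\,\widehat{\mathcal{Z}}_k(\bu_k;\bm_L)$.

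First I would isolate, inside $\widehat{\psi}_{\bJ_{k+1}}^{(L)}=\operatorname{\Sym}_{\bu_{k+1}}\omega_{\bJ_{k+1}}^{(L)}$, the terms in the symmetrization in which the variable $u_{k+1}$ sits in the last slot $a=k+1$. Looking at \eqref{su2_sk}, the factor carrying $u_{k+1}$ is
\[
\prod_{i=1}^{J_{k+1}-1}\!\ll u_{k+1}-m_i-\tfrac{\gamma}{2}\rr
\times
\prod_{i=J_{k+1}+1}^{L}\!\ll u_{k+1}-m_i+\tfrac{\gamma}{2}\rr
\times
\prod_{b=1}^{k}\frac{u_{k+1,b}-\gamma}{u_{k+1,b}},
\]
which is a polynomial of degree $L-1$ in $u_{k+1}$ (the numerator has $L-1$ linear factors and the last product tends to $1$), with leading coefficient exactly $1$, independent of $J_{k+1}$. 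Hence the coefficient of $u_{k+1}^{L-1}$ in the $a=k+1$ part of $\operatorname{\Sym}_{\bu_{k+1}}\omega_{\bJ_{k+1}}^{(L)}$ is precisely $\operatorname{\Sym}_{\bu_k}\!\big[\prod_{b=1}^{k}\text{(the }u_{k+1}\text{-independent factors)}\big]$ evaluated in the remaining $k$ variables, and summing over $\bJ_{k+1}$ one recognises $\widehat{\mathcal{Z}}_k(\bu_k;\bm_L)$, except that the roles of the set $\bJ_{k+1}$ and the set obtained by forgetting its $(k{+}1)$-st smallest element must be reconciled. The combinatorial point is that each ordered $k$-subset $\bI_k\subset\{1,\ldots,L\}$ arises from $L-k$ distinct ordered $(k{+}1)$-subsets $\bJ_{k+1}$ (one for each way of adjoining one more element), which is exactly the source of the factor $L-k$; one also has to check that slots $a\neq k+1$ do not contribute to the top power $u_{k+1}^{L-1}$, which is immediate because in those slots $u_{k+1}$ appears only in the denominator-cancelling ratios $\prod_b (u_{a,b}-\gamma)/u_{a,b}$ and in at most $L-1$ linear numerator factors shared with another variable.

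The main obstacle I expect is bookkeeping of this last point together with the cancellation of spurious poles at $u_{k+1}=u_b$: one must argue that although individual terms of $\operatorname{\Sym}_{\bu_{k+1}}$ have denominators $u_{k+1}-u_b$, the sum is a genuine polynomial (guaranteed by Lemma \ref{lemm:limit_su2}, applied with $k$ replaced by $k+1$), so extracting the leading coefficient commutes with the symmetrization. After \eqref{limit_k_su2w} is established, \eqref{limit_su2w} follows by an immediate induction: applying \eqref{limit_k_su2w} successively for $k,k+1,\ldots,L-1$ gives
\[
\widehat{\mathcal{Z}}_k(\bu_k;\bm_L)=
\frac{1}{(L-k)(L-k-1)\cdots 1}\,
\lim_{u_{k+1},\ldots,u_L\to\infty}\ll\prod_{a=k+1}^{L}u_a^{1-L}\rr
\widehat{\mathcal{Z}}_L(\bu_L;\bm_L),
\]
and $(L-k)(L-k-1)\cdots 1=(L-k)!$, which is \eqref{limit_su2w}. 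One minor check needed along the way is that the iterated limits may be taken in any order, which holds because at each stage the quantity is polynomial in the remaining variables by Lemma \ref{lemm:limit_su2}.
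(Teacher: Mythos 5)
Your overall strategy --- extract the coefficient of $u_{k+1}^{L-1}$ and count how many ordered $(k+1)$-subsets project onto a given ordered $k$-subset --- is the same counting argument as the paper's (very terse) proof, and your iteration producing the factor $1/(L-k)!$ in \eqref{limit_su2w} is fine. However, the key step as you state it is incorrect: it is \emph{not} true that only the terms of $\mathop{\Sym}_{\bu_{k+1}}\omega^{(L)}_{\boldsymbol{J}_{k+1}}$ in which $u_{k+1}$ occupies the last slot contribute to the top power $u_{k+1}^{L-1}$. Whichever slot $a'$ the variable $u_{k+1}$ is placed in, it picks up the $L-1$ monic linear mass factors $\prod_{i<J_{a'}}\ll u_{k+1}-m_i-\frac{\gamma}{2}\rr \prod_{i>J_{a'}}\ll u_{k+1}-m_i+\frac{\gamma}{2}\rr$ attached to that slot (the mass factors travel with the slot the variable sits in, they are not tied to the last slot), while every cross-ratio involving $u_{k+1}$ tends to $1$; hence \emph{every} placement of $u_{k+1}$ contributes at degree $L-1$ with unit leading coefficient. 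Your restriction is moreover inconsistent with your own multiplicity count: if only the last slot mattered, then for a fixed $\bI_k$ the contributing sets would be $\boldsymbol{J}_{k+1}=\bI_k\cup\{j\}$ with $j>\max\bI_k$, giving a multiplicity $L-\max\bI_k$ that depends on $\bI_k$, not the uniform $L-k$ that the lemma requires.

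The repair keeps all placements: for fixed $\boldsymbol{J}_{k+1}$ with $u_{k+1}$ in slot $a'$, the coefficient of $u_{k+1}^{L-1}$ is exactly $\omega^{(L)}_{\boldsymbol{J}_{k+1}\setminus\{J_{a'}\}}$ evaluated at the remaining variables (deleting one element of an ordered set preserves the order of the rest, and the surviving cross-ratios are precisely those of the smaller set); summing over the permutations that put $u_{k+1}$ in slot $a'$ gives $\widehat{\psi}^{(L)}_{\boldsymbol{J}_{k+1}\setminus\{J_{a'}\}}(\bu_k;\bm_L)$, and summing over the pairs $(\boldsymbol{J}_{k+1},a')$, a fixed $\bI_k$ arises exactly once for each of the $L-k$ elements of $\{1,\ldots,L\}\setminus\bI_k$ that can be adjoined. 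This yields $(L-k)\,\widehat{\mathcal{Z}}_k(\bu_k;\bm_L)$ and hence \eqref{limit_k_su2w}; your appeal to Lemma \ref{lemm:limit_su2} for the existence of the limit, and the induction giving \eqref{limit_su2w} with $(L-k)(L-k-1)\cdots 1=(L-k)!$, then go through as written.
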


\begin{proof}
The orbifold defect $\widehat{\psi}_{\bI_k}^{(L)}(\bu_k;\bm_L)$ in \eqref{defect_A1} 
consists of $k!$ terms, and combining with the summation in \eqref{wave_A1}, 
$\widehat{\mathcal{Z}}_{k}(\bu_k;\bm_L)$ contains $k! \times \binom{L}{k}=L!/(L-k)!$ 
terms in the summand. Comparing the summands on the both sides of \eqref{limit_k_su2w}, we find that the leading factor of 
$\widehat{\mathcal{Z}}_{k+1}(\bu_{k+1};\bm_L)$ at $u_{k+1}\to \infty$ 
gives $(L-k)\, u_{k+1}^{L-1}\, \widehat{\mathcal{Z}}_{k}(\bu_{k};\bm_L)$, and Equations \eqref{limit_k_su2w} and \eqref{limit_su2w} are obtained.
\end{proof}

Equation \eqref{limit_su2w} states that 
by decoupling the vector multiplet scalars $u_a$, $a=k+1,\ldots,L$, 
from $\widehat{\mathcal{Z}}_L(\bu_L;\bm_L)=
\widehat{\psi}_{\bI_L}^{(L)}(\bu_L;\bm_L)$, 
one obtains $\widehat{\mathcal{Z}}_k(\bu_k;\bm_L)$. 
For $\widehat{\mathcal{Z}}_L(\bu_L;\bm_L)$, we have the following 
proposition.

\begin{prop}\label{prop:cond_su2}
The partition function $\widehat{\mathcal{Z}}_L(\bu_L;\bm_L)$ satisfies the same
four conditions that define the $\mathfrak{su} (2)$ DWPF in \cite{Korepin:1982gg}.
Namely, $\widehat{\mathcal{Z}}_L(\bu_L;\bm_L)$
\begin{itemize}
\item[{\bf 1.}] satisfies the initial condition $\widehat{\mathcal{Z}}_1(u;m)=1$, 
\item[{\bf 2.}] it is a polynomial of degree $L-1$ in each $u_a$, 
\item[{\bf 3.}] it is invariant under any permutations of $m_i$'s, and 
\item[{\bf 4.}] it satisfies the following recursion relation in $L$,
\begin{multline}
\widehat{\mathcal{Z}}_L(\bu_L;\bm_L)\Big|_{u_L=m_L-\frac{\gamma}{2}}=
\prod_{a=1}^{L-1}\ll m_L-m_a-\gamma\rr 
\ll u_a-m_L-\frac{\gamma}{2}\rr \cdot
\widehat{\mathcal{Z}}_{L-1}(\bu_{L-1};\bm_{L-1})
\label{rec_su2}
\end{multline}
\end{itemize}
\end{prop}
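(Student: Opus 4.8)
The plan is to verify the four Korepin conditions for
$\widehat{\mathcal{Z}}_L(\bu_L;\bm_L)=\widehat{\psi}_{\bI_L}^{(L)}(\bu_L;\bm_L)$ directly from the explicit formula \eqref{defect_A1}--\eqref{su2_sk}, using the fact that when $k=L$ the nested set $\bI_L=\{1,\ldots,L\}$ is forced, so there is a single term in \eqref{wave_A1} and $I_a=a$ for all $a$.

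\textbf{Conditions 2 and 3.}
Condition \textbf{2} is already established in Lemma \ref{lemm:limit_su2}, so I would simply cite it. For condition \textbf{3}, I would observe that permuting the $m_i$'s acts on the ordered product structure in \eqref{su2_sk} with $I_a=a$; after symmetrizing over $\bu_L$, the resulting function must be symmetric in the $m_i$'s. Concretely, with $I_a=a$ the defect is built from the factors $\prod_{i<a}(u_a-m_i-\tfrac{\gamma}{2})\prod_{i>a}(u_a-m_i+\tfrac{\gamma}{2})$, and exchanging $m_j\leftrightarrow m_{j+1}$ can be absorbed by the permutation $u_j\leftrightarrow u_{j+1}$ inside the $\Sym$, combined with the $R$-matrix-type identity that the factor $\tfrac{u_{a,b}-\gamma}{u_{a,b}}$ in \eqref{su2_sk} is designed to produce; the cleanest route is to invoke the known equivalence (Remark after Proposition \ref{prop:def_a1}) with the six-vertex-model partition function, whose symmetry in the vertical-line parameters $m_i$ is manifest from the Yang--Baxter equation. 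I would state this as the argument and leave the bookkeeping implicit.

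\textbf{Condition 1.}
For $L=1$ we have $k=1$, $\bI_1=\{1\}$, so $I_1=1$, and both product ranges $\prod_{i=1}^{I_1-1}$ and $\prod_{i=I_1+1}^{L}$ are empty; the factor $\prod_{a<b}^{1}$ is also empty. Hence $\omega_{\{1\}}^{(1)}(u;m)=1$ and $\widehat{\psi}_{\{1\}}^{(1)}(u;m)=1$, giving $\widehat{\mathcal{Z}}_1(u;m)=1$.

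\textbf{Condition 4 — the main obstacle.}
This is the crux. I would set $u_L=m_L-\tfrac{\gamma}{2}$ in the symmetrized sum \eqref{def_psi_a1} (equivalently in \eqref{defect_A1}, keeping track of the normalizing prefactor $(-1)^{kL+\sum I_a}\prod_{a<b}(\gamma^2-u_{a,b}^2)$). Only those terms of the $\Sym$ in which $u_L$ sits in a position that produces a nonvanishing, non-indeterminate contribution survive: in $\omega_{\{1,\ldots,L\}}^{(L)}$ the variable placed in slot $a=L$ carries $\prod_{i=1}^{L-1}(u-m_i-\tfrac{\gamma}{2})$, which at $u=m_L-\tfrac{\gamma}{2}$ equals $\prod_{i=1}^{L-1}(m_L-m_i-\gamma)$ — exactly the first factor on the right of \eqref{rec_su2}. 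For the variable in slot $a<L$, the factor structure together with the $\tfrac{u_{a,b}-\gamma}{u_{a,b}}$ denominators forces (after the $\Sym$) that $u_L$ can only legitimately occupy the last slot, because placing $m_L-\tfrac{\gamma}{2}$ in an earlier slot makes a factor $(-u_a+m_L-\tfrac{\gamma}{2})$ or a denominator vanish in a way that, upon symmetrization, telescopes to zero. I would then show that the remaining terms reorganize into $\Sym_{\bu_{L-1}}\omega_{\{1,\ldots,L-1\}}^{(L-1)}(\bu_{L-1};\bm_{L-1})$ times the prefactor $\prod_{a=1}^{L-1}(u_a-m_L-\tfrac{\gamma}{2})$ coming from the $i=L$ contribution (the $\prod_{i=I_a+1}^{L}$ piece) in each of the surviving slots, and times the correct sign. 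Matching signs between the normalizations of $\widehat{\mathcal{Z}}_L$ and $\widehat{\mathcal{Z}}_{L-1}$, i.e. checking $(-1)^{L\cdot L+\binom{L+1}{2}}$ against $(-1)^{(L-1)^2+\binom{L}{2}}$ up to the explicit factors pulled out, is the fiddly part but is routine once the combinatorial reduction is in place. The genuine obstacle is the vanishing argument: proving that every term of $\Sym$ with $m_L-\tfrac{\gamma}{2}$ not in the last slot cancels. I would handle this exactly as in the proof of Lemma \ref{lemm:limit_su2} — pairing each such term with its image under a transposition $\widehat{\sigma}_{A,B}$ and using that the sum $\omega_{\bI}+\widehat{\sigma}_{A,B}\cdot\omega_{\bI}$ is regular and, at the specialization point, antisymmetric, hence zero — or, alternatively, by recognizing the specialization as the standard freezing of a boundary lattice line in the six-vertex picture of Figure \ref{fig:a1_lattice}, where only one vertex configuration on that line survives and the recursion \eqref{rec_su2} is the familiar Korepin recursion read off the lattice.
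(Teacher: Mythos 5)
Your handling of conditions \textbf{1} and \textbf{2} matches the paper. For condition \textbf{3} you defer to the six-vertex identification and the Yang--Baxter equation; the paper instead gives a short direct computation, showing that $\omega_{\bI_L}(\bu_L)+\widehat{\sigma}_{A,A+1}\cdot\omega_{\bI_L}(\bu_L)$ is symmetric under $m_A\leftrightarrow m_{A+1}$ by explicitly simplifying the only non-manifestly-symmetric factor. Your route is workable in principle, but it rests on the lattice identification (stated in the paper only as a Remark) and you leave the actual argument undone, so as written it is an appeal to authority rather than a proof.

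The genuine gap is in condition \textbf{4}, which you yourself flag as the crux. The cancellation mechanism you propose --- pairing a term having the specialized value in a slot $a<L$ with its image under a transposition $\widehat{\sigma}_{A,B}$ and claiming the pair is ``regular and, at the specialization point, antisymmetric, hence zero'' --- is not correct: $\omega_{\bI}+\widehat{\sigma}_{A,B}\cdot\omega_{\bI}$ is \emph{symmetric} under that transposition, and nothing makes it vanish at $u_L=m_L-\frac{\gamma}{2}$; indeed the surviving terms do not vanish, so a blanket pairing argument of this kind would prove too much. No inter-term cancellation is needed at all: if the variable carrying the value $m_L-\frac{\gamma}{2}$ sits in slot $a<L$ of $\omega_{\{1,\ldots,L\}}^{(L)}$, then the $i=L$ factor of $\prod_{i=a+1}^{L}\left(\cdot-m_i+\frac{\gamma}{2}\right)$ equals $\left(m_L-\frac{\gamma}{2}-m_L+\frac{\gamma}{2}\right)=0$, while for generic $u_1,\ldots,u_{L-1}$ no denominator $u_{a,b}$ vanishes, so every such term is zero \emph{individually}. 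This is exactly the paper's (one-line) argument that the non-zero contributions come only from $\mathop{\Sym}_{\bu_{L-1}}\omega_{\bI_L}(\bu_L)$, after which the recursion \eqref{rec_su2} is read off by evaluating the explicit factors. Two smaller slips: the prefactor $\prod_{a=1}^{L-1}\left(u_a-m_L-\frac{\gamma}{2}\right)$ does not come directly from the $i=L$ piece of the surviving slots; it arises from combining $\left(u_a-m_L+\frac{\gamma}{2}\right)$ with $\frac{u_{a,L}-\gamma}{u_{a,L}}$ at $u_L=m_L-\frac{\gamma}{2}$, where $u_{a,L}=u_a-m_L+\frac{\gamma}{2}$. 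And the sign bookkeeping you anticipate is vacuous, since \eqref{defect_A1} gives $\widehat{\psi}_{\bI_k}^{(L)}(\bu_k;\bm_L)=\mathop{\Sym}_{\bu_k}\omega_{\bI_k}^{(L)}(\bu_k;\bm_L)$ directly, so one can work with $\omega$ throughout and no normalization signs enter.
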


\begin{proof}
Condition {\bf 1} follows from the definition of $\widehat{\mathcal{Z}}_L(\bu_L;\bm_L)$, and
Condition {\bf 2} follows from Lemma \ref{lemm:limit_su2}. 
To prove Condition {\bf 3}, it is sufficient to show that 
$\omega_{\bI_L}(\bu_L)+\widehat{\sigma}_{A,A+1}\cdot 
\omega_{\bI_L}(\bu_L)$,  
in $\widehat{\mathcal{Z}}_L(\bu_L;\bm_L)$, is 
invariant under the permutation of $m_A$ and $m_{A+1}$, 
where $\omega_{\bI_L}(\bu_L)=\omega_{\bI_L}^{(L)}(\bu_L;\bm_L)$ 
is defined in \eqref{su2_sk}. 
The point is that, once this is shown, then by symmetrizing $u_A$, $u_{A+1}$, 
and $u_{A+2}$ in $\omega_{\bI_L}(\bu_L)$, 
the symmetrized function becomes invariant 
under permuting $m_A$, $m_{A+1}$, and $m_{A+2}$.
In $\omega_{\bI_L}(\bu_L)$, the factor that contains $u_A$ and $u_{A+1}$ is
\begin{multline}
\prod_{a=A, A+1}^L\ll \prod_{i=1}^{a-1}
\ll u_a-m_i-\frac{\gamma}{2}\rr 
\cdot
\prod_{i=a+1}^{L}\ll u_a-m_i+\frac{\gamma}{2}\rr \rr 
\cdot
\frac{u_{A,A+1}-\gamma}{u_{A,A+1}}
\\
\times 
\prod_{a=1}^{A-1} \frac{\ll u_{a,A}-\gamma\rr \ll u_{a,A+1} - \gamma \rr}
{u_{a,A}\, u_{a,A+1}}
\cdot
\prod_{a=A+2}^L   \frac{\ll u_{A,a}-\gamma\rr \ll u_{A+1,a} - \gamma \rr}
{u_{A,a}\, u_{A+1,a}}
\label{fac_uA}
\end{multline}
Because the factor that contains $m_A$ and $m_{A+1}$, but does not contain 
$u_A$ and $u_{A+1}$, 
\begin{equation}
\prod_{i=A, A+1}^L \ll \prod_{a=A+2}^{L}\ll u_a-m_i-\frac{\gamma}{2}\rr 
\cdot
\prod_{i=1}^{A-1}\ll u_a-m_i+\frac{\gamma}{2}\rr \rr,
\end{equation} 
is manifestly symmetric under the exchange of $m_A$ and $m_{A+1}$, 
in $\omega_{\bI_L}(\bu_L)+
\widehat{\sigma}_{A,A+1}\cdot \omega_{\bI_L}(\bu_L)$,  
it is sufficient to consider the following factor in \eqref{fac_uA}
\begin{align}
&
\ll u_{A+1}-m_A-\frac{\gamma}{2}\rr 
\ll u_A-m_{A+1}+\frac{\gamma}{2}\rr 
\frac{u_{A,A+1}-\gamma}{u_{A,A+1}}
\\
&+
\ll u_{A}-m_A-\frac{\gamma}{2}\rr 
\ll u_{A+1}-m_{A+1}+\frac{\gamma}{2}\rr 
\frac{u_{A,A+1}+\gamma}{u_{A,A+1}}
\nonumber\\
&=
-\ll m_A+m_{A+1}\rr \ll u_A+u_{A+1}\rr 
+2\ll m_Am_{A+1}+u_Au_{A+1}\rr +\frac{\gamma^{\, 2}}{2}
\nonumber
\end{align}
Since this factor is symmetric under permuting $m_A$ and $m_{A+1}$, Condition {\bf 3} is proved.
To prove Condition {\bf 4}, we assign $u_L=m_L-\frac{\gamma}{2}$ 
in $\widehat{\mathcal{Z}}_L(\bu_L;\bm_L)$, the non-zero terms only come from $\mathop{\Sym}_{\bu_{L-1}}\, \omega_{\bI_L}(\bu_L)$, and by
\begin{multline}
\widehat{\mathcal{Z}}_L(\bu_L;\bm_L)
\Big|_{u_L=m_L-\frac{\gamma}{2}}=
\\ 
\prod_{a=1}^{L-1} \ll u_L-m_a-\frac{\gamma}{2}\rr 
\ll u_a-m_L+\frac{\gamma}{2}\rr \,
\frac{u_{a,L}-\gamma}{u_{a,L}}\bigg|_{u_L=m_L-\frac{\gamma}{2}}
\times
\widehat{\mathcal{Z}}_{L-1}(\bu_{L-1};\bm_{L-1}),
\end{multline}
the recursion relation \eqref{rec_su2} is obtained.
\end{proof}

By induction in $L$, any function which satisfies the four conditions 
in Proposition \ref{prop:cond_su2} is uniquely determined
\footnote{\, 
Condition {\bf 1} gives the initial condition of the recursion, 
Condition {\bf 2} implies that the solution of the recursion is 
uniquely determined by $L$ conditions on the values of any of 
the variables $u_a$, and 
Conditions {\bf 3} and {\bf 4} give the necessary $L$ conditions 
on the values of the variable $u_L$. 
}, 
and the partition function $\widehat{\mathcal{Z}}_L(\bu_L;\bm_L)$ agrees 
with Izergin's determinant expression for the $\mathfrak{su} (2)$ DWPF
\cite{Izergin:1987} (see \cite{Wheeler:2011tm} for a review), 
which satisfies the same four conditions, 
as well as Kostov's determinant expression 
\cite{Kostov:2012jr,Kostov:2012yq}, which is equivalent to Izergin's.

\begin{prop}\label{prop:det_su2}
The partition function $\widehat{\mathcal{Z}}_L(\bu_L;\bm_L)$ is an Izergin 
determinant \cite{Izergin:1987}
\begin{multline}
\widehat{\mathcal{Z}}_L(\bu_L;\bm_L) =
\frac{\prod_{a,b=1}^L
\ll u_a-m_b-\frac{\gamma}{2}\rr 
\ll u_a-m_b+\frac{\gamma}{2}\rr }
{\prod_{a<b}^L u_{b,a}\,m_{a,b}}
\\
\times
\det \ll 
\frac{1}{\ll u_a-m_b-\frac{\gamma}{2}\rr 
\ll u_a-m_b+\frac{\gamma}{2}\rr }\rr_{a,b=1,\ldots,L},
\end{multline}
or equivalently, a Kostov determinant 
\cite{Kostov:2012jr,Kostov:2012yq},
\begin{multline}
\widehat{\mathcal{Z}}_L(\bu_L;\bm_L) =
\frac{\prod_{a,b=1}^L
\ll u_a-m_b+\frac{\gamma}{2}\rr }
{\prod_{a<b}^L u_{b,a}}
\\
\times
\det \ll 
u_a^{b-1}
\prod_{i=1}^L \frac{u_a-m_i-\frac{\gamma}{2}}{u_a-m_i+\frac{\gamma}{2}}
-(u_a-\gamma)^{b-1}\rr _{a,b=1,\ldots,L},
\end{multline}
which is an $\mathfrak{su} (2)$ DWPF, where $m_{a,b}=m_a-m_b$.
\end{prop}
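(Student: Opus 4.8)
The plan is to obtain Proposition~\ref{prop:det_su2} as a direct consequence of Proposition~\ref{prop:cond_su2} together with the standard uniqueness argument for the $\mathfrak{su}(2)$ DWPF. First I would invoke Proposition~\ref{prop:cond_su2}, which establishes that $\widehat{\mathcal{Z}}_L(\bu_L;\bm_L)$ satisfies the four Korepin conditions: the initial condition $\widehat{\mathcal{Z}}_1(u;m)=1$, polynomiality of degree $L-1$ in each $u_a$, symmetry in the $m_i$, and the recursion relation \eqref{rec_su2} specialising $u_L = m_L - \tfrac{\gamma}{2}$. The key observation, already flagged in the footnote following that proposition, is that these four conditions determine $\widehat{\mathcal{Z}}_L$ uniquely by induction on $L$: Conditions \textbf{3} and \textbf{4}, applied with the roles of the $m_i$ permuted, pin down the value of the degree-$(L-1)$ polynomial $\widehat{\mathcal{Z}}_L$ in $u_L$ at the $L$ distinct points $u_L = m_i - \tfrac{\gamma}{2}$, $i=1,\ldots,L$, and Lagrange interpolation then reconstructs the full dependence on $u_L$; iterating downward in $L$ via the recursion, with the base case supplied by Condition \textbf{1}, fixes the function completely.

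Next I would verify that the two claimed determinant expressions each satisfy the same four conditions, so that by uniqueness they coincide with $\widehat{\mathcal{Z}}_L$. For the Izergin determinant \cite{Izergin:1987} this is classical and I would simply cite it (see also \cite{Wheeler:2011tm} for a review): the prefactor $\prod_{a,b}(u_a - m_b - \tfrac{\gamma}{2})(u_a - m_b + \tfrac{\gamma}{2})/\prod_{a<b} u_{b,a} m_{a,b}$ times the Cauchy-type determinant is manifestly symmetric in the $m_i$, the apparent poles at coinciding $u$'s and $m$'s cancel leaving a polynomial of degree $L-1$ in each $u_a$, the $L=1$ case gives $1$, and setting $u_L = m_L - \tfrac{\gamma}{2}$ collapses the last row of the determinant so that it factors as in \eqref{rec_su2}. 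For the Kostov form \cite{Kostov:2012jr,Kostov:2012yq} I would either check the same four conditions directly — polynomiality requires noting that the ratio $\prod_i (u_a - m_i - \tfrac{\gamma}{2})/(u_a - m_i + \tfrac{\gamma}{2})$ multiplied against the determinant structure has its denominators cancelled by the Vandermonde-type combination built from $u_a^{b-1} - (u_a-\gamma)^{b-1}$ — or, more economically, cite the equivalence of the Izergin and Kostov determinants established in \cite{Kostov:2012jr,Kostov:2012yq}.

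The main obstacle I anticipate is not any single hard computation but rather making the uniqueness step airtight: one must be careful that the recursion \eqref{rec_su2} together with $\mathfrak{S}_L$-symmetry in the $m_i$ genuinely supplies $L$ independent constraints on the polynomial $\widehat{\mathcal{Z}}_L(\,\cdot\,,u_L)$ of degree $L-1$, and that the inductive hypothesis (uniqueness at level $L-1$) can be fed back in. This is a standard argument in the Izergin--Korepin literature, so I would present it crisply — interpolation at the $L$ points $u_L = m_i - \gamma/2$, reduction to the $(L-1)$-point problem — and lean on \cite{Korepin:1982gg} for the original formulation. The remaining content of the proof is then purely a matter of matching the explicit determinants against these conditions, which I would treat as routine and largely delegate to the cited references.
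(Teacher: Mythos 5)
Your proposal is correct and follows essentially the same route as the paper: the paper likewise deduces Proposition \ref{prop:det_su2} from the uniqueness of any function satisfying the four Korepin conditions of Proposition \ref{prop:cond_su2} (argued by induction in $L$, exactly as in your interpolation argument, and spelled out in the paper's footnote), and then cites \cite{Izergin:1987, Wheeler:2011tm} for the fact that Izergin's determinant satisfies the same conditions and \cite{Kostov:2012jr,Kostov:2012yq} for the equivalence with Kostov's determinant. Your additional sketch of verifying the conditions for the determinants directly is consistent with, if slightly more detailed than, what the paper delegates to the references.
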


\begin{proof}[Proof of Proposition \ref{prop:det_su2_pDW}]
Taking the decoupling limit \eqref{limit_su2w} in Proposition 
\ref{prop:det_su2}, Proposition \ref{prop:det_su2_pDW} is proved.
\end{proof} 

\section{The $\mathfrak{su} (M+1)$ vertex model associated with 
the $A_M$ quiver}
\label{app:AM_v_model}

\begin{figure}[t]
\centering
\includegraphics[width=120mm]{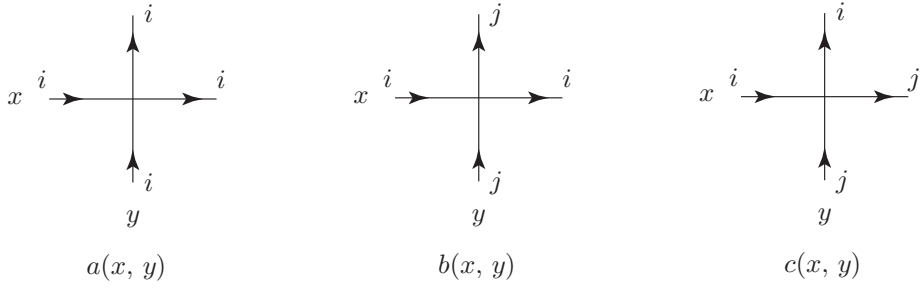}
\caption{
The non-zero vertex (Boltzmann) weights $w_v(x,y)$ 
(whose values are given in \eqref{6v_abc} in the case of the rational models, 
and in \eqref{6v_abc_tri} in the case of the trigonometric models) that can  
assigned to a vertex $v$. 
The arrows indicate the directions of rapidity and inhomogeneity parameter 
flows, and the integers $i, j=1,\ldots,M+1$, $i \ne j$, are the state variables, 
or colours assigned to the bonds.
}
\label{fig:6v_abc}
\end{figure}

\textit{We describe the lattice configurations that represent the partition 
functions of the rational $\mathfrak{su} (M+1)$ vertex model that corresponds 
to the $\mathfrak{su}(M+1)$ XXX spin-chain with spins in the fundamental 
representation.} 

\subsection{Notation}

\begin{align}
\begin{split}
&
\bx_{\bk}^M=\{\bx_{k_1}^{(1)},\ldots,\bx_{k_M}^{(M)}\}, \quad
\bx_{k_{\, p}}^{(p)}=\{x_{1}^{(p)},\ldots,x_{k_{\, p}}^{(p)}\},
\\
&
\by_{\bL}^M=\{\by_{L_1}^{(1)},\ldots,\by_{L_M}^{(M)}\}, \quad
\by_{L_{\, p}}^{(p)}=\{y_{1}^{(p)},\ldots,y_{L_{\, p}}^{(p)}\},\quad 
p=1,\ldots,M,
\end{split}
\end{align}
where $x_a^{(p)}$ and $y_a^{(p)}$ are
the rapidities and inhomogeneities, respectively. 
The translation of the lattice parameters to the gauge theory parameters 
in Table \ref{suM_matt} is 
\begin{align}
x_a^{(p)} = u_a^{(p)} - \ll \frac{M-p+1}{2} \rr \,\gamma, \qquad
y_i^{(p)} = m_i^{(p)} - \ll \frac{M-p}{2}   \rr \,\gamma, \qquad \gamma=1
\label{mass_trans_lattice}
\end{align}

\subsection{The vertex model}

A square lattice representation of the rational $\mathfrak{su} (M+1)$ vertex model 
consists of horizontal lines that carry $x$-variables that flow from right to left, 
and vertical lines that carry $y$-variables that flow from bottom to top. 
The horizontal lines and the vertical lines intersect in vertices, and each vertex 
is connected to 4 line-segments that we call \textit{bonds}.
An internal bond is connected to two vertices and a boundary bond is connected to 
a single vertex.
Each bond carries an arrow that indicates direction of the variable flow along it
\footnote{\, 
This is necessary to make the vertex type and weight, see below, of the different 
vertices unambiguous.
}. 
Further, each bond carries a colour $i \in \{1, \ldots, M+1\}$, and colour is conserved
\footnote{\,
This is the case in rational and trigonometric vertex $\mathfrak{su} (M+1)$ models,
with three types of vertices, 
but not in elliptic vertex models which has a fourth type of vertices that conserve 
colour only modulo $M+1$.
},
that is, if 
the colours on the bonds with incoming variable flows are $i$ and $j$, 
the colours on the bonds with outgoing variable flows are $k$ and $l$, then 
\begin{equation}
\label{colour_conservation}
i + j = k + l    
\end{equation}
Given colour conservation \eqref{colour_conservation}, there are three types of 
vertices, type-$a$, type-$b$ 
and type-$c$, as in Figure \ref{fig:6v_abc}, with vertex weights that depend (at 
most) on difference of variables 
\begin{align}
a(x,y)=x-y+1,\quad b(x,y)=x-y,\quad c(x,y)=1
\label{6v_abc}
\end{align}

\begin{remark}
\label{rem:trig_vertex}
The vertex weights in the trigonometric $\mathfrak{su} (M+1)$ vertex model, which 
corresponds to the $\mathfrak{su}(M+1)$ XXZ spin-chain with spins in the fundamental 
representation, are 
\begin{align}
a(x,y)=[x-y+1],\quad b(x,y)=[x-y],\quad c(x,y)=[1],
\label{6v_abc_tri}
\end{align}
where $[x]=2\sinh (x/2)$.
\end{remark}

We associate lattice configurations to the $A_M$ linear quiver in Figure \ref{fig:am_quiver}, 
where $k_{\, p} \le L_{\, p} + k_{p+1}$, $k_M \le L_M$, $p=1,\ldots,M-1$, 
as follows. For the first node with $U(k_1)$ gauge group, we associate one of the 
lattice configurations in Figure \ref{fig:quiver_lattice_1}.

\begin{figure}[H]
 \centering
  \includegraphics[width=159mm]{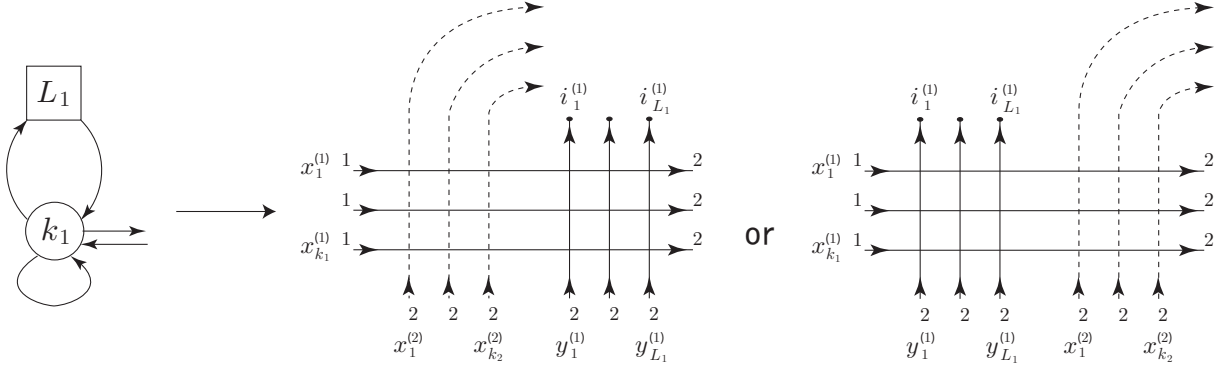}
\caption{Two possible lattice configurations that can be associated 
with the first quiver node.}
\label{fig:quiver_lattice_1}
\end{figure}

All bonds on the left boundary are assigned the (fixed and same) 
colour $1$,   
all bonds on the right and lower boundaries are assigned the (fixed 
and same) 
colour 2,
and the $L_1$ bonds on the top boundary are assigned 
(fixed but varying) colours $i_{\ell}^{(1)} \in \{1,2\}$, $\ell=1,\ldots,L_1$. 
Each dashed line that carries a rapidity variable $x_a^{(2)}$ 
is connected with another dashed line that carries a rapidity variable 
$x_a^{(2)}$ associated with the second quiver node. 
For the $p$-th quiver node with a $U(k_{\, p})$ gauge group, $p=2,\ldots,M-1$, 
we associate one of the lattice configurations in Figure \ref{fig:quiver_lattice_2}, 
where, each dashed line that carries a rapidity $x_a^{(p)}$ 
is connected with another dashed line that carries a rapidity variable $x_a^{(p)}$.

\begin{figure}[H]
 \centering
  \includegraphics[width=160mm]{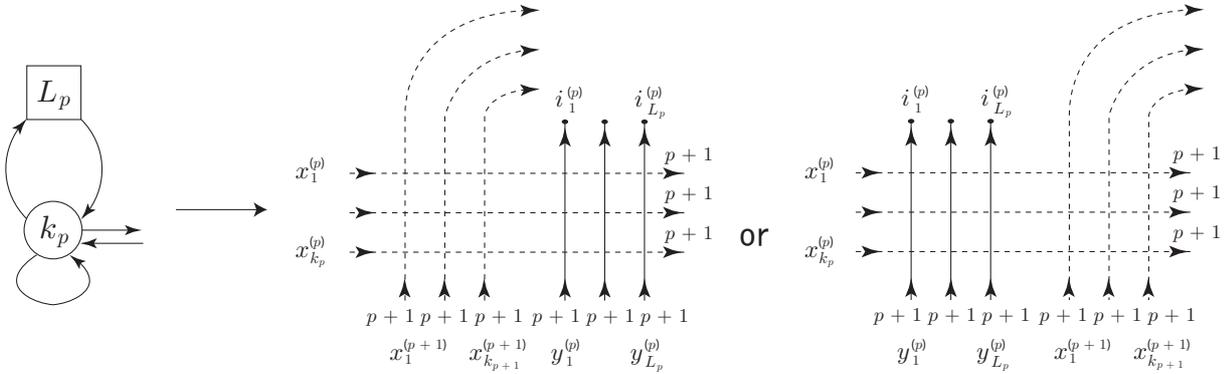}
\caption{Two possible lattice configurations that can be associated with 
the $p$-th quiver node, where $p=2,\ldots,M-1$.}
\label{fig:quiver_lattice_2}
\end{figure}

All bonds on the right boundary and on the lower boundary are 
assigned the (fixed and same) colour $p+1$, the $L_{\, p}$ bonds on the 
top boundary are assigned (fixed but varying) colours 
$i_{\ell}^{(L_{\, p})} \in \{1,\ldots,p+1\}$, $\ell=1,\ldots,L_{\, p}$, 
and we label the above left (resp. right) associated 
lattice configuration by $s_p=1$ (resp. $s_p=2$).
For the $M$-th quiver node with $U(k_M)$ gauge group, 
we associate the unique lattice configuration in Figure \ref{fig:quiver_lattice_3}.

\begin{figure}[H]
 \centering
  \includegraphics[width=95mm]{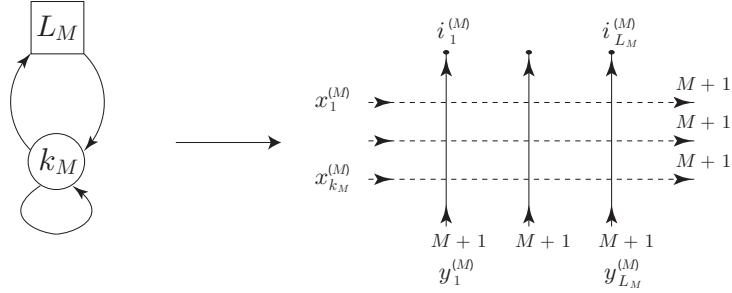}
\caption{A lattice configuration associated with the $M$-th node.}
\label{fig:quiver_lattice_3}
\end{figure}

The bonds on the right and lower boundaries are assigned the (fixed and same) colour 
$M+1$, and the $L_M$ bonds on the top boundary are assigned 
(fixed but varying) colours 
$i_{\ell}^{(L_M)} \in \{1,\ldots,M+1\}$, $\ell=1,\ldots,L_M$. 
From colour conservation, 
\begin{align}
\begin{split}
&
\sum_{p=1}^{M} \#_1 \ll
\boldsymbol{i}_{L_{\, p}}^{(p)}
\rr 
=
k_1, \quad
\sum_{p=q-1}^{M} \#_q
\ll 
\boldsymbol{i}_{L_{\, p}}^{(p)}
\rr 
=
k_q-k_{q-1},\quad
q=2, \ldots, M, 
\\ 
&
\#_{M+1}
\ll
\boldsymbol{i}_{L_M}^{(M)}
\rr =
L_M-k_M,
\end{split}
\end{align}
where $\#_q(\boldsymbol{i}_{L_{\, p}}^{(p)})$ is the number of colours $q$ in the set
$\boldsymbol{i}_{L_{\, p}}^{(p)}=\{i_1^{(p)},\ldots, i_{L_{\, p}}^{(p)}\}$. 
Examples of $\mathfrak{su} (2)$ and $\mathfrak{su} (3)$ lattice configurations 
are in Figures \ref{fig:a1_lattice} and \ref{fig:a3_lattice}, respectively.

\begin{defi}
The partition function for the rational/trigonometric 
$\mathfrak{su} (M+1)$ lattice 
configuration associated with the $A_M$ linear quiver 
in Figure \ref{fig:am_quiver} is defined by
\begin{align}
\widehat{\psi}_{\mathrm{L},\boldsymbol{i}_{\bL}^M}^{(\bS|\bL)}
\ll \bx_{\bk}^M;\by_{\bL}^M \rr 
=
\sum_{{\boldsymbol\sigma}|_{\boldsymbol{i}_{\bL}^M}} \prod_{v \in \bv|_{\boldsymbol\sigma}} 
w_v
\ll 
\mathsf{x}_1^{(v)}, \mathsf{x}_2^{(v)} 
\rr,
\label{am_lattice_pf}
\end{align}
where $\boldsymbol{i}_{\bL}^M=
\{\boldsymbol{i}_{L_1}^{(1)},\ldots,\boldsymbol{i}_{L_M}^{(M)}\}$ 
and $\bS=\{s_1,\ldots,s_{M-1}\}$, $s_p \in \{1,2\}$. 
${\boldsymbol\sigma}|_{\boldsymbol{i}_{\bL}^M}$ is a lattice 
configuration with fixed boundary colours $\boldsymbol{i}_{\bL}^M$, 
and $\bv|_{\boldsymbol\sigma}$ is the set of all vertices on 
the lattice with the lattice configuration ${\boldsymbol\sigma}$. 
$w_v(\mathsf{x}_1^{(v)}, \mathsf{x}_2^{(v)})$ is the vertex weight 
on the vertex $v$ defined in Figure \ref{fig:6v_abc}, where 
$\mathsf{x}_1^{(v)}$ and $\mathsf{x}_2^{(v)}$ are the corresponding 
variables. 
\end{defi}


\end{document}